\title{Hedging in games: Faster convergence of\\ external and swap regrets}
\author{%
		 Xi Chen\thanks{Supported by NSF IIS-1838154 and NSF CCF-1703925.} \\
		 Columbia University\\
		 \texttt{xichen@cs.columbia.edu} \\
		 \And
	Binghui Peng\\
	Department of Computer Science\\
	Columbia University\\
	\texttt{bp2601@columbia.edu} \\
}
\newtheorem{theorem}{Theorem}[section]
\newtheorem{lemma}[theorem]{Lemma}
\newtheorem{definition}[theorem]{Definition}
\newtheorem{corollary}[theorem]{Corollary}
\newtheorem{remark}[theorem]{Remark}
\newtheorem{claim}[theorem]{Claim}
\newcommand{\wt}{\widetilde}
\newcommand{\eps}{\epsilon}
\newcommand{\R}{\mathbb{R}}
\renewcommand{\varepsilon}{\epsilon}
\renewcommand{\tilde}{\wt}
\renewcommand{\eps}{\epsilon}
\newcommand{\mT}{\mathcal{T}}
\DeclareMathOperator*{\E}{{\mathbb{E}}}
\DeclareMathOperator{\OPT}{OPT}
\newcommand{\bx}{\mathbf{x}}
\newcommand{\bs}{\mathbf{s}}
\DeclareMathOperator{\regret}{regret}
\DeclareMathOperator{\sregret}{swap-regret}
\def\BMHedge{\texttt{BM-Optimistic-Hedge}}
\definecolor{b2}{RGB}{51,153,255}
\definecolor{mygreen}{RGB}{80,180,0}
\definecolor{yanglav}{rgb}{0.67, 0.38, 0.8}
\renewcommand{\Xi}[1]{{\color{red}[Xi: #1]}}
\newcommand*{\RN}[1]{\expandafter\@slowromancap\romannumeral #1@}
\def\DKL{D_{\text{KL}}}
\begin{document}


\maketitle
\begin{abstract}

We consider the setting where players run the Hedge algorithm or
its optimistic variant to play an $n$-action game repeatedly for $T$ rounds.\vspace{0.03cm}
\begin{flushleft}\begin{itemize}
		\item 
		For two-player games, we show that the regret of
		optimistic Hedge decays at rate $\smash{ {O}( 1/T ^{5/6} )}$, improving the previous bound 
		of $\smash{O(1/T^{3/4})}$ by Syrgkanis, Agarwal, Luo and  Schapire \cite{syrgkanis2015fast}.\vspace{0.07cm}


		\item 
		In contrast, we show that the convergence rate of vanilla Hedge is no better than 
		$\smash{ O(1/ \sqrt{T})}$, addressing an open question posed in Syrgkanis, Agarwal, Luo and  Schapire~\cite{syrgkanis2015fast}.\vspace{0.03cm}
	\end{itemize}\end{flushleft}
	For general $m$-player games, we show that the swap regret of each player decays at 
	$\smash{{O}(m^{1/2} (n\log n/T)^{3/4})}$ when they combine optimistic Hedge
	with the classical external-to-internal reduction of Blum and Mansour \cite{blum2007external}.
	Via standard connections, our new (swap) regret bounds imply faster convergence to coarse correlated equilibria
	in two-player games and to correlated equilibria in multiplayer games.



\end{abstract}


\section{Introduction}
\label{sec:intro}

Online algorithms for regret minimization 
  play an important role 
  in many applications in machine learning 
  where real-time sequential decision making is crucial \cite{hazan2016introduction, cesa2006prediction,shalev2011online}.
A number of~algorithms have been developed, including 
  Hedge\hspace{0.05cm}/\hspace{0.05cm}Multiplicative Weights \cite{arora2012multiplicative}, Mirror Decent \cite{hazan2016introduction}, 
  Follow the Regularized\hspace{0.05cm}/\hspace{0.05cm}Perturbed Leader \cite{kalai2005efficient}, and their power and limits against an adversarial
  environment have been well understood:
The average (external) regret decays at a rate of $\smash{O(1/\sqrt{T})}$ after $T$ rounds, 
  which is known to be tight for any online algorithm.

What happens if players in a repeated game run one of these algorithms?
Given that 
  they~are~now running against similar algorithms over a fixed game,
  could the regret of each player  decay~signifi\-cantly faster than $\smash{ 1/\sqrt{T}  }$? 
This was answered positively in a sequence of works \cite{daskalakis2011near,rakhlin2013optimization, syrgkanis2015fast}.
Among these results, the one that is most relevant to ours is that of 
  Syrgkanis, Agarwal, Luo and Schapire \cite{syrgkanis2015fast}. They showed that 
  if every player in a multiplayer game
  runs an algorithm that satisfies the  RVU (Regret bounded by Variation 
  in Utilities) property, then the regret of each player decays at $\smash{O(1/T^{3/4})}$. 
\emph{Can this bound be further improved?}

Besides regret minimization,
  understanding no-regret dynamics in games is motivated by 
    connections with various equilibrium concepts
     \cite{freund1996game,foster1997calibrated,foster1999regret,hart2000simple, blum2007external, greenwald2008more,nisan2007algorithmic}.
For example, if every player runs an algorithm with 
  vanishing regret, then the empirical distribution must converge to a \emph{coarse} 
  correlated equilibrium~\cite{cesa2006prediction}.
Nevertheless, to converge to a more preferred correlated equilibrium \cite{Aumann74},  
  a stronger notion of regrets called \emph{swap regrets} (see Section \ref{sec:preliminary}) is required \cite{foster1997calibrated,hart2000simple,blum2007external}.
The minimization of swap regrets under the adversarial setting
  was studied by Blum and Mansour \cite{blum2007external}.
They gave a generic reduction from regret minimization algorithms which led to
  a tight $\smash{O(\sqrt{n\log n/T})}$-bound for the average swap regret.
A natural question is \emph{whether a speedup similar to that of} \cite{syrgkanis2015fast}
  \emph{is possible for swap regrets in the repeated game setting.}

\textbf{Our contributions: Faster convergence of swap regrets.}
We give the first algorithm 
  that achieves an average swap regret that is significantly 
  lower than $\smash{O(1/\sqrt{T})}$ under the repeated game setting.
This algorithm, denoted by \BMHedge, combines the external-to-internal reduction 
  of \cite{blum2007external} with the optimistic Hedge algorithm \cite{rakhlin2013optimization, syrgkanis2015fast} 
   as its regret minimization component. 
(Optimistic Hedge can be viewed as an instantiation of the optimistic Follow the Regularized
  Leader algorithm; see Section \ref{sec:preliminary}.) 
We show that if every player in a repeated game of $m$ players and $n$ actions 
  runs \BMHedge, then the average swap regret is at most
  $\smash{O(m^{1/2} (n\log n/T)^{3/4})}$; see Theorem 
  \ref{thm:correlated-equilibrium} in Section \ref{sec:correlated}.
Via the relationship between correlated equilibria and swap regrets,
  our result implies faster convergence to 
  a correlated equilibrium. 
When specialized to two-player games, the empirical distribution of players running $\BMHedge$
  converges to an $\eps$--correlated equilibrium after
  $\smash{O(n\log n/\eps^{4/3})}$ rounds,
  improving the $O(n\log n/\eps^2)$ bound of \cite{blum2007external}.

Our main technical lemma behind Theorem \ref{thm:correlated-equilibrium} shows that strategies produced 
  by the algorithm of~\cite{blum2007external} with optimistic Hedge moves very slowly in $\ell_1$-norm under the 
  adversarial setting (which in turn allows us to apply a stability argument similar to  
  \cite{syrgkanis2015fast}).
This came as a surprise because a key component of the algorithm of \cite{blum2007external} each round is to 
  compute the stationary distribution of a Markov chain, which is highly sensitive 
  to small changes in the Markov chain.
We overcome this difficulty by exploiting the fact that Hedge only 
  incurs small \emph{multiplicative} changes to the Markov chain, which allows us to bound
  the change in the stationary distribution using the classical Markov chain tree theorem.
{We further demonstrate the power of this technical ingredient by deriving another fast no-swap regret algorithm, based on a folklore algorithm in \cite{cesa2006prediction} and optimistic predictions
  (see Appendix \ref{sec:extension-phi}).
Both of these two algorithms enjoy the benefits of faster convergence when playing with each other, while remain robust against adversaries (see Corollary \ref{cor:robus-adv} in Appendix \ref{appendixsec}).}

\textbf{Our contributions: Hedge in two-player games.}
In addition we consider regret minimization in a two-player game with $n$ actions
  using either vanilla or optimistic Hedge.
We show that optimistic Hedge can achieve an average regret of $\smash{ {O}(1/T^{5/6})}$, improving the bound   $\smash{O(1/T^{3/4})}$ by \cite{syrgkanis2015fast} for two-player games; 
  see Theorem \ref{thm:coarse-correlated} in Section \ref{sec:coarse}.
In contrast, we show that even under this game-theoretic setting,
  vanilla Hedge cannot asymptotically outperform the $\smash{O(1/\sqrt{T})}$ adversarial bound;
  see Theorem \ref{thm:lower} in Section \ref{sec:lower-bound}.
This addresses an open question posed by \cite{syrgkanis2015fast} concerning the convergence
  rate of vanilla Hedge in a repeated game.

The key step in our analysis of optimistic Hedge is to show that, even under the    adversarial setting,
  the trajectory length of strategy movements (in their squared $\ell_1$-norm) 
  can be bounded using that 
  of cost vectors (in $\ell_\infty$-norm); see Lemma \ref{lem:movement}.
(Intuitively, it is unlikely for the strategy of optimistic Hedge to 
  change significantly over time while the loss vector stays stable.)
This allows us to build a strong relationship 
  between the trajectory length of each player's strategy movements,
  and then use the RVU property of optimistic Hedge to bound their individual regrets. 

Our lower bounds for vanilla Hedge use three very simple $2\times 2$ games
  to handle different ranges of the learning rate $\eta$.
For the most intriguing case when $\eta$ is at least $\smash{\Omega(1/\sqrt{n})}$ 
and bounded 
  from above by some constant, we study the zero-sum Matching Pennies game and 
  use it to show that the overall regret of at least 
  one player is $\smash{\Omega(\sqrt{T})}$.
Our analysis is inspired by the result of \cite{bailey2018multiplicative}  which shows that the KL divergence
  of strategies played by Hedge in a two-player zero-sum game is strictly increasing.
For Matching Pennies, we start with a quantitative bound on how fast the KL divergence
  grows in Lemma \ref{lem:kl-lower}.
This implies the existence of a window of length $\smash{\sqrt{T}}$ during which
  the cost of one of the player grows by $\smash{\Omega(1)}$ each round;
  the zero-sum structure of the game allows us to conclude that at least
  one of the players must have regret at least $\smash{\Omega(\sqrt{T})}$ 
  at some point in this window. 





\subsection{Related work}



Initiated by Daskalakis, Deckelbaum and Kim~\cite{daskalakis2011near}, there has been a sequence of works that study no-regret learning algorithms in games~\cite{rakhlin2013optimization, syrgkanis2015fast, foster2016learning, wei2018more}. 
Daskalakis~et.~al. \cite{daskalakis2011near} designed an algorithm by adapting Nesterov's 
accelerated saddle point algorithm to two-player {\em zero-sum games}, and showed that if 
  both players run this algorithm then their average regrets decay at rate ${O}(1/T)$, which is optimal.
Later Rakhlin and Sridharan~\cite{rakhlin2013online, rakhlin2013optimization} developed a simple and intuitive family of algorithms, i.e. {\em optimistic Mirror Descent} and {\em optimistic Follow the Regularized Leader}, that incorporate predictions into the strategy. They proved that if both players adopt the algorithm, then their average regrets also decay at rate $O(1/T)$ in {\em zero sum games}.
Syrgkanis~et.~al.~\cite{syrgkanis2015fast} further strengthened this line of works by showing that in a {\em general} 
  $m$-player game, if every player runs an algorithm that satisfies the RVU property 
  then the average regret decays~at  rate
  $\smash{O(1/T^{3/4})}$. 
Syrgkanis et. al.~\cite{syrgkanis2015fast} also considered the convergence of social welfare and proved an 
  even faster rate of $O(1/T)$ in smooth games~\cite{roughgarden2015intrinsic}.
Foster et. al.~\cite{foster2016learning}  extended \cite{syrgkanis2015fast}  and showed that if one only aims for an approximately optimal social welfare, then the class of algorithms allowed can be much broader.
Recently, Daskalakis and Panageas~\cite{daskalakis2018last} proved the last iteration convergence of optimistic Hedge in zero-sum game, i.e., instead of averaging over the trajectory, they showed that optimistic Hedge converges to a Nash equilibrium in a zero-sum game. 
\nocite{daskalakis2018training, wang2018acceleration, rakhlin2013optimization}

There is  also a growing body of works~\cite{mertikopoulos2018cycles, bailey2018multiplicative, bailey2019fast, cheung2019vortices} on the dynamics of no-regret learning over games in the last few years.
Most of these works studied the dynamics of no-regret learning from a dynamical system point of view and provided qualitative intuition on the evolution of no-regret learning. Among them, \cite{bailey2019fast} is most relevant, in which Bailey and Piliouras proved an $\smash{\Omega(\sqrt{T})}$ lower bound on the convergence rate of online gradient descent~\cite{zinkevich2003online} for the $2 \times 2$ Matching Pennies game. 
However, we remark that their lower bound only works for online gradient descent 
  and they need to fix the learning rate $\eta$ to 1. Our lower bound for vanilla Hedge in
    two-player games holds for arbitrary learning rates.


\section{Preliminary}
\label{sec:preliminary}

\textbf{Notation.} Given two positive integers $n\le m$, 
we use $[n]$ to denote $\{1,\ldots,n\}$ and 
$[n:m]$ to denote $\{n,\ldots,m\}$. 
We use $\DKL(p\|q)$ to denote the KL divergence with natural logarithm.

\textbf{Repeated games and regrets.}
Consider a game $G$ played between $m$ players, where each
  player $i\in [m]$ has a strategy space $S_{i}$ with $|S_i| = n$ 
  and a \emph{loss} function $\mathcal{L}_i: S_1\times\cdots\times S_{m}\rightarrow [0, 1]$ 
  such that $\mathcal{L}_i(\bs)$ is the  {loss} of player $i$
  for each pure strategy profile $\bs = (s_1,\ldots, s_n)\in S_1\times \cdots \times S_m$.
A mixed strategy for player $i$ is a probability distribution $x_i$ over $S_i$, where the $j${th} action is played with probability $x_{i}(j)$.
Given a mixed (or pure) strategy profile $\bx=(x_1,\ldots,x_m)$ (or $\bs=(s_1,\ldots,s_m)$), we write 
  $\bx_{-i}$ (or $\bs_{-i}$) to denote the profile after removing $x_i$ (or $s_i$, respectively).

We consider the scenario where the $m$ players play $G$ repeatedly for
  $T$ rounds. 
At the beginning of each round $t$, $t\in [T]$,
  each player $i$ picks a mixed strategy $x_i^t$ and let 
  $\bx^t=(x_1^t,\ldots,x_m^t)$ be the mixed strategy profile. 
We consider the {\em full information} setting where each player
  observes the {\em expected} loss of {\em all} her actions. 
Formally, player $i$ observes a loss vector $\ell_i^t$ with 
  $\smash{\ell_{i }^{t}(j) = \E_{\bs_{-i} \sim \bx_{-i}^{t}}[ \mathcal{L}_i (j, \bs_{-i})]}$,
  and her expected loss is given by $\smash{\langle x_i^t, \ell_i^t \rangle}$.
At the end of round $T$, 
  the \emph{regret} of player $i$ is
\begin{align}
\label{eq:regret1}
\regret_{T}^{i} = \sum_{t\in [T]} \langle x_{i}^{t}, \ell_{i}^{t} \rangle - \min_{j \in [n]} \sum_{t\in [T]}\ell_{i }^{t}(j), 
\end{align}
i.e., the maximum gain one could have obtained by switching to some fixed action. 
A stronger notion of regret, referred as {\em swap regret}, is defined as
\begin{align}
\label{eq:regret2}
\sregret_{T}^{i} = \sum_{t\in [T]} \langle x_{i}^{t}, \ell_{i}^{t} \rangle - \min_{\phi} \sum_{t\in [T]}\sum_{j \in [n]}  x_{i }^{t}(j)\cdot \ell^t_{i}(\phi(j)), 
\end{align}
where the minimum is over all $n^n$ (swap) functions 
  $\phi: [n] \rightarrow [n]$ that swap action $j$ with $\phi(j)$.
The swap regret equals the maximum gain one could have achieved by using a fixed swap function
  over its past mixed strategies.
\textbf{Hedge.}
Consider the adversarial online model where a player has $n$ actions and picks a 
  distribution $x^t$ over them at the beginning of each round $t$.
During round $t$ the player receives a loss vector $\ell^t$ and pays a loss of $\langle x^t,\ell^t\rangle$.
The vanilla Hedge algorithm \cite{Hedge97} with learning rate $\eta>0$ starts by setting $x^1$ to be the uniform 
  distribution and then keeps applying the following updating rule to obtain $x^{t+1}$ from $x^t$ and 
  the loss vector $\ell^t$ at the end of round $t$: for each action $j\in [n]$,
$$
x^{t+1} (j) = \frac{x^{t} (j)\cdot \exp(-\eta\cdot \ell^{t} (j)) }{\sum_{k\in [n]} x^{t} (k)\cdot \exp(-\eta\cdot \ell^{t}(k) )}.
$$
On the other hand, the optimistic Hedge algorithm can be obtained from 
  the {\em optimistic follow the regularized leader} proposed by~\cite{rakhlin2013optimization, syrgkanis2015fast}, and have the following updating rule: 
\begin{align}
\label{eq:opt-hedge-update}
x^{t+1} (j) = \frac{x^{t} (j)\cdot \exp(-\eta (2\ell^{t} (j) - \ell^{t-1} (j))}{\sum_{k\in [n]} x^{t} (k)\cdot \exp(-\eta (2\ell ^{t}(k) - \ell ^{t-1}(k))},
\end{align}
with $\ell^0=\mathbf{0}$ being the all-zero vector.
We have the following regret bound for optimistic Hedge.
\begin{lemma}[\cite{rakhlin2013optimization, syrgkanis2015fast}]
	\label{lem:optimisticRegret}
Under the adversarial setting, optimistic Hedge satisfies
	\begin{align}
	\label{eq:optimistic}
	\regret_{T} \leq \frac{2\log n}{\eta} + \eta \sum_{t\in [T]}\|\ell^t - \ell^{t-1}\|_{\infty}^{2} - \frac{1}{4\eta}\sum_{t\in [T]}\|x^{t+1} - x^{t}\|_{1}^{2}.
	\end{align}
\end{lemma}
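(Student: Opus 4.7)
My plan is to derive the bound by viewing optimistic Hedge as an instance of optimistic Follow the Regularized Leader (OFTRL) with the negative entropy regularizer $R(x)=\sum_{j\in[n]} x(j)\log x(j)$ and hint $m^t=\ell^{t-1}$, and then applying the standard OFTRL stability analysis. The first step is to verify that the closed-form update \eqref{eq:opt-hedge-update} coincides with
\[
x^{t} \;=\; \arg\min_{x\in\Delta_n}\ \Big\langle x,\ \eta\,L^{t-1}+\eta\,\ell^{t-1}\Big\rangle+R(x),\qquad L^{t-1}:=\sum_{s<t}\ell^{s},
\]
by comparing the ratio $x^{t+1}(j)/x^t(j)$ to the one prescribed in \eqref{eq:opt-hedge-update}; this only uses that the optimum of $\langle x,c\rangle+R(x)$ over $\Delta_n$ is the softmax of $-c$.

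Next I would run the textbook OFTRL argument. Let $x^{t}$ be as above and let $z^{t}$ denote the "phantom" iterate $\arg\min_{x}\langle x,\eta L^{t-1}\rangle+R(x)$ that uses only the observed losses. Using the fact that $R$ is $1$-strongly convex with respect to $\|\cdot\|_{1}$ on $\Delta_n$ (Pinsker's inequality), one obtains the standard FTRL inequality
\[
\sum_{t=1}^{T}\langle x^{t}-x^{*},\ell^{t}\rangle\;\le\;\frac{R(x^{*})-R(x^{1})}{\eta}+\sum_{t=1}^{T}\big\langle x^{t}-x^{t+1},\ \ell^{t}-m^{t}\big\rangle-\frac{1}{2\eta}\sum_{t=1}^{T}\|x^{t+1}-z^{t+1}\|_{1}^{2},
\]
for any $x^{*}\in\Delta_n$. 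Since $R(x^{*})-R(x^{1})\le \log n$, the first term is at most $\log n/\eta$; I would actually carry the slightly weaker $2\log n/\eta$ to absorb constants that arise when replacing $\|x^{t+1}-z^{t+1}\|_{1}^{2}$ with $\|x^{t+1}-x^{t}\|_{1}^{2}$ via the triangle inequality (each step $z^{t+1}\to x^{t+1}$ is itself stable in the same norm, giving a comparable quadratic contribution).

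The key estimate is bounding the cross term $\langle x^{t}-x^{t+1},\ \ell^{t}-\ell^{t-1}\rangle$. Applying Hölder followed by AM--GM,
\[
\big\langle x^{t}-x^{t+1},\ \ell^{t}-\ell^{t-1}\big\rangle\;\le\;\|x^{t}-x^{t+1}\|_{1}\,\|\ell^{t}-\ell^{t-1}\|_{\infty}\;\le\;\eta\,\|\ell^{t}-\ell^{t-1}\|_{\infty}^{2}+\frac{1}{4\eta}\,\|x^{t}-x^{t+1}\|_{1}^{2}.
\]
Combining this with the strong-convexity term $-\tfrac{1}{2\eta}\sum_t\|x^{t+1}-x^{t}\|_{1}^{2}$ that survives the phantom-iterate reduction leaves a net negative coefficient of $-\tfrac{1}{4\eta}$ on the trajectory-length term, and a coefficient of $\eta$ on the variation-in-losses term, exactly matching \eqref{eq:optimistic}. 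Finally, maximizing over $x^{*}\in\Delta_n$ (equivalently, over pure actions $j\in[n]$) on the left-hand side yields $\regret_T$ as defined in \eqref{eq:regret1}.

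The main obstacle is the bookkeeping in the "phantom iterate" step: one needs to make sure that replacing the two FTRL iterates $x^{t+1}$ and $z^{t+1}$ (which differ by the hint) by the two successive optimistic iterates $x^{t+1}$ and $x^{t}$ costs only constant factors and does not disturb the sign of the stability term. This is why the stated bound carries a $2\log n/\eta$ regularization constant rather than $\log n/\eta$, and why the negative quadratic coefficient is $1/(4\eta)$ instead of $1/(2\eta)$. All remaining steps are routine once the OFTRL framework is in place.
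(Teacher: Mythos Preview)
The paper does not supply its own proof of this lemma; it is quoted as a known result with citations to \cite{rakhlin2013optimization,syrgkanis2015fast}. Your sketch is essentially the argument those references give: view optimistic Hedge as OFTRL with the negative-entropy regularizer and prediction $m^{t}=\ell^{t-1}$, apply the OFTRL regret identity with the auxiliary (non-optimistic) FTRL iterate, and trade the cross term against half of the strong-convexity term via H\"older and AM--GM.

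One point where your write-up is imprecise: in the standard OFTRL inequality the cross term is $\langle x^{t}-\hat x^{t+1},\,\ell^{t}-m^{t}\rangle$, where $\hat x^{t+1}$ is the \emph{phantom} FTRL iterate, not $x^{t+1}$ as you wrote; likewise the two negative quadratic terms that come out of strong convexity are $\|x^{t}-\hat x^{t}\|_{1}^{2}$ and $\|x^{t}-\hat x^{t+1}\|_{1}^{2}$. Only after applying AM--GM with $\hat x^{t+1}$ in the cross term can you convert the remaining negative terms, via the triangle inequality $\|x^{t+1}-x^{t}\|_{1}^{2}\le 2\|x^{t+1}-\hat x^{t+1}\|_{1}^{2}+2\|\hat x^{t+1}-x^{t}\|_{1}^{2}$, into the desired $-\tfrac{1}{4\eta}\sum_{t}\|x^{t+1}-x^{t}\|_{1}^{2}$. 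You flagged this ``bookkeeping'' as the main obstacle, which is accurate; just be aware that writing $x^{t+1}$ directly in the cross term before that reduction is not correct as stated. With that fix, your argument matches the one in the cited references.
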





\section{Optimistic Hedge in Two-Player Games}
\label{sec:coarse}

In this section we analyze the performance of the optimistic Hedge algorithm
  when it is used by two players to play a (general, not necessarily zero-sum) $n\times n$ game repeatedly. 


\begin{theorem}
	\label{thm:coarse-correlated}
	Suppose both players in a two-player game run optimistic Hedge  for $T$ rounds with 
	learning rate $ \eta= ( \log n/{T} )^{1/6} $. Then the individual regret of each player is  
	$\smash{O(T^{1/6} \log^{5/6} n)}$.
\end{theorem}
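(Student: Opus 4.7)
The plan is to combine the RVU bound for optimistic Hedge (Lemma~\ref{lem:optimisticRegret}) with a stand-alone ``movement lemma'' (Lemma~\ref{lem:movement} in the paper) that, in the purely adversarial setting, controls the aggregate squared $\ell_1$-trajectory of a single optimistic-Hedge player in terms of the aggregate squared $\ell_\infty$-variation of her loss vectors. The crucial feature is that this movement bound does not rely on any lower bound on the regret; it is exactly this that lets us break the $1/T^{3/4}$ barrier of Syrgkanis~et~al.

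Write $S_i:=\sum_{t}\|x_i^{t+1}-x_i^t\|_1^2$ and $L_i:=\sum_{t}\|\ell_i^t-\ell_i^{t-1}\|_\infty^2$. Because the game has two players and each loss vector $\ell_i^t$ is a linear function of the opponent's mixed strategy $x_{-i}^t$ with coefficients in $[0,1]$, we have $\|\ell_i^t-\ell_i^{t-1}\|_\infty\le\|x_{-i}^t-x_{-i}^{t-1}\|_1$ and therefore $L_i\le S_{-i}$ for $i\in\{1,2\}$ (up to a single boundary term). Applying the movement lemma to each player produces an adversarial bound of the form ``$S_i \le \text{(lower-order in }\eta,\log n\text{)} + (\text{small power of }\eta)\cdot L_i$''; substituting $L_i\le S_{-i}$ turns these two bounds into a coupled pair of inequalities in $S_1$ and $S_2$. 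For $\eta$ sufficiently small so that the coupling is a contraction, this pair can be solved to give the required upper bounds on both $S_i$ and $L_i$.

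Plugging these estimates back into Lemma~\ref{lem:optimisticRegret} and discarding the negative term $-S_i/(4\eta)$,
\begin{equation*}
\regret_T^i \;\le\; \frac{2\log n}{\eta} \,+\, \eta\,L_i \;\le\; \frac{2\log n}{\eta} \,+\, \eta\,S_{-i}.
\end{equation*}
The choice $\eta=(\log n/T)^{1/6}$ then balances the first term $\log n/\eta = T^{1/6}\log^{5/6}n$ against the bound on $\eta\,S_{-i}$ coming from the movement lemma, delivering the claimed $O(T^{1/6}\log^{5/6}n)$ individual regret.

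The main obstacle is the movement lemma itself. A naive per-round Lipschitz estimate yields only $\|x^{t+1}-x^t\|_1\lesssim\eta\,\|2\ell^t-\ell^{t-1}\|_\infty$, and hence $S_i=O(\eta^2 T)$, which is far too weak. The optimistic correction must be exploited: when consecutive losses are close, the predicted vector $2\ell^t-\ell^{t-1}$ should cancel most of the one-step Hedge drift, leaving only a second-order residue that scales with $\|\ell^t-\ell^{t-1}\|_\infty$. A natural route is to use the closed-form expression
\begin{equation*}
x^{t+1}(j)\;\propto\;\exp\!\Big(-\eta\,\ell^t(j)\,-\,\eta\sum_{s=1}^{t}\ell^s(j)\Big)
\end{equation*}
and analyze $x^{t+1}-x^t$ via a second-order Taylor expansion of the exponential, so that the first-order pieces telescope across rounds and only a quadratic contribution controlled by $L_i$ survives in the sum $\sum_t\|x^{t+1}-x^t\|_1^2$. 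Establishing this cancellation cleanly, with the correct powers of $\eta$, is the technical heart of the argument.
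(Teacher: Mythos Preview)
Your high-level strategy has a genuine gap: it discards exactly the ingredient that gets you from $T^{1/4}$ to $T^{1/6}$.

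First, the form you attribute to Lemma~\ref{lem:movement} is not what it says. The movement lemma bounds $\sum_t\|x^t-x^{t-1}\|_1^2$ by $O(\log n)+O(\eta)\sum_t\|\ell^t-\ell^{t-1}\|_\infty$, i.e.\ by the \emph{unsquared} $\ell_\infty$-variation, not by $L_i=\sum_t\|\ell^t-\ell^{t-1}\|_\infty^2$. If a bound of the form $S_i\le O(\log n)+O(\eta^a)L_i$ were available, your coupled system would give $S_i=O(\log n)$ and hence $\regret_T^i\le O(\log n/\eta)$ for any constant $\eta$---polylogarithmic total regret for arbitrary two-player games, which is far stronger than the theorem. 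So the squared form cannot hold, and your Taylor-expansion heuristic in the last paragraph will not produce it: the term $\eta\langle x^t,\ell^t-\ell^{t-1}\rangle$ in the KL computation is first order in $\|\ell^t-\ell^{t-1}\|_\infty$ and does not telescope.

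Second, if you run your coupled-system argument with the actual movement lemma you get only the old rate. With $S_i\le O(\log n)+O(\eta)\sum_t\|\ell_i^t-\ell_i^{t-1}\|_\infty$ and $\|\ell_i^t-\ell_i^{t-1}\|_\infty\le\|x_{-i}^t-x_{-i}^{t-1}\|_1$, Cauchy--Schwarz gives $S_i\le O(\log n)+O(\eta)\sqrt{T}\sqrt{S_{-i}}$, whence $S_i=O(\log n+\eta^2 T)$. Dropping the negative RVU term then yields $\regret_T^i\le O(\log n/\eta)+O(\eta^3 T)$, which balances at $\eta=(\log n/T)^{1/4}$ and recovers only $O(T^{1/4}\log^{3/4}n)$.

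The paper's proof does \emph{not} throw away $-\tfrac{1}{4\eta}\sum_t\|y^{t+1}-y^t\|_1^2$. It converts this into $-\tfrac{1}{4\eta}\sum_t\|\ell_x^t-\ell_x^{t-1}\|_\infty^2$ via the game structure, applies Lemma~\ref{lem:movement} to replace $\eta\sum_t\|x^t-x^{t-1}\|_1^2$ by $O(\eta\log n)+O(\eta^2)\sum_t\|\ell_x^t-\ell_x^{t-1}\|_\infty$, and then plays the positive \emph{linear} term against the retained negative \emph{quadratic} term pointwise: $O(\eta^2)a-\tfrac{1}{4\eta}a^2\le O(\eta^5)$ for each $a=\|\ell_x^t-\ell_x^{t-1}\|_\infty$. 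Summing over $T$ rounds gives $O(\log n/\eta)+O(T\eta^5)$, which balances at $\eta=(\log n/T)^{1/6}$. There is no self-referential system to solve; the crossover between the $x$-movement bound and the $y$-RVU negative term is what drives the improvement.
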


We assume without loss of generality that $T\ge \log n$; otherwise,
  the regret of each player is trivially at most $\smash{T\le T^{1/6} \log^{5/6} n}$.
The following lemma is essential to our proof of Theorem \ref{thm:coarse-correlated}.
Consider the adversarial online setting where a player
  runs optimistic Hedge for $T$ rounds. 
The lemma bounds the trajectory length of the strategy movement using 
  that of cost vectors.
\begin{lemma}
	\label{lem:movement}
Suppose that a player runs optimistic Hedge with learning rate $\eta$
  for $T$ rounds.
Let $\ell^0,\ell^1,\ldots,\ell^T$ be the cost vectors with $\ell^0=\mathbf{0}$ and $x^1,\ldots,x^T$ be the strategies played.
Then
	\begin{align}
	\label{eq:coarse}
	\sum_{t\in [2:T]}\|x^t - x^{t - 1}\|_{1}^{2} \leq O(\log n) + O(\eta+\eta^2)
	  \sum_{t\in [T-1]}\|\ell^t - \ell^{t-1}\|_{\infty}.
	\end{align}
\end{lemma}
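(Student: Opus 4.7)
My plan is to bypass Lemma~\ref{lem:optimisticRegret} entirely and instead derive a per-step ``stability'' identity directly from the multiplicative form of the update~\eqref{eq:opt-hedge-update}. This detour is necessary because the RVU bound contains a regret term that can be as negative as $-\Theta(T)$ under the adversarial setting, so simply rearranging it and using $\|\ell^s\|_\infty \le 1$ would leak an unwanted $\Theta(\eta T)$ term into the right-hand side.

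The key pointwise ingredient will be a Jeffreys-type identity. Writing $\tilde\ell^t := 2\ell^t - \ell^{t-1}$ and letting $Z^t$ denote the normalizer in~\eqref{eq:opt-hedge-update}, we have $\log x^{t+1}(j) - \log x^t(j) = -\eta\tilde\ell^t(j) - \log Z^t$. Because $\sum_j (x^{t+1}(j) - x^t(j)) = 0$, the constant $\log Z^t$ cancels when we form the symmetric KL, yielding
\[
\DKL(x^{t+1}\|x^t) + \DKL(x^t\|x^{t+1}) \;=\; \sum_j\bigl(x^{t+1}(j) - x^t(j)\bigr)\bigl(\log x^{t+1}(j) - \log x^t(j)\bigr) \;=\; \eta\,\langle x^t - x^{t+1},\,\tilde\ell^t\rangle.
\]
Combining this with Pinsker's inequality $\|x^{t+1}-x^t\|_1^2 \le 2\DKL(x^{t+1}\|x^t)$ and the non-negativity of $\DKL(x^t\|x^{t+1})$ will give the per-step bound
\[
\|x^{t+1} - x^t\|_1^2 \;\le\; 2\eta\,\bigl\langle x^t - x^{t+1},\, 2\ell^t - \ell^{t-1}\bigr\rangle.
\]

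To telescope, I would sum this pointwise bound over $t\in[1{:}T{-}1]$ and apply Abel summation separately to $\sum_t\langle x^t - x^{t+1},\ell^t\rangle$ and $\sum_t\langle x^t - x^{t+1},\ell^{t-1}\rangle$. After reindexing, each becomes an $O(1)$ boundary contribution (using $\ell^0 = \mathbf{0}$, $\|\ell^s\|_\infty \le 1$, and $\|x^s\|_1 = 1$) plus an interior sum of the form $\sum_t\langle x^t, \ell^t - \ell^{t-1}\rangle$, which is bounded in absolute value by $\sum_t \|\ell^t - \ell^{t-1}\|_\infty$ via H\"older's inequality. Multiplying through by $2\eta$ gives
\[
\sum_{t=2}^{T}\|x^t - x^{t-1}\|_1^2 \;\le\; O(\eta) + O(\eta)\sum_{t=1}^{T-1}\|\ell^t - \ell^{t-1}\|_\infty,
\]
which implies the claimed bound since $O(\eta) \le O(\log n)$ for $n\ge 2$ and $O(\eta) \le O(\eta+\eta^2)$ trivially.

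The main conceptual obstacle will be recognizing the symmetric KL sum as the right object: individually, neither $\DKL(x^{t+1}\|x^t)$ nor $\DKL(x^t\|x^{t+1})$ has a clean closed form, but their sum simplifies exactly because $\log Z^t$ cancels against $\sum_j(x^{t+1}(j)-x^t(j))=0$. This is what lets us avoid the uncontrolled regret term that appears in Lemma~\ref{lem:optimisticRegret}. Everything else is routine, the only mild subtlety being to carry out Abel summation carefully so that boundary terms contribute $O(1)$ rather than an unwanted $\Theta(T)$.
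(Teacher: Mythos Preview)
Your approach is correct and takes a genuinely different route from the paper. The paper bounds the \emph{one-sided} divergence $\DKL(x^{t-1}\|x^t)$ via Pinsker, which produces a log-partition term $\Phi_t=\log\sum_j x^{t-1}(j)\exp(-\eta(2\ell^{t-1}(j)-\ell^{t-2}(j)))$; it then shows $\sum_t\Phi_t$ telescopes to $\log\bigl(\tfrac1n\sum_j e^{-\eta L^{T-1}(j)-\eta\ell^{T-1}(j)}\bigr)$, and combined with the linear part this becomes essentially $\eta\cdot\regret_{T-1}$, which is finally bounded by invoking Lemma~\ref{lem:optimisticRegret}. So, contrary to your opening remark, the paper \emph{does} use the RVU lemma---just not by rearranging it; that is exactly where the additive $O(\log n)$ comes from (it is $\eta\cdot\tfrac{2\log n}{\eta}$). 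Your symmetric-KL identity is cleaner: the normalizer cancels, no potential needs to be telescoped, and no appeal to the regret bound is required. After Abel summation you obtain
\[
\sum_{t=2}^{T}\|x^t-x^{t-1}\|_1^2 \;\le\; O(\eta)\;+\;O(\eta)\sum_{t=1}^{T-1}\|\ell^t-\ell^{t-1}\|_\infty,
\]
which has no $\eta^2$ term and is strictly sharper than the paper's bound whenever $\eta\le 1$ (the only regime used in Theorem~\ref{thm:coarse-correlated}).

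One caveat: your closing claim ``$O(\eta)\le O(\log n)$ for $n\ge 2$'' is not valid without an upper bound on $\eta$, which the lemma does not assume. What you have actually proved is the displayed inequality above, with $O(\eta)$ in place of $O(\log n)$; this is formally incomparable with the lemma as stated. It makes no difference for the downstream application, but if you want the lemma verbatim you should either add the hypothesis $\eta=O(\log n)$ or simply record your sharper bound instead.
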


We delay the proof of Lemma \ref{lem:movement} to Appendix \ref{sec:movement} and
  use it to prove Theorem \ref{thm:coarse-correlated}.
\begin{proof}[Proof of Theorem~\ref{thm:coarse-correlated} assuming Lemma \ref{lem:movement}]
Let $G=(A,B)$ be the game, where $A,B\in [0,1]^{n\times n}$ denote the cost matrices of the first and 
  second players, respectively. 
We use $x^t$ and $y^t$ to denote strategies played by the two players 
  and use $\ell_x^t$ and $\ell_y^t$ to denote their cost vectors in the $t$th round.
So we have $\smash{\ell_x^t=Ay^t}$ and $\smash{\ell_y^t=B^{T}x^t}$.
Therefore, we have for each $t\ge 2$:
\begin{align}
	&~\|\ell^t_{y} - \ell^{t-1}_{y}\|_{\infty} = \|B^T(x^t - x^{t-1})\|_{\infty} \leq \|x^t - x^{t-1}\|_1
	\label{hehe100}\quad\text{and}\\[0.4ex]
	&~\|\ell^t_{x} - \ell^{t-1}_{x}\|_{\infty} = \|A(y^t - y^{t-1})\|_{\infty} \leq \|y^t - y^{t-1}\|_1.\nonumber
	\end{align}
Without loss of generality it suffices to bound the regret of the second player. 
Set $\eta=(\log n/T)^{1/6}$ with $T\ge \log n$ so that $\eta\le 1$. We have\vspace{0.1cm}
	\begin{align*}
	\regret_{T}^{y} \leq &~ \frac{2\log n}{\eta} + \eta \sum_{t\in [T]}\|\ell^t_{y} - \ell^{t-1}_{y}\|_{\infty}^{2} - \frac{1}{4\eta}\sum_{t\in [T]}\|y^{t+1} - y^{t}\|_{1}^{2} && \text{Lemma \ref{lem:optimisticRegret}}\\
	\leq &~ \frac{2\log n}{\eta} +
	\eta+ \eta \sum_{t\in [2:T]}\|x^t - x^{t-1}\|_{1}^{2} - 	\frac{1}{4\eta}\sum_{t\in [2:T+1]}\|\ell^{t}_{x} - \ell^{t-1}_{x}\|_{\infty}^{2} &&\text{using (\ref{hehe100})} \\
	\leq &~\frac{2\log n}{\eta} +\eta+ \eta \left(O(\log n) + O(\eta) \sum_{t\in [T-1]}\|\ell^t_{x} - \ell^{t-1}_{x}\|_{\infty} \right)\\
	  &\hspace{1cm} - \frac{1}{4\eta}\sum_{t\in [T-1]}\|\ell^{t}_{x} - \ell^{t-1}_{x}\|_{\infty}^{2}+\frac{1}{4\eta}  && \text{Lemma~\ref{lem:movement}}\\
	=&~O\left(\frac{ \log n}{\eta}\right) + \sum_{t\in [T-1]}\left(O(\eta^2) \cdot \|\ell^{t}_{x} - \ell^{t-1}_{x}\|_{\infty} - \frac{1}{4\eta}\cdot  \| \ell^{t}_{x} - \ell^{t-1}_{x}\|_{\infty}^{2} \right)\\
	\leq &~ O\left(\frac{ \log n}{\eta}\right) + T\cdot O(\eta^5) 
	=~ O\left(T^{1/6} \log^{5/6} n \right).\vspace{0.1cm}
	\end{align*}
This finishes the proof of the theorem. 
\end{proof}


\section{Lower Bounds for Hedge in Two-Player Games}\label{sec:lower-bound}

We prove lower bounds for regrets of players
  when they both run the vanilla Hedge algorithm. 
We show that even in games with two actions, vanilla Hedge
  cannot perform asymptotically better than its guaranteed regret bound of $\smash{O(\sqrt{T})}$  
  under the adversarial setting.
   
\begin{theorem}
	\label{thm:lower}
	Suppose two players run the vanilla Hedge algorithm to play a two-action game
	  with initial strategy $(0.4,0.6)$.
Then for any sufficiently large $T$ and any learning rate $\eta>0$, 
  there is a game such that at least one player has regret 
  $\smash{\Omega(\sqrt{T})}$ after $T'$ rounds for some $\smash{T'\in [T:T+\sqrt{T}]}$. 
\end{theorem}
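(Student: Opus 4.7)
The plan is to case-split on the learning rate $\eta$ and exhibit a different $2\times 2$ game for each of three ranges, as foreshadowed in the introduction. In the small-$\eta$ regime ($\eta\le c_1/\sqrt{T}$), I would take a game whose second cost matrix is identically zero, so that vanilla Hedge freezes player~2 at the initial distribution $(0.4,0.6)$ for all $T$ rounds, and whose first cost matrix makes player~1's loss vector a fixed asymmetric vector throughout. Against a fixed loss vector, the Hedge iterates converge to the best action only geometrically in $\eta t$, so the cumulative mass placed on the suboptimal action is $\Omega(\min(T,1/\eta))=\Omega(1/\eta)\ge\Omega(\sqrt{T})$, yielding player~1 regret $\Omega(\sqrt{T})$. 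In the large-$\eta$ regime ($\eta\ge c_2$), I would use a small two-action game in which the large step size combined with the initial imbalance $(0.4,0.6)$ forces the Hedge iterates to overshoot between near-pure strategies from round to round, so that one player's per-round loss is $\Omega(1)$ on a constant fraction of rounds while the cumulative losses of its two actions remain comparable; this gives regret $\Omega(T)\gg\Omega(\sqrt{T})$.

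The interesting regime is $c_1/\sqrt{T}\le\eta\le c_2$, where I would use Matching Pennies, $A=\bigl(\begin{smallmatrix}0&1\\1&0\end{smallmatrix}\bigr)$ and $B=I$, whose unique Nash is the uniform distribution $u=(1/2,1/2)$. The proof rests on Lemma~\ref{lem:kl-lower}, a quantitative sharpening of the monotone KL-divergence result of \cite{bailey2018multiplicative}: I would show that the combined potential $\DKL(u\|x^{t})+\DKL(u\|y^{t})$ grows by at least $\Omega(\eta^{2})$ each round. The proof proceeds by Taylor-expanding the log-partition function in the Hedge update to second order, noting that the linear term cancels due to the zero-sum structure $A+B=J$, and lower-bounding the second-order curvature term uniformly as long as the iterates stay bounded away from the vertices. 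Summing over $T$ rounds gives $\DKL(u\|x^T)+\DKL(u\|y^T)=\Omega(\eta^2 T)\ge\Omega(1)$, so by Pinsker's inequality the iterates are $\Omega(1)$-far from the uniform distribution in $\ell_1$-norm at some time in $[1,T]$.

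To convert this distance into a regret lower bound I would use the zero-sum identity
\begin{align*}
\regret_{T'}^{1}+\regret_{T'}^{2} &= \bigl|S_{x}^{T'}(1)-T'/2\bigr|+\bigl|S_{y}^{T'}(1)-T'/2\bigr|,
\end{align*}
valid for Matching Pennies, where $S_{x}^{T'}(j):=\sum_{t\le T'}x^t(j)$ and similarly for $S_{y}^{T'}$; this follows from the per-round total loss being $1$ together with the explicit form of each player's best-in-hindsight loss in this game. Under Hedge the log-ratio $\log(x^t(1)/x^t(2))$ shifts by at most $\eta$ per round since losses lie in $[0,1]$, so once the strategies are a constant distance from uniform they retain the same sign of $x^t(1)-1/2$ (and of $y^t(1)-1/2$) throughout a window of length $\Omega(\sqrt{T})$. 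In such a sign-consistent window the right-hand side above grows by $\Omega(1)$ per round, so at some $T'\in[T,T+\sqrt{T}]$ the sum of the two regrets is $\Omega(\sqrt{T})$, and hence at least one of them is $\Omega(\sqrt{T})$. The main obstacle is proving Lemma~\ref{lem:kl-lower} with a uniform per-round $\Omega(\eta^2)$ rate: the qualitative monotonicity of \cite{bailey2018multiplicative} is soft, and obtaining the sharp rate requires carefully controlling the second-order Taylor term together with a dynamical argument that the iterates never approach a vertex. A secondary obstacle is ensuring that the sign-consistent window has length at least $\sqrt{T}$ throughout the entire range $c_1/\sqrt{T}\le\eta\le c_2$, which requires matching the log-ratio drift $\eta$ per round against the Pinsker-derived $\ell_1$ distance from uniform.
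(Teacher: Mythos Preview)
Your three-way case split on $\eta$, the choice of games (freeze player~2 for small $\eta$; overshoot for large $\eta$; Matching Pennies in between), and the use of KL divergence growth as the engine for the Matching Pennies case all match the paper's approach. Your regret identity for Matching Pennies is correct and is a clean variant of the paper's inequality $\max\{\regret^x_t,\regret^y_t\}\ge |L^t_x|$.

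There is, however, a genuine gap in the medium-$\eta$ case. You propose to show that the combined KL grows by $\Omega(\eta^2)$ \emph{uniformly} each round, via a ``dynamical argument that the iterates never approach a vertex.'' This cannot work: the whole point of the argument is that the iterates \emph{do} approach vertices (the KL diverges), and near a vertex the per-round increment $e^{-7}\eta^2\bigl(x_t(1-x_t)(2y_t-1)^2+y_t(1-y_t)(2x_t-1)^2\bigr)$ collapses because of the $x_t(1-x_t)$ or $y_t(1-y_t)$ factor. More importantly, even granting your bound, the conclusion $\DKL\ge\Omega(\eta^2T)\ge\Omega(1)$ is too weak. Pinsker then gives $\ell_1$-distance (equivalently log-ratio) only $\Omega(1)$ from uniform, and since the log-ratio drifts by $\Theta(\eta)$ per round, the sign-consistent window has length $O(1/\eta)$. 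For $\eta$ near the top of the range (a constant), this is $O(1)$ rounds, not $\Omega(\sqrt T)$, and the argument collapses precisely where you flagged the ``secondary obstacle.''

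The paper resolves both issues by proving the stronger bound $\DKL(x^\star\|x^T)+\DKL(y^\star\|y^T)\ge c_0\sqrt{T}\,\eta$ (Lemma~\ref{lem:kl-lower}), which is exactly what the window argument needs: log-ratio $\Omega(\sqrt T\,\eta)$ divided by per-round drift $O(\eta)$ gives a window of length $\Omega(\sqrt T)$ for \emph{every} $\eta$ in the range. The proof of this stronger KL bound is not via a uniform per-round estimate but via a two-phase induction: first, while the KL is bounded (say $\le 20$) the iterates are bounded away from both the center and the vertices, so the per-round gain is $\Omega(\eta^2)$ and the KL reaches $20$ in $O(1/\eta^2)$ rounds; thereafter, one tracks the dynamics through a single ``cycle'' (one coordinate crosses $1/2$), showing that each cycle takes at most $O(k)$ rounds when the KL is at level $\Theta(k\eta)$ and contributes at least $\Omega(\eta)$ to the KL (the $\Omega(\eta)$ gain comes from the $O(1/\eta)$ rounds of the cycle during which the crossing coordinate is bounded away from $0$ and $1$). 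Inductively this yields $\DKL\ge\Omega(\tau\eta)$ after $\tau^2$ additional rounds, i.e.\ $\Omega(\sqrt T\,\eta)$ at time $T$. Your proposal is missing this cycle-based amplification, which is the crux of the argument.
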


\begin{remark}
Theorem \ref{thm:lower} shows that even if players have a good estimation
  about the number~of rounds to play (i.e., between $T$ and $\smash{T+\sqrt{T}}$),
  vanilla Hedge with any learning rate $\smash{\eta(T)>0}$ picked using $T$ cannot promise to achieve
  a regret bound that is 
  asymptotically lower than $\smash{O(\sqrt{T})}$
  for every round $\smash{T'\in [T:T+\sqrt{T}]}$.
We would like to point out that 
	  the use of $(0.4,0.6)$ as the initial strategy instead of the uniform distribution is not crucial
  but only to simplify the construction  and analysis.
\end{remark}
Let $T$ be a sufficiently large integer. 
We will use three games $G_i=(A,B_i)$, $i\in \{1,2,3\}$,
  to handle three cases of the learning rate $\eta$, where \vspace{0.1cm}
\begin{align*}
A = \left(\begin{matrix}1 &-1\\ -1 &1\end{matrix}\right),
\quad
B_1 = \left(\begin{matrix}-1 &1\\1  &-1\end{matrix}\right),
\quad
B_2 = \left(\begin{matrix}1 &1\\ 1 &1\end{matrix}\right)
\quad\text{and}\quad 
B_3 = \left(\begin{matrix}1 &-1\\ -1 &1\end{matrix}\right).\vspace{0.1cm}
\end{align*}
We use $G_2$ to handle the case when $\eta\le 64/{(c_0\sqrt{T})}$ (see Appendix \ref{sec:case2}) 
  where $c_0\in (0,1]$ is a constant introduced below in Lemma \ref{lem:kl-lower}.
We use $G_3$ to handle the case when $\eta\ge 3$ (see Appendix \ref{sec:case3}).
The most intriguing case is 
  when the learning rate $\eta$ is between $\smash{64/(c_0\sqrt{T})}$ and $3$.
For this case we use the Matching Pennies game $G_1=(A,B_1)$. 

Let $x^t$ and $y^t$ denote strategies played in round $t$ by the first and second players, respectively.
Let $\smash{x^{\star}=y^{\star}=(0.5,0.5)}$.
The proof for this case  relies on the following lemma, which shows that 
  the KL divergence between $\smash{(x^\star,y^\star)}$ and $\smash{(x^T,y^T)}$ after $T$ rounds is at least $\smash{\Omega( \sqrt{T}\eta))}$.

\begin{lemma}
		\label{lem:kl-lower}
Suppose players run vanilla Hedge  for $T$ rounds
  with $\eta:16/\sqrt{T}\le \eta\le 3$.
Then $$ {\DKL(x^{\star}\|x^T)+\DKL(y^{\star} \| y^T)\ge c_0 \sqrt{T}\eta},\quad
\text{for some constant $c_0\in (0, 1]$.}$$
	\end{lemma}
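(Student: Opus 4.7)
The plan is to switch to log-odds coordinates $u_t=\log(x^t(1)/x^t(2))$ and $v_t=\log(y^t(1)/y^t(2))$, in which Hedge on Matching Pennies becomes the two-dimensional system
\[
u_{t+1} = u_t - 2\eta\tanh(v_t/2), \qquad v_{t+1} = v_t + 2\eta\tanh(u_t/2),
\]
and the target quantity admits the clean identity $E_t := \DKL(x^\star\|x^t) + \DKL(y^\star\|y^t) = \log\cosh(u_t/2) + \log\cosh(v_t/2)$. Using the hyperbolic identity $\log\cosh(x+h) - \log\cosh(x) = \log(\cosh h + \tanh(x)\sinh h)$ and setting $a_t=\tanh(u_t/2)$, $b_t=\tanh(v_t/2)$, I obtain the closed form
\[
\Delta E_t = \log\bigl((\cosh\eta b_t - a_t\sinh\eta b_t)(\cosh\eta a_t + b_t\sinh\eta a_t)\bigr),
\]
and check that the $O(\eta)$ terms cancel exactly (the discrete-time Bailey--Piliouras monotonicity in this setting); going one further order gives $\Delta E_t = \tfrac{\eta^2}{2}\bigl(a_t^2(1-b_t^2) + b_t^2(1-a_t^2)\bigr) + O(\eta^3)$, which is $\Omega(\eta^2(a_t^2+b_t^2))$ whenever $\max(|a_t|,|b_t|)$ stays bounded away from $1$.

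The cancellation is no coincidence: $E$ is, up to a constant multiple, the Hamiltonian of the continuous-time flow $\dot u=-2\eta\tanh(v/2)$, $\dot v=2\eta\tanh(u/2)$, whose orbits are the closed level sets of $E$, so the discrete-time growth of $E_t$ comes entirely from Euler discretization drift. I analyze the discrete trajectory in two regimes. In the warm-up regime where $|u_t|,|v_t|\lesssim 1$, the dynamics is well approximated by the linear rotation $(u,v)\mapsto(u-\eta v,\,v+\eta u)$, under which $u_t^2+v_t^2$ multiplies by $1+\eta^2$ per step; since $\eta\ge 16/\sqrt T$, within $O(\eta^{-2})\le T/256$ steps the orbit reaches $u_t^2+v_t^2=\Theta(1)$ (this phase is essentially trivial for moderate $\eta$ since a single step already pushes the iterate into the oscillatory regime). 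In the oscillatory regime, the orbit encircles the origin along slowly growing level sets; for $E_t$ past a constant these sets are approximate diamonds $|u|+|v|\approx 2E_t$, and the motion advances along each edge at speed $\approx 2\eta$, so one full orbit takes $\Theta(E_t/\eta)$ steps.

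The per-orbit energy gain is then estimated via the pointwise lower bound obtained by integrating the (always non-negative) Taylor remainder with $f''(u)=\tfrac14\mathrm{sech}^2(u/2)$,
\[
\Delta E_t \gtrsim \eta^2\bigl(\mathrm{sech}^2((|u_t|+2\eta)/2)\cdot b_t^2 + \mathrm{sech}^2((|v_t|+2\eta)/2)\cdot a_t^2\bigr),
\]
summed across one orbit. The dominant contribution comes from the $\Theta(1/\eta)$ consecutive steps near each of the four ``corners'' where $u_t$ or $v_t$ crosses zero: during such a passage $\mathrm{sech}^2$ stays $\Theta(1)$ while the other of $a_t,b_t$ is saturated near $\pm 1$, so each corner contributes $\Theta(1/\eta)\cdot\Theta(\eta^2)=\Theta(\eta)$ to $E$. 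Hence each orbit gains $\Theta(\eta)$ of energy, and the $\Theta(T\eta/E_T)$ orbits fitting into $T$ steps yield the self-consistent estimate $E_T\gtrsim (T\eta/E_T)\cdot\eta$, i.e.\ $E_T\gtrsim\sqrt T\,\eta$, which is the lemma with $c_0$ an absolute constant.

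The main obstacle is that both the orbit period and the per-orbit gain scale with the current energy $E_t$, so no single uniform per-step lower bound suffices; I plan to handle this by partitioning time into adaptive blocks of length $\Theta(E_t/\eta)$ on which $E_t$ varies by only a constant multiplicative factor, proving the block-wise increment $E_{t+k}\ge E_t+c\eta$ by summing the pointwise bound above (with the number of ``corner passages'' counted rigorously via the number of sign changes of $u_t,v_t$ within the block), and then solving the resulting recurrence to obtain $E_T^2\gtrsim T\eta^2$. For $\eta$ near the upper end of the allowed range the Taylor remainder bound becomes slack, but in that regime even a single step induces a large rearrangement of $(u_t,v_t)$ and one verifies directly from the exact closed form above that $\Delta E_t$ is already comfortably large.
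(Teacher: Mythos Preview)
Your proposal is correct and takes essentially the same route as the paper: the paper's Lemma~\ref{lem:kl} is your per-step bound $\Delta E_t \gtrsim \eta^2\bigl(a_t^2(1-b_t^2)+b_t^2(1-a_t^2)\bigr)$ (derived there via strong convexity of $-\log$ rather than your hyperbolic identity, with the same cancellation of the linear-in-$\eta$ terms), and the paper's Claim~\ref{claim:kl} is exactly your adaptive-block recurrence, phrased as an induction $E_{T_0+\tau^2}\ge c\tau\eta$ with the $\Theta(\eta)$ per-block gain harvested from the same $\Theta(1/\eta)$ ``corner'' steps where one coordinate crosses $1/2$. Your log-odds/Hamiltonian coordinates make the linear-term cancellation and the diamond-orbit geometry more transparent, but the substance is identical.
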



We are now ready to prove Theorem \ref{thm:lower} for the main case when $ 64/(c_0\sqrt{T})\le \eta\le 3$.
	\begin{proof}[Proof of Theorem~\ref{thm:lower} for the main case]
For convenience we let $x_t=x^t(1) $ (or $y_t=y^t(1)$) denote the probability of playing the first
  action in $x^t$ (or $y^t$, respectively).		
		We first describe the high level idea behind the proof. Since we know the KL divergence is at least $\smash{c_0 \sqrt {T}\eta }$ at time $T$ by Lemma~\ref{lem:kl-lower}, at least one of $\smash{x_T}$ and 
		$\smash{y_T}$ is extremely close to either $0$ or $1$.
	Assume without loss of generality that this is the case for $x_T$. 
As a result, the probability of the first player playing the first action
  will not change much for the next $\smash{\sqrt{T}}$ rounds. 
		Consequently, during the next $\smash{\sqrt{T}}$ rounds, 
		one of the players must keep losing and the other player will keep winning.
This can be used to show that one of the two players must have
  regret at least $\smash{\Omega(\sqrt{T})}$ at some point $T'$ between 
  $T$ and $\smash{T+\sqrt{T}}$. 
		 
		 To make this more formal, 
		   let $\ell_x^t$ (or $\ell_y^t$) denote the cost vector
		   of the first (or the second) player at round $t$ and 
		   define $\smash{L_x^t}$ and $\smash{L_y^t}$  
		   to be the total loss up to round $t$ of the two players:
		   $$ {L}_{x}^{t} = \sum_{\tau\in [t]}\langle x^\tau, \ell_x^\tau \rangle
		   \quad\text{and}\quad L^{t}_y = \sum_{\tau\in [t]}\langle y^\tau, \ell_y^\tau \rangle.$$ 
		 Since $G_1=(A, B_1)$ is zero-sum, we have 
		   $\langle x^\tau,\ell_x^\tau\rangle +\langle y^\tau,\ell_y^\tau\rangle=0$ and thus,
		 $ {L}^t_{x} + {L}^t_{y} = 0$. 
		 {Moreover, 
		 noting that the sum of two rows of $A$ is zero,}
		 the first player can always guarantee an overall loss of at most $0$ when 
		   playing the best fixed action in hindsight.
		 Therefore, $\smash{\regret_{t}^{x} \geq {L}^t_{x}}$ and similarly 
		 $\smash{\regret_{t}^{y} \geq {L}^t_{y}}$. Combining this with $L_x^t+L_y^t=0$, we have
		 		$$ {\max\Big\{ \regret_{t}^{x}, \hspace{0.05cm}\regret_{t}^{y}\Big\} \geq | {L}^t_{x}| = | {L}^t_{y}|.}$$
		To finish the proof, it suffices to show that  
		  \begin{equation}\label{eq:lower7} {\big|L_x^{T'}\big|=\big|L_y^{T'}\big|\ge \Omega(\sqrt{T})},\quad\text{for some 
		  $T'\in [T:T+\sqrt{T}]$}.\end{equation}

Let  $\smash{L=c_0\sqrt{T}/8\le \sqrt{T}}$.		
		We have from Lemma \ref{lem:kl-lower} that the KL divergence
		  is at least $c_0\sqrt{T}\eta$ (using 
		  $\smash{\eta\ge 64/(c_0\sqrt{T})>16/\sqrt{T}}$).
		  We assume without loss of generality that 
		  $\smash{\DKL(x^\star \|x^T)\ge c_0\sqrt{T}\eta/2}$.
		We further assume without loss of generality that the second term is larger: 
$$
\frac{1}{2}\cdot \log \frac{1}{2(1-x_T)}\ge \frac{c_0\sqrt{T}\eta}{4}.
$$ 
It follows that 
		$x_T$ is very close to $1$:
		 $\smash{x_T\ge 1- \exp(-c_0\sqrt{T}\eta/2)}$, and we use
  this to show that $\smash{x_{T+\tau}}$ remains close to $1$ for all $\smash{\tau\in [L]}$. To see this is the case, we note that
  \begin{align*}
		\frac{x_{T + \tau}}{1 - x_{T+\tau}} \geq \exp (-2\eta\tau)\cdot
		\frac{x_{T}}{1 - x_T} \geq \frac{1}{2}\cdot \exp\left(-2\eta L+\frac{c_0\sqrt{T}\eta}{2}\right)=\frac{1}{2}\cdot \exp\left(\frac{c_0\sqrt{T}\eta}{4}\right) \geq 3,
		\end{align*} 
where we used $\eta \ge 64/(c_0\sqrt{T})$ in the last inequality.
This implies $x_{T+\tau}\ge 3/4$ for all $\tau\in [L]$.

Now we turn our attention to the second player.
Given that $x_{T+\tau}\ge 3/4$ for all $\tau\in [L]$,
  $y_{T+\tau}$ keeps growing for all $\tau\in [L]$.
As a result there is an interval $I\subseteq [L]$ such that 
  (i) every $y_{T+\tau}$, $\tau\in I$, lies between $1/4$ and $3/4$;
  (ii) every $y_{T+\tau}$ before $I$ is smaller than $1/4$; and 
  (iii) every $y_{T+\tau}$ after $I$ is larger than $3/4$.
Using a similar argument, we show that $I$ cannot be too long.
Letting $\ell$ and $r$ be the left and right endpoints of $I$, we have 
$$
3\ge \frac{y_r}{1-y_r}\ge \exp\left(\frac{\eta(r-\ell)}{2}\right)\cdot \frac{y_\ell}{1-y_\ell}
\ge \exp\left(\frac{\eta(r-\ell)}{2}\right)\cdot \frac{1}{3}.
$$
As a result, we have $(r-\ell)\le 6/\eta\le (3/32)\cdot c_0\sqrt{T}$ and thus,
  either (i) or (ii) is of length at least $\Omega(L)$. 
We focus on the case when (ii) is long; the other case can be handled similarly.
 	
Summarizing what we have so far,
  there is an interval $\smash{J=[\alpha:\beta]\subseteq [L]}$ of length $\smash{\Omega(L)}$
  such that for every $\smash{\tau\in J}$, both $\smash{x_{T+\tau}}$ and $\smash{y_{T+\tau}}$ are at least $3/4$.
This implies that the total loss of the first player grows by $\Omega(1)$
  each round and thus,
$\smash{L_{x}^{T+\beta}-L_{x}^{T+\alpha}}\ge \Omega(L).$  
Therefore, either $|L_{x}^{T+\alpha}|\ge \Omega(L)$ or $\smash{|L_{x}^{T+\beta}|\ge \Omega(L)}$.
This finishes the proof of (\ref{eq:lower7}) using $\smash{L=\Omega(\sqrt{T})}$ and the proof of the theorem.
	\end{proof}

	
	

\section{Faster Convergence of Swap Regrets}
\label{sec:correlated}

Under the adversarial online model, Blum and Mansour \cite{blum2007external} gave 
  a black-box reduction showing that any algorithm that achieve good regrets 
  can be converted into an algorithm that achieves good swap regrets. 
In this section we show that if every player in a repeated game 
  runs their algorithm with optimistic Hedge as its core, 
  then the swap regret of each player  can be bounded from above by $\smash{O((n\log n)^{3/4}  (mT)^{1/4})}$, where $m$ is the number of players and $n$ is the 
  number of actions. 


\def\ALG{\texttt{ALG}}

We start with an overview on the reduction framework of~\cite{blum2007external}, which we will
  refer to as the BM algorithm. 
Let $S=[n]$ be the set of available actions.
Given an algorithm $\ALG$ that achieves good regrets,
  the 
  BM algorithm instantiates $n$ copies  $\ALG_1, \ldots, \ALG_{n}$ of $\ALG$ over $S$. 
At the beginning of each round $t=1,\ldots,T$, the BM algorithm receives 
  a distribution $q^t_i$ over $S$ from $\ALG_i$ for each $i\in [n]$,
  and plays $x^{t}$, which is the unique distribution over $S$
  that satisfies $x^t= x^tQ^{t}$, where $Q^{t}$
  is the $n\times n$ matrix with row vectors $q^t_1,\ldots,q^t_n$.
After receiving the loss vector $\ell^{t}$, the BM algorithm experiences a loss
  of $\langle x^t,\ell^t\rangle$ and distributes $x^{t}(i)\cdot \ell^{t}$ to $\ALG_i$
  as its loss vector in round $t$. 
  
 
We are now ready to state our main theorem of this section:

\begin{theorem}
\label{thm:correlated-equilibrium}
Suppose that every player in a repeated game runs
  the BM algorithm with optimistic Hedge as $\ALG$ and sets the learning rate of the latter
  to be $\eta = ({n\log n}/{(m^2T)} )^{1/4}$.
Then the swap regret of each player is
$\smash{O(( {n\log n} )^{3/4}\cdot (m^2T)^{1/4} )}$.
\end{theorem}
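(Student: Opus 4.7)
The plan is to use the BM reduction to split each player's swap regret into the external regrets of $n$ internal copies of optimistic Hedge, control the variance of the loss vectors fed to those copies by invoking the paper's key $\ell_1$-stability lemma for the BM stationary distribution, and close the argument with a system of inequalities across players that exploits multilinearity of the loss and the non-negativity of swap regret.

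First I would recall the standard BM analysis: $\sregret_i \le \sum_{k=1}^n R_{i,k}$, where $R_{i,k}$ is the external regret of the $k$-th copy of optimistic Hedge run inside player $i$'s BM algorithm on the loss vector $L_{i,k}^t := x_i^t(k)\cdot\ell_i^t$. Applying Lemma~\ref{lem:optimisticRegret} to each copy and summing over $k$ gives
\[
\sregret_i \le \frac{2n\log n}{\eta} + \eta\sum_{t,k}\|L_{i,k}^t - L_{i,k}^{t-1}\|_\infty^2 - \frac{1}{4\eta}\sum_{t,k}\|q_{i,k}^{t+1} - q_{i,k}^t\|_1^2,
\]
where $q_{i,k}^t$ denotes the distribution produced by the $k$-th copy at time $t$. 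Decomposing $L_{i,k}^t - L_{i,k}^{t-1}$ into the change in $x_i^t(k)$ and the change in $\ell_i^t$, and using $\|\ell_i^t\|_\infty \le 1$ along with $\sum_k x_i^{t-1}(k)^2 \le 1$, bounds the middle term by $O(\eta)\sum_t\|x_i^t - x_i^{t-1}\|_1^2 + O(\eta)\sum_t\|\ell_i^t - \ell_i^{t-1}\|_\infty^2$.

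The key step is to invoke the main technical lemma of the paper (the Markov-chain-tree-theorem analysis) to translate the negative $-\frac{1}{4\eta}\sum_{t,k}\|q_{i,k}^{t+1} - q_{i,k}^t\|_1^2$ term into a corresponding negative term for the player's own trajectory, so that it dominates the positive $\eta\sum_t\|x_i^t - x_i^{t-1}\|_1^2$ piece for $\eta$ small enough (as guaranteed by the chosen rate) and leaves a residual $-\Omega(1/(\eta n))\cdot V_i$, where $V_i := \sum_t\|x_i^{t+1} - x_i^t\|_1^2$. For the $\ell$-variance term I would use multilinearity of the game, $\|\ell_i^t - \ell_i^{t-1}\|_\infty \le \sum_{k\ne i}\|x_k^t - x_k^{t-1}\|_1$, together with Cauchy--Schwarz, to get $\sum_t\|\ell_i^t - \ell_i^{t-1}\|_\infty^2 \le (m-1)\sum_{k\ne i}V_k$, producing the per-player inequality
\[
\sregret_i \le \frac{2n\log n}{\eta} + O(\eta m)\sum_{k\ne i}V_k - \Omega\!\left(\frac{1}{\eta n}\right) V_i.
\]

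Finally, since the identity map is a valid swap function we have $\sregret_i \ge 0$ for every $i$; feeding this back gives $V_i \le O(n^2\log n) + O(\eta^2 nm)\sum_{k\ne i}V_k$, and summing over $i$ under the constraint $\eta^2 nm^2 = O(1)$ (which holds for the stated $\eta$ and $T$ sufficiently large) bounds each $V_i$ by $O(mn^2\log n)$. Plugging this into the per-player bound yields $\sregret_i \le O(n\log n/\eta + \eta m^2 n^2\log n)$, and at $\eta = \Theta((n\log n/(m^2 T))^{1/4})$ the first term dominates for large $T$ and gives the target $O((n\log n)^{3/4}(m^2 T)^{1/4})$. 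The principal obstacle is precisely the key technical lemma invoked above: a naïve perturbation bound on stationary distributions of Markov chains is useless since tiny changes in $Q^t$ can shift $x^t$ drastically, and the escape is to exploit that Hedge changes each entry of $Q^t$ only \emph{multiplicatively} by a factor $\exp(\pm O(\eta))$, which, when pushed through the Markov chain tree formula $x^t(j)\propto\sum_{\text{trees rooted at }j}\prod_{(u\to v)}Q^t(u,v)$, converts multiplicative stability of individual entries into additive $\ell_1$-stability of the stationary distribution.
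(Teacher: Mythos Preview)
Your plan has a genuine gap at exactly the step you flag as ``the key step.'' You propose to use the paper's Markov-chain-tree-theorem lemma to convert the negative RVU term $-\frac{1}{4\eta}\sum_{t,k}\|q_{i,k}^{t+1}-q_{i,k}^t\|_1^2$ into a negative term $-\Omega\!\big(\tfrac{1}{\eta n}\big)V_i$ in the player's own trajectory. That conversion would require an inequality of the form
\[
\|x_i^{t+1}-x_i^t\|_1^2 \;\le\; C\,n\sum_{k}\|q_{i,k}^{t+1}-q_{i,k}^t\|_1^2,
\]
i.e.\ an \emph{additive} perturbation bound for the stationary distribution in terms of additive changes of the rows of $Q^t$. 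But the paper's Lemma~\ref{lem:perturbation} says nothing of the sort: it bounds $\|p-p'\|_1$ by the \emph{multiplicative} perturbation parameters $\sum_j\eta_j$, not by $\sum_j\|q_j-q_j'\|_1$. Indeed the paper explicitly warns (Figure~\ref{fig:correlated1}) that small additive changes in $Q$ can flip the stationary distribution entirely, so the inequality you need is false in general. Under Hedge the rows $q_k^t$ can be nearly deterministic, in which case $\|q_k^{t+1}-q_k^t\|_1$ is much smaller than the multiplicative parameter $\eta_k$, and your negative term may be far too weak to absorb anything.

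The paper's proof avoids this issue altogether. It \emph{discards} the negative RVU term when passing to Lemma~\ref{lem:correlated-equilibrium1}, and instead uses Lemma~\ref{main-technical-lemma} to get the absolute bound $\|x_i^t-x_i^{t-1}\|_1\le O(\eta)$ directly from the multiplicative structure of the Hedge update (via Claim~\ref{lem:optimistic-perturbation} and Lemma~\ref{lem:perturbation}). This immediately gives $\sum_t\|x_j^t-x_j^{t-1}\|_1^2\le O(T\eta^2)$ for every player, which plugged into \eqref{mainbound} yields $\sregret_T^i\le \tfrac{2n\log n}{\eta}+O(\eta^3 m^2 T)$ and hence the stated bound at $\eta=(n\log n/(m^2T))^{1/4}$. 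There is no self-bounding system across players and no use of $\sregret_i\ge 0$; the argument is a direct per-round $\ell_1$-stability estimate. Your last paragraph actually describes this multiplicative-to-additive mechanism correctly, but it delivers an absolute $O(\eta)$ bound on $\|x^{t+1}-x^t\|_1$, not the comparison with $\sum_k\|q_k^{t+1}-q_k^t\|_1^2$ that the rest of your argument relies on.
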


For convenience we refer to the BM algorithm with optimistic Hedge as
  \BMHedge\ in the rest of the section.
We first combine the analysis of \cite{blum2007external}
  for the BM algorithm and Lemma \ref{eq:opt-hedge-update}
  to obtain the following bound for the swap regret of 
  \BMHedge\ under the adversarial setting, in terms of the total path length of cost vectors
  the player's mixed strategies:
\begin{lemma}
\label{lem:correlated-equilibrium1}
Suppose that a player runs \emph{\BMHedge} with $\eta>0$ for $T$ rounds. Then  
\begin{align*}
    \sregret_{T} 
    \leq &~ \frac{2n\log n}{\eta} + 2\eta \left(\sum_{t=2}^T \| x^t -  x^{t-1}\|_{1}^{2} + 
    \sum_{t=1}^T \| \ell^t - \ell^{t-1}\|_{\infty}^2\right) ,\quad\text{where $\ell^0=\mathbf{0}$.}
\end{align*}
\end{lemma}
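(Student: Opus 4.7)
The plan is to combine three ingredients: the standard Blum--Mansour decomposition of swap regret into a sum of external regrets of the $n$ copies $\ALG_1,\ldots,\ALG_n$; the optimistic Hedge regret bound (Lemma~\ref{lem:optimisticRegret}) applied to each copy; and a short triangle-inequality calculation that bounds the path length of the scaled loss vectors $x^t(j)\,\ell^t$ seen by $\ALG_j$ in terms of $\|\ell^t-\ell^{t-1}\|_\infty$ and $\|x^t-x^{t-1}\|_1$.

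First I would verify the following decomposition. Because $x^t=x^t Q^t$ is the stationary distribution of $Q^t$, one has $\langle x^t,\ell^t\rangle=\sum_j x^t(j)\,\langle q_j^t,\ell^t\rangle$, i.e.\ the total per-round loss of \BMHedge{} equals the sum over $j$ of the expected losses that each $\ALG_j$ experiences on the scaled loss vector $x^t(j)\,\ell^t$. Therefore, for every swap function $\phi:[n]\to[n]$, comparing each $\ALG_j$ against the fixed action $\phi(j)$ and summing over $j$ yields
\[
\sum_{t=1}^{T}\langle x^t,\ell^t\rangle-\sum_{t=1}^{T}\sum_{j=1}^{n} x^t(j)\,\ell^t(\phi(j))\;\le\;\sum_{j=1}^{n}\regret_T^{(j)},
\]
where $\regret_T^{(j)}$ denotes the external regret of $\ALG_j$ on the loss sequence $\{x^t(j)\,\ell^t\}_{t\ge 1}$. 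Minimizing the left-hand side over $\phi$ gives $\sregret_T\le \sum_{j=1}^{n}\regret_T^{(j)}$.

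Next I would apply Lemma~\ref{lem:optimisticRegret} to each $\ALG_j$ (adopting the convention $x^0(j)\,\ell^0=\mathbf{0}$) and discard the non-positive $-\tfrac{1}{4\eta}\|q_j^{t+1}-q_j^t\|_1^2$ term, obtaining
\[
\regret_T^{(j)}\;\le\;\frac{2\log n}{\eta}\;+\;\eta\sum_{t=1}^{T}\big\|x^t(j)\,\ell^t-x^{t-1}(j)\,\ell^{t-1}\big\|_\infty^2.
\]
The last step is a short calculation. By the triangle inequality and $\|\ell^{t-1}\|_\infty\le 1$, for $t\ge 2$,
\[
\big\|x^t(j)\,\ell^t-x^{t-1}(j)\,\ell^{t-1}\big\|_\infty^2\;\le\;2\,x^t(j)^2\,\|\ell^t-\ell^{t-1}\|_\infty^2\;+\;2\,\big(x^t(j)-x^{t-1}(j)\big)^2,
\]
and at $t=1$ only the loss-path term survives because $x^0(j)\,\ell^0=\mathbf{0}$. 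Summing over $j\in[n]$ and using $\sum_j x^t(j)^2\le \sum_j x^t(j)=1$ together with $\sum_j(x^t(j)-x^{t-1}(j))^2\le\big(\sum_j|x^t(j)-x^{t-1}(j)|\big)^2=\|x^t-x^{t-1}\|_1^2$, I would obtain
\[
\sregret_T\;\le\;\sum_{j=1}^{n}\regret_T^{(j)}\;\le\;\frac{2n\log n}{\eta}+2\eta\left(\sum_{t=2}^{T}\|x^t-x^{t-1}\|_1^2+\sum_{t=1}^{T}\|\ell^t-\ell^{t-1}\|_\infty^2\right),
\]
which is the claimed inequality. There is no real obstacle here beyond bookkeeping: the BM decomposition is standard, the per-copy bound is a direct invocation of Lemma~\ref{lem:optimisticRegret}, and one only needs to be careful that the $t=1$ boundary does not produce a spurious $\|x^1-x^0\|_1^2$ term and that the constant from $(a+b)^2\le 2a^2+2b^2$ matches the $2\eta$ in the statement.
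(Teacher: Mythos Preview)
Your proposal is correct and follows essentially the same route as the paper's proof: the Blum--Mansour decomposition $\sregret_T\le\sum_j\regret_T^{(j)}$, an application of Lemma~\ref{lem:optimisticRegret} to each $\ALG_j$ with the negative term dropped, and the $(a+b)^2\le 2a^2+2b^2$ split of $\|x^t(j)\ell^t-x^{t-1}(j)\ell^{t-1}\|_\infty^2$ followed by $\|\cdot\|_2\le\|\cdot\|_1$ on both pieces. The only cosmetic difference is that the paper inserts the intermediate point $x^{t-1}(j)\ell^t$ in the triangle inequality whereas you insert $x^t(j)\ell^{t-1}$; both choices yield the same final bound.
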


The proof can be found in Appendix \ref{sec:lem:correlated-equilibrium1}.
For the repeated game setting, we have for each $t\ge 2$,
$$\|\ell^t_i-\ell^{t-1}_i\|_\infty\le \|\bx_{-i}^t-\bx_{-i}^{t-1}\|_1
\le \sum_{j\ne i} \|\bx_j^t-\bx_j^{t-1}\|_1$$
where the last inequality used the fact that both $\bx_{-i}^t$ and $\bx_{-i}^{t-1}$
  are product distributions.
Combining it with Lemma \ref{lem:correlated-equilibrium1}, we 
  can bound the swap regret of each player $i\in [m]$ in the game by 
\begin{align}\label{mainbound}
    \sregret^i_{T} 
    \leq &~ \frac{2n\log n}{\eta} + 2\eta + 2\eta m 
    \sum_{j\in [m]} \sum_{t=2}^T \|x^t_j-x^{t-1}_j\|_1^2.
\end{align}
We prove the following main technical lemma in the rest of the section, which states that 
  the mixed strategy $x^t$ produced by $\BMHedge$ under the adversarial setting
  moves very slowly
  (by at most $O(\eta)$ in $\ell_1$-distance each round).
Theorem \ref{thm:correlated-equilibrium}  follows by combining Lemma \ref{lem:correlated-equilibrium1} and \ref{main-technical-lemma}.  
  
\begin{lemma}\label{main-technical-lemma}
Suppose that a player runs $\emph{\BMHedge}$ with rate $\eta:0<\eta\le 1/6$ under the adversarial setting.
Then we have $\|x^t-x^{t-1}\|_1\le O(\eta)$ for all $t\ge 2$.  
\end{lemma}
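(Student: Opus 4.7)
The plan is to fix $t \ge 2$ and compare the stationary distributions $x^{t-1}$ and $x^t$ of the row-stochastic matrices $Q^{t-1}$ and $Q^t$ directly via the Markov chain tree theorem, exploiting the fact that optimistic Hedge perturbs each entry of $Q$ only multiplicatively by a factor of $\exp(O(\eta))$. Because copy $\ALG_u$ receives the loss vector $x^\tau(u)\ell^\tau \in [0,1]^n$ at round $\tau$, the update takes the multiplicative form
\begin{align*}
Q^t_{uv} = Q^{t-1}_{uv}\cdot\frac{a_u(v)}{Z^t_u},\qquad a_u(v):=\exp\bigl(-\eta\bigl(2x^{t-1}(u)\ell^{t-1}(v) - x^{t-2}(u)\ell^{t-2}(v)\bigr)\bigr),
\end{align*}
with $Z^t_u = \sum_v Q^{t-1}_{uv}\,a_u(v)$. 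Since the exponent lies in $[-2\eta,\eta]$, we have $a_u(v), Z^t_u \in [e^{-2\eta},e^{\eta}]$.

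Next, I invoke the Markov chain tree theorem to write $x^t(j) = W^t_j/\sum_k W^t_k$, where $W^t_j = \sum_{T\in\mathcal{T}_j}\prod_{(u,v)\in T} Q^t_{uv}$ and $\mathcal{T}_j$ is the set of spanning arborescences directed toward $j$. In every such $T$, each $u \ne j$ is the tail of exactly one edge, so $\prod_{(u,v)\in T}(1/Z^t_u) = \prod_{u\ne j}1/Z^t_u = Z^t_j/\prod_u Z^t_u$. The tree-independent factor $\prod_u 1/Z^t_u$ then cancels between numerator and denominator of $x^t(j)$, yielding the clean identity
\begin{align*}
x^t(j) = \frac{Z^t_j\cdot\tilde W_j}{\sum_k Z^t_k\cdot\tilde W_k}, \qquad \tilde W_j := \sum_{T\in\mathcal{T}_j}\prod_{(u,v)\in T} a_u(v)\,Q^{t-1}_{uv}.
\end{align*}

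The hard part—and the whole point of using the tree theorem—is bounding $\tilde W_j/W^{t-1}_j$ (where $W^{t-1}_j := \sum_T\prod_{(u,v)\in T}Q^{t-1}_{uv}$) by a multiplicative factor $\exp(O(\eta))$ \emph{independent of $n$}: a naive edge-by-edge bound multiplies $n-1$ factors of $e^{\pm O(\eta)}$ and costs $\exp(O(n\eta))$, which is far too weak. The crucial observation is that for every tree $T\in\mathcal{T}_j$, the exponent telescopes as
\begin{align*}
\sum_{(u,v)\in T}\log a_u(v) = -\eta\sum_{u\ne j}\bigl(2x^{t-1}(u)\ell^{t-1}(v_u(T)) - x^{t-2}(u)\ell^{t-2}(v_u(T))\bigr),
\end{align*}
where $v_u(T)$ is the unique out-neighbor of $u$ in $T$. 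Since $\ell^{t-1},\ell^{t-2}\in[0,1]^n$ and $\sum_u x^{t-1}(u) = \sum_u x^{t-2}(u) = 1$, this sum lies in $[-1,2]$ uniformly over $T$ and $n$, so $\prod_{(u,v)\in T}a_u(v)\in[e^{-2\eta},e^{\eta}]$ for every tree $T$. Therefore $\tilde W_j\in[e^{-2\eta},e^{\eta}]\cdot W^{t-1}_j$; combined with $Z^t_j\in[e^{-2\eta},e^{\eta}]$, this gives $Z^t_j\tilde W_j/W^{t-1}_j\in[e^{-4\eta},e^{2\eta}]$, and hence
\begin{align*}
\frac{x^t(j)}{x^{t-1}(j)} = \frac{Z^t_j\tilde W_j/W^{t-1}_j}{\sum_k x^{t-1}(k)\cdot Z^t_k\tilde W_k/W^{t-1}_k}\in[e^{-6\eta},e^{6\eta}].
\end{align*}
Under the hypothesis $\eta\le 1/6$, elementary estimates on $e^{\pm 6\eta}$ give $|x^t(j)/x^{t-1}(j)-1|\le C\eta$ for an absolute constant $C$. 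Multiplying by $x^{t-1}(j)$ and summing over $j$ yields $\|x^t-x^{t-1}\|_1 \le C\eta$, completing the proof.
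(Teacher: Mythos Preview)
Your proof is correct and follows essentially the same route as the paper: both invoke the Markov chain tree theorem and exploit that in every arborescence each non-root node $u$ is the tail of exactly one edge, so the total multiplicative perturbation telescopes to $\exp(O(\eta))$ via $\sum_u x(u)=1$. The paper packages this as a reusable perturbation lemma for ``$(\eta_1,\ldots,\eta_n)$-approximate'' Markov chains (with $\eta_j=2\eta x^{t-2}(j)+4\eta x^{t-1}(j)$, hence $\sum_j\eta_j=6\eta$), whereas you work directly with the factors $a_u(v)$ and $Z^t_u$ and cancel the tree-independent normalizer $\prod_u Z^t_u$ explicitly---but the core argument is identical.
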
  
  
\begin{proof}[Proof of Theorem \ref{thm:correlated-equilibrium} Assuming Lemma \ref{main-technical-lemma}]
Let $\eta = (n\log n)^{1/4}(m^2T)^{-1/4}$.
For the special case when $\eta>1/6$,  the swap regret of
  each player is trivially at most $\smash{T=O( (n\log n)^{3/4}\cdot(m^2T)^{1/4} )}$.
Assuming $\eta\le 1/6$, 
  by Lemma~\ref{lem:correlated-equilibrium1} we have from (\ref{mainbound}) that
	\begin{align*}
	\sregret^i_{T} 
	 \leq  \frac{2n\log n}{\eta} +2\eta + 2\eta m^2 T\cdot O(\eta^2) 
	 =O\left(  (n\log n)^{3/4}\cdot  (m^2T)^{1/4} \right).
	\end{align*}
This finishes the proof of the theorem.
\end{proof}
  

\begin{figure}[t!]
	\begin{minipage}{\textwidth}
		\centering
		\begin{align*}
		Q =  
		\begin{pmatrix}
		1 - \eps & \eps\\
		\eps' & 1 - \eps'
		\end{pmatrix}
		  \ \ 
		x =  
		\left(\frac{1}{k+1}\ \ \ 
		\frac{k}{k+1}\right)
		 \ \ \text{vs}\ \ 
		Q = 
		\begin{pmatrix}
		1 - \eps' & \eps'\\
		\eps & 1 - \eps
		\end{pmatrix}
	  \  \
		x = 
		\left(\frac{k}{k+1}\ \ \ 
		\frac{1}{k+1}\right) 
		\end{align*}
	\end{minipage}\vspace{0.1cm}
	\caption{Let $\eps'=\eps/k$. Additive perturbations may change the stationary distribution dramatically.}
	\label{fig:correlated1}
\end{figure}

The proof of Lemma \ref{main-technical-lemma} can be found in Appendix \ref{lemma53}.
Here we give a high-level description of its proof.
Given that $\BMHedge$ runs $n$ copies of optimistic Hedge with rate $\eta$,
  we know that mixed strategies proposed by each $\ALG_i$ move very slowly:
  $\smash{\|q_i^t-q_i^{t-1}\|_1\le O(\eta)}$.
However, it is not clear whether this translates into a similar property for $x^t$
  since the latter is obtained by solving
  $x^t=x^tQ^t $.
Equivalently, $x^t$ can be viewed as the stationary distribution 
  of the Markov chain $Q^t$
  composed by strategies of each individual expert $\ALG_i$, and its dependency
  on $Q^t$ is highly nonlinear.
While there is a vast literature on 
  the perturbation analysis of Markov chains,
  many results require additional assumptions on the 
  underlying Markov chain (e.g. bounded eigenvalue gap) and 
  are not well suited for our setting here.
Indeed, it is easy to come up with examples showing that the stationary distrbution is extremely sensitive to small \emph{additive} perturbations (see Figure~\ref{fig:correlated1}).
As a result one cannot hope to prove Lemma \ref{main-technical-lemma} based on  
  the property $\smash{\|q_i^t-q_i^{t-1}\|_1\le O(\eta)}$ only.

We circumvent this difficulty by noting that optimistic Hedge only incurs small 
{\em multiplicative} perturbations on the Markov chain (see Claim~\ref{lem:optimistic-perturbation}), i.e., each entry 
  of $\smash{Q^{t}}$ differs from the corresponding entry of $\smash{Q^{t-1}}$ by no more than a   small multiplicative factor of the latter.
We present in Lemma~\ref{lem:perturbation} an analysis on stationary distributions of Markov chains under multiplicative perturbations, based on the classical
  Markov chain tree theorem, 
  and then use it to prove Lemma \ref{main-technical-lemma}.

We further prove that one can design a wrapper for 
  $\BMHedge$ that is robust against adversarial opponents:
\begin{corollary}
	\label{cor:robus-adv}
There is an algorithm $\emph{\BMHedge}^*$ with the following guarantee.
If all players run $\emph{\BMHedge}^*$, then the swap regret of each individual is $\smash{\tilde{O}(n^{3/4}(m^2 T)^{1/4} )}$; if the player is facing adversaries, then the swap regret is still at most $\smash{\tilde{O} ((nT)^{1/2} +n^{3/4}(m^2 T)^{1/4}  )}$.
\end{corollary}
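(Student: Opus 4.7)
The plan is to build $\BMHedge^*$ by running two parallel instances of $\BMHedge$ with different learning rates and combining their distributions via a standard two-expert meta-algorithm. Let $A_1$ be $\BMHedge$ with the game-tuned rate $\eta_1 = (n\log n/(m^2T))^{1/4}$ from Theorem~\ref{thm:correlated-equilibrium}, which yields swap regret $R_1 = \wt{O}(n^{3/4}(m^2T)^{1/4})$ whenever every player uses $A_1$. Let $A_2$ be $\BMHedge$ with the adversarial-tuned rate $\eta_2 = \sqrt{(n\log n)/T}$; plugging the trivial bound $\|\ell^t-\ell^{t-1}\|_\infty \le 2$ into Lemma~\ref{lem:correlated-equilibrium1}, together with the movement bound of Lemma~\ref{main-technical-lemma} (which holds in any environment), gives $A_2$ swap regret $R_2 = \wt{O}(\sqrt{nT})$ against arbitrary opponents. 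At each round, $\BMHedge^*$ feeds $\ell^t$ to both copies, obtains distributions $x_1^t, x_2^t$, and plays $x^t = w_1^t x_1^t + w_2^t x_2^t$, where $w^t$ is produced by vanilla Hedge on the two super-experts $\{A_1, A_2\}$ with external regret $R_\mathrm{meta} = O(\sqrt T)$.

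The central algebraic step is the following decomposition. For each $k \in \{1, 2\}$ and each swap function $\phi$, writing $\ell^t_\phi(j) := \ell^t(\phi(j))$,
\begin{align*}
\sum_{t=1}^T \langle x^t, \ell^t - \ell^t_\phi\rangle
&= \sum_{t=1}^T \langle x_k^t, \ell^t - \ell^t_\phi\rangle + \sum_{t=1}^T w_{\bar k}^t \langle x_{\bar k}^t - x_k^t, \ell^t - \ell^t_\phi\rangle \\
&\le R_k + 2\sum_{t=1}^T w_{\bar k}^t,
\end{align*}
where $\bar k$ denotes the other index and we use $A_k$'s swap-regret bound together with $\|x_{\bar k}^t - x_k^t\|_1 \le 2$ and $\|\ell^t - \ell^t_\phi\|_\infty \le 1$. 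Maximizing over $\phi$ gives $\sregret(\BMHedge^*) \le \min_k\bigl(R_k + 2\sum_t w_{\bar k}^t\bigr)$. An auxiliary check is needed to show that $R_1$ is still valid when every player runs $\BMHedge^*$ rather than pure $A_1$: since Lemma~\ref{main-technical-lemma} also bounds $\|x_{j,2}^t - x_{j,2}^{t-1}\|_1 = O(\eta_2)$ for each opponent's $A_2$ copy, the variation $\sum_t\|\ell^t-\ell^{t-1}\|_\infty^2$ picked up by $A_1$'s analysis only increases by a lower-order term, and the proof of Theorem~\ref{thm:correlated-equilibrium} goes through unchanged.

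The final step is to bound $\sum_t w_{\bar k}^t$ in each regime, using the meta-algorithm's guarantee $\sum_t w_{\bar k}^t(L_{\bar k}^t - L_k^t) \le R_\mathrm{meta}$ with $L_k^t := \langle x_k^t, \ell^t\rangle$. In the game setting, $A_1$'s cumulative loss trails $A_2$'s by $\Omega(R_2 - R_1) = \Omega(\sqrt{nT})$ on average, so vanilla Hedge concentrates on $A_1$ and one obtains $\sum_t w_2^t = \wt{O}(\sqrt T)$, giving total swap regret $R_1 + \wt{O}(\sqrt T) = \wt{O}(n^{3/4}(m^2T)^{1/4})$. Against adversaries, the symmetric argument with $k = 2$ yields $R_2 + \wt{O}(\sqrt T) = \wt{O}(\sqrt{nT})$, and the additive $R_1$ term in the stated adversarial bound absorbs the slack from the borderline regime where the two instances' losses remain close. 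The main obstacle is making this gap-based analysis uniform across all parameter regimes: the weakest case is when $L_1^t$ and $L_2^t$ are nearly tied, in which case one replaces vanilla Hedge by an adaptively tuned variant (e.g., Prod, AdaHedge, or a thin corralling wrapper) to certify that $\sum_t w_{\bar k}^t$ stays $\wt O(\sqrt T)$ whenever $A_k$'s trajectory is measurably better, without changing the final bounds.
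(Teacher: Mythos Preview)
Your approach is genuinely different from the paper's: the paper does \emph{not} aggregate two parallel copies via a meta-learner. Instead it runs a single copy of $\BMHedge$ with an adaptive restart (doubling) scheme: it monitors the accumulated variation $\sum_{t}\|\ell^t-\ell^{t-1}\|_\infty^2+\sum_{t}\|x^t-x^{t-1}\|_1^2$, and whenever this exceeds a threshold $B_r$ it doubles $B_r$, resets the learning rate to $\eta_r=\min\{\sqrt{n\log n/B_r},\,\eta_1\}$, and restarts. Lemma~\ref{lem:correlated-equilibrium1} then bounds the swap regret of each epoch directly in terms of the observed variation, and summing over the $O(\log T)$ epochs yields both guarantees. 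No meta-aggregation is needed; the swap-regret bound is obtained in a single stream.

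Your proposal, by contrast, has a real gap in the final step. The claim that ``$A_1$'s cumulative loss trails $A_2$'s by $\Omega(R_2-R_1)$'' is unjustified: a smaller \emph{swap} regret does not translate into a smaller cumulative loss, because the comparator $\min_\phi\sum_t\langle x_k^t,\ell^t_\phi\rangle$ depends on the strategy sequence $x_k^t$ itself. Two algorithms facing the same losses can have identical total loss while having very different swap regrets (or vice versa). Consequently there is no reason the meta-Hedge weights should concentrate on $A_1$ in the game setting, and $\sum_t w_2^t$ can be $\Theta(T)$ even when $R_1\ll R_2$. Replacing the meta-learner by AdaHedge, Prod, or a corralling wrapper does not fix this: those tools control external regret against base learners' losses, not the quantity $\sum_t w_{\bar k}^t$ that your decomposition needs. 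This is the well-known obstacle that swap regret does not compose under mixtures (the benchmark is strategy-dependent), and it is precisely what the paper sidesteps by avoiding aggregation altogether.
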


In the appendix we give two more extensions to our results
  on swap regrets. 
\begin{flushleft}\begin{enumerate}
\item In Appendix \ref{sec:extension-phi}, we show that incorporating optimistic Hedge into a folklore algorithm  from~\cite{cesa2006prediction} can also achieve faster convergence of swap regrets, with a slightly worse dependence on $n$. Interestingly, our analysis of this algorithm also crucially relies on the perturbation analysis of stationary distributions of Markov chains.
\item In Appendix \ref{sec:price},
 we study the convergence to the approximately optimal social welfare (following the definition in~\cite{foster2016learning}) with no-swap regret algorithms, and prove that $O(1/T)$ holds for a wide range of no-swap regret algorithms. 
\end{enumerate}\end{flushleft}


\section{Discussion}
\label{sec:discussion}

In this paper, we studied the convergence rate of regrets of the Hedge algorithm and its optimistic variant in 
  two-player games. 
We obtained a strict separation between vanilla Hedge and optimistic Hedge, i.e., $ \smash{1/\sqrt{T}} $ vs. $ \smash{1/T^{5/6}} $. 
We also initiated the study on algorithms with faster convergence rates
  of swap regrets in general multiplayer games and obtained an algorithm
    with average regret $\smash{O(m^{1/2}(n\log n/T)^{3/4})}$ , improving over the classic result of Blum and Mansour~\cite{blum2007external}.

Our work led
  to several interesting future directions:
\begin{flushleft}\begin{itemize}
	\item Our faster convergence result for optimistic Hedge currently only works for two-player games. Can we extend it to multiplayer games? Second, what is the optimal convergence rate for optimistic Hedge and other no-regret algorithms? even for two-player games? 
	
	\item Regarding swap regrets, it is easy to generalize the result in Section~\ref{sec:correlated} to any algorithm that (1) satisfies the RVU property and (2) makes only multiplicative changes on strategies each iteration. These include optimistic Hedge and optimistic multiplicative weights. However, 
	our current analysis does not apply to
	general optimistic Mirror Descent or Follow the Regularized Leader. 
Can we still prove faster convergence of swap regrets via the reduction of \cite{blum2007external}
  without requiring (2) on the regret minimization algorithm?
  or does there exist some natural gap between these algorithms and optimistic Hedge\hspace{0.05cm}/\hspace{0.05cm}multiplicative weights?
	
	\item For our result in Appendix \ref{sec:price} on the convergence to the approximately 
	  optimal social welfare, can this fast convergence result be extended to the (exact) optimal social welfare setting (follow the definition in~\cite{syrgkanis2015fast})?
	
	
	\item  Can we achieve similar convergence rates under partial information models? such as those
	  considered in \cite{rakhlin2013optimization, foster2016learning,wei2018more}.
\end{itemize}\end{flushleft}

\section*{Acknowledgement}
Binghui Peng would thank Christos H. Papadimitriou for useful discussions.

\newpage
\bibliographystyle{plain}
\bibliography{ref}

\begin{thebibliography}{10}

\bibitem{anantharam1989proof}
Venkat Anantharam and Pantelis Tsoucas.
\newblock A proof of the markov chain tree theorem.
\newblock {\em Statistics \& Probability Letters}, 8(2):189--192, 1989.

\bibitem{arora2012multiplicative}
Sanjeev Arora, Elad Hazan, and Satyen Kale.
\newblock The multiplicative weights update method: a meta-algorithm and
  applications.
\newblock {\em Theory of Computing}, 8(1):121--164, 2012.

\bibitem{Aumann74}
R.J. Aumann.
\newblock Subjectivity and correlation in randomized strategies.
\newblock {\em Journal of Mathematical Economics}, 1:67--96, 1974.

\bibitem{bailey2019fast}
James Bailey and Georgios Piliouras.
\newblock Fast and furious learning in zero-sum games: vanishing regret with
  non-vanishing step sizes.
\newblock In {\em Advances in Neural Information Processing Systems}, pages
  12977--12987, 2019.

\bibitem{bailey2018multiplicative}
James~P Bailey and Georgios Piliouras.
\newblock Multiplicative weights update in zero-sum games.
\newblock In {\em Proceedings of the 2018 ACM Conference on Economics and
  Computation}, pages 321--338, 2018.

\bibitem{blum2007external}
Avrim Blum and Yishay Mansour.
\newblock From external to internal regret.
\newblock {\em Journal of Machine Learning Research}, 8(Jun):1307--1324, 2007.

\bibitem{cesa2006prediction}
Nicolo Cesa-Bianchi and G{\'a}bor Lugosi.
\newblock {\em Prediction, learning, and games}.
\newblock Cambridge university press, 2006.

\bibitem{cheung2019vortices}
Yun~Kuen Cheung and Georgios Piliouras.
\newblock Vortices instead of equilibria in minmax optimization: Chaos and
  butterfly effects of online learning in zero-sum games.
\newblock In {\em Conference on Learning Theory}, pages 807--834, 2019.

\bibitem{daskalakis2011near}
Constantinos Daskalakis, Alan Deckelbaum, and Anthony Kim.
\newblock Near-optimal no-regret algorithms for zero-sum games.
\newblock In {\em Proceedings of the twenty-second annual ACM-SIAM symposium on
  Discrete Algorithms}, pages 235--254. SIAM, 2011.

\bibitem{daskalakis2018training}
Constantinos Daskalakis, Andrew Ilyas, Vasilis Syrgkanis, and Haoyang Zeng.
\newblock Training gans with optimism.
\newblock In {\em International Conference on Learning Representations}, 2018.

\bibitem{daskalakis2018last}
Constantinos Daskalakis and Ioannis Panageas.
\newblock Last-iterate convergence: Zero-sum games and constrained min-max
  optimization.
\newblock {\em arXiv preprint arXiv:1807.04252}, 2018.

\bibitem{foster1999regret}
Dean~P Foster and Rakesh Vohra.
\newblock Regret in the on-line decision problem.
\newblock {\em Games and Economic Behavior}, 29(1-2):7--35, 1999.

\bibitem{foster1997calibrated}
Dean~P Foster and Rakesh~V Vohra.
\newblock Calibrated learning and correlated equilibrium.
\newblock {\em Games and Economic Behavior}, 21(1-2):40, 1997.

\bibitem{foster2016learning}
Dylan~J Foster, Zhiyuan Li, Thodoris Lykouris, Karthik Sridharan, and Eva
  Tardos.
\newblock Learning in games: Robustness of fast convergence.
\newblock In {\em Advances in Neural Information Processing Systems}, pages
  4734--4742, 2016.

\bibitem{freund1996game}
Yoav Freund and Robert~E Schapire.
\newblock Game theory, on-line prediction and boosting.
\newblock In {\em Proceedings of the ninth annual conference on Computational
  learning theory}, pages 325--332, 1996.

\bibitem{Hedge97}
Yoav Freund and Robert~E. Schapire.
\newblock {\em J. Comput. System Sci.}, 55(1):119--139, 1997.

\bibitem{greenwald2008more}
Amy Greenwald, Zheng Li, and Warren Schudy.
\newblock More efficient internal-regret-minimizing algorithms.
\newblock In {\em COLT}, pages 239--250, 2008.

\bibitem{hart2000simple}
Sergiu Hart and Andreu Mas-Colell.
\newblock A simple adaptive procedure leading to correlated equilibrium.
\newblock {\em Econometrica}, 68(5):1127--1150, 2000.

\bibitem{hazan2016introduction}
Elad Hazan.
\newblock Introduction to online convex optimization.
\newblock {\em Foundations and Trends in Optimization}, 2(3-4):157--325, 2016.

\bibitem{kalai2005efficient}
Adam Kalai and Santosh Vempala.
\newblock Efficient algorithms for online decision problems.
\newblock {\em Journal of Computer and System Sciences}, 71(3):291--307, 2005.

\bibitem{mertikopoulos2018cycles}
Panayotis Mertikopoulos, Christos Papadimitriou, and Georgios Piliouras.
\newblock Cycles in adversarial regularized learning.
\newblock In {\em Proceedings of the Twenty-Ninth Annual ACM-SIAM Symposium on
  Discrete Algorithms}, pages 2703--2717. SIAM, 2018.

\bibitem{nisan2007algorithmic}
Noam Nisan, Tim Roughgarden, Eva Tardos, and Vijay~V Vazirani.
\newblock {\em Algorithmic Game Theory}.
\newblock Cambridge University Press, 2007.

\bibitem{rakhlin2013online}
Alexander Rakhlin and Karthik Sridharan.
\newblock Online learning with predictable sequences.
\newblock In {\em Conference on Learning Theory}, pages 993--1019, 2013.

\bibitem{rakhlin2013optimization}
Sasha Rakhlin and Karthik Sridharan.
\newblock Optimization, learning, and games with predictable sequences.
\newblock In {\em Advances in Neural Information Processing Systems}, pages
  3066--3074, 2013.

\bibitem{roughgarden2015intrinsic}
Tim Roughgarden.
\newblock Intrinsic robustness of the price of anarchy.
\newblock {\em Journal of the ACM (JACM)}, 62(5):1--42, 2015.

\bibitem{shalev2011online}
Shai Shalev-Shwartz et~al.
\newblock Online learning and online convex optimization.
\newblock {\em Foundations and trends in Machine Learning}, 4(2):107--194,
  2011.

\bibitem{syrgkanis2015fast}
Vasilis Syrgkanis, Alekh Agarwal, Haipeng Luo, and Robert~E Schapire.
\newblock Fast convergence of regularized learning in games.
\newblock In {\em Advances in Neural Information Processing Systems}, pages
  2989--2997, 2015.

\bibitem{wang2018acceleration}
Jun-Kun Wang and Jacob~D Abernethy.
\newblock Acceleration through optimistic no-regret dynamics.
\newblock In {\em Advances in Neural Information Processing Systems}, pages
  3824--3834, 2018.

\bibitem{wei2018more}
Chen-Yu Wei and Haipeng Luo.
\newblock More adaptive algorithms for adversarial bandits.
\newblock In {\em Conference On Learning Theory}, pages 1263--1291, 2018.

\bibitem{zinkevich2003online}
Martin Zinkevich.
\newblock Online convex programming and generalized infinitesimal gradient
  ascent.
\newblock In {\em Proceedings of the 20th international conference on machine
  learning (icml-03)}, pages 928--936, 2003.

\end{thebibliography}

\appendix

\newpage

\section{Missing proof from Section~\ref{sec:coarse}}\label{sec:movement}
{\bf Proof of Lemma \ref{lem:movement}\ \ }
For each $t \in [2:T]$, we apply Pinsker's inequality to have
\begin{align}
\frac{1}{2}\cdot \|x^t - x^{t-1}\|_1^{2} 
\leq &~  \DKL(x^{t-1}\| x^{t})
=~ \sum_{i\in [n]}x^{t-1}(i)\cdot \log \left(\frac{x^{t-1}(i)}{x^t(i)}\right)\notag \\
= &~ \sum_{i\in [n]}x^{t-1}(i)\cdot \log \left(\sum_{j\in [n]}\exp\left(-\eta\big(2\ell^{t-1}(j) - \ell^{t-2}(j)\big) \right)\cdot x^{t-1}(j) \right) \notag \\
&~~~~~~~\ \ \ \ +  \sum_{i\in [n]}x^{t-1}(i)\cdot \eta \big(2\ell^{t-1}(i) - \ell^{t-2}(i)\big)\notag \\
= &~ \log \left(\sum_{j\in [n]}\exp\left(-\eta\big(2\ell^{t-1}(j) - \ell^{t-2}(j)\big)\right)\cdot x^{t-1}(j) \right) + \eta \langle x^{t-1}, 2\ell^{t-1} - \ell^{t-2} \rangle\notag\\
\triangleq &~ \Phi_t +\eta \langle x^{t-1}, 2\ell^{t-1} - \ell^{t-2} \rangle \label{eq:coarse1},
\end{align}
where we recall $\ell^{0} = \mathbf{0}$. 
The third step follows from the updating rule of optimistic Hedge. 
Letting $L^t=\sum_{i\in [t]} \ell^i$,
next we use induction to prove the following claim for each $k=1,\ldots,T$: 
\begin{align}
\label{eq:potential}
\sum_{t\in [k]}\Phi_t = \log \left(\sum_{j\in [n]}x^1(j)\cdot  \exp\left(-\eta L^{k-1}(j) -\eta \ell^{k-1}(j) \right) \right).
\end{align}
The base case holds trivially, as $\Phi_1 = 0$. Suppose the above holds for $k$. Then for $k+1$ we have
\begin{align*}
\sum_{t=1}^{k+1}\Phi_t 
=&~ \sum_{t=1}^{k}\Phi_{t}+ \Phi_{k + 1}\\
=&~\log \left(\sum_{j\in [n]}x^1(j)\cdot \exp\left(-\eta L^{k-1}(j) -\eta \ell^{k-1}(j) \right) \right) + \log \left(\sum_{i\in [n]}\exp\left(-\eta\big(2\ell^{k}(i) - \ell^{k-1}(i)\big) \right)\cdot x^{k}(i) \right)\\
=&~ \log \left(\left(\sum_{i\in [n]}\exp\left(-\eta\big(2\ell^{k}(i) - \ell^{k-1}(i)\big) \right)\cdot x^{k}(i)\right) \cdot\left( \sum_{j\in [n]}x^1(j)\cdot \exp\left(-\eta L^{k-1}(j) -\eta \ell^{k-1}(j)\right) \right) \right)\\
=&~\log \left( \sum_{i\in [n]}\exp\left(-\eta\big(2\ell^{k}(i) - \ell^{k-1}(i)\big) \right) \cdot x^1(i) \cdot\exp\left(-\eta L^{k-1}(i) - \eta \ell^{k-1}(i)\right) \right)\\
=&~\log \left(\sum_{i\in [n]}x^1(i)\cdot \exp\left(-\eta L^{k}(i) -\eta \ell^{k}(i) \right) \right),
\end{align*}
where the third step follows from 
\begin{align*}
x^k(i) = \frac{x^1(i)\cdot \exp\left(-\eta L^{k-1}(i) -\eta \ell^{k-1}(i)\right)}{\sum_{j\in [n]}x^1(j)\cdot \exp\left(-\eta L^{k-1}(j) - \eta \ell^{k-1}(j)\right)}.
\end{align*}
Now we have (recall that $\Phi_1=0$)
\begin{align*}
\frac{1}{2\ln 2}\sum_{t\in [2:T]} \|x^t - x^{t-1}\|_1^{2}  
\leq &~ \sum_{t\in [2:T]}\Big(\Phi_{t} +  \eta \langle x^{t-1}, 2\ell^{t-1} - \ell^{t-2} \rangle\Big)  \\
= &~  \log \left(\sum_{j\in [n]}\frac{1}{n}\cdot \exp\left(-\eta L^{T-1}(j) -\eta \ell^{T-1}(j) \right) \right) + \sum_{t\in [2:T]}\eta \langle x^{t-1}, 2\ell^{t-1} - \ell^{t-2} \rangle\\
\leq  &~-\min_{j \in [n]}\Big(\eta L^{T-1}(j) +\eta \ell^{T-1}(j) \Big) + \sum_{t\in [2:T]}\eta \langle x^{t-1}, 2\ell^{t-1} - \ell^{t-2} \rangle\\
\leq & -\eta\min_{j\in [n]} L^{T-1}(j) + \eta\sum_{t\in [T-1]}\langle x^{t}, \ell^{t} \rangle +  \eta\sum_{t\in [T-1]}\langle x^{t }, \ell^{t } - \ell^{t-1} \rangle\\
\leq &~ \eta \left(\frac{2\log n}{\eta} + \eta \sum_{t\in [T-1]}\|\ell^t - \ell^{t-1}\|_{\infty}^2 \right) + \eta\sum_{t\in [T-1]}\langle x^{t }, \ell^{t } - \ell^{t-1}\rangle\\
\leq&~2\log n + \eta^2\sum_{t\in [T-1]}\|\ell^t - \ell^{t-1}\|_{\infty}^2 + \eta \sum_{t\in [T-1]}\|\ell^{t } - \ell^{t-1}\|_{\infty}\\
\leq&~ 2\log n + { (\eta+\eta^2) \sum_{t\in [T-1]}\|\ell^{t} - \ell^{t-1}\|_{\infty}}.
\end{align*}
The first step follows from Eq.~\eqref{eq:coarse1} and the second step follows from Eq.~\eqref{eq:potential}. The fifth step follows from Lemma~\ref{lem:optimisticRegret}. This finishes the proof of the lemma. 

\section{Missing proof from Section~\ref{sec:lower-bound}}
\subsection{Case when the learning rate is small}\label{sec:case2}

We handle the case when $\eta\le 64/(c_0\sqrt{T})=O(1/\sqrt{T})$ with the following lemma:

\begin{lemma}
	\label{lem:learning-rate-small}
	Suppose both players run vanilla Hedge on game $G_2=(A,B_2)$ 
	with learning rate $\smash{\eta = O(1/\sqrt{T})}$.
	Then the regret of the first player is at least $\smash{\Omega(\sqrt{T})}$ after $T$ rounds.
\end{lemma}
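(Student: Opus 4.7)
The plan is to observe that $G_2$ is carefully rigged so that the second player plays a constant loss vector, making the first player's environment completely predictable. Since $B_2$ is the all-ones matrix, player 2's loss vector is $\ell_y^t = B_2^T x^t = (1,1)$ in every round; vanilla Hedge therefore leaves $y^t$ equal to the initial distribution $(0.4,0.6)$ for all $t$. Consequently, player 1's loss vector is the constant $\ell_x^t = A (0.4,0.6)^T = (-0.2, 0.2)$, so action $1$ is strictly better every round.

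Next I would write out the Hedge update in closed form. Because $\ell_x^t$ is constant, the ratio $x^t(1)/x^t(2)$ evolves multiplicatively with factor $\exp(0.4\eta)$ per step, giving
\[
\frac{x^t(1)}{x^t(2)} = \frac{0.4}{0.6}\cdot \exp\bigl(0.4\eta(t-1)\bigr) = \tfrac{2}{3}\exp\bigl(0.4\eta(t-1)\bigr).
\]
This ratio stays below $1$, i.e., $x^t(1)\le 1/2 \le x^t(2)$, as long as $t-1 \le \ln(3/2)/(0.4\eta) = \Theta(1/\eta)$. Since $\eta = O(1/\sqrt{T})$, there is a prefix $\mathcal{I}$ of length $\Theta(1/\eta) = \Omega(\sqrt{T})$ during which the player puts more weight on the \emph{worse} action.

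For every such round, a direct calculation gives $\langle x^t,\ell_x^t\rangle = 0.2\bigl(x^t(2)-x^t(1)\bigr)\ge 0$, so the cumulative loss over $\mathcal{I}$ is nonnegative. On the other hand, fixed action $1$ accumulates loss $-0.2\,|\mathcal{I}|$ over $\mathcal{I}$, and action $1$ remains the hindsight-best action on the whole horizon $[T]$ (since the losses outside $\mathcal{I}$ are also $-0.2$ for it). Plugging into the definition of regret,
\[
\regret_T^x \;\ge\; \sum_{t\in\mathcal{I}}\langle x^t,\ell_x^t\rangle \;-\; \sum_{t\in\mathcal{I}}\ell_x^t(1) \;\ge\; 0 \;-\; (-0.2)\,|\mathcal{I}| \;=\; \Omega(1/\eta) \;=\; \Omega(\sqrt{T}),
\]
where the first inequality uses that all contributions to the expected loss are nonnegative while action $1$'s contributions are $-0.2$ in every round, including those outside $\mathcal{I}$.

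There is essentially no obstacle here: the game was designed so that player 2 is frozen and player 1 faces a stationary adversarial instance, and the only quantitative step is the elementary observation that Hedge with learning rate $\eta = O(1/\sqrt{T})$ needs $\Omega(1/\eta)$ rounds to even cross the halfway mark when starting at $(0.4,0.6)$. The one subtlety I would be careful about is the round-indexing convention (whether Hedge sees $\ell^t$ before or after producing $x^t$), but this only shifts the interval $\mathcal{I}$ by one and does not affect the $\Omega(\sqrt{T})$ conclusion.
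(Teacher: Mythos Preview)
Your proposal is correct and follows essentially the same approach as the paper: freeze player~2 at $(0.4,0.6)$ via the all-ones cost matrix $B_2$, observe that player~1 then faces the constant loss vector $(-0.2,0.2)$, and use that Hedge with rate $\eta=O(1/\sqrt{T})$ needs $\Theta(1/\eta)=\Omega(\sqrt{T})$ rounds before $x^t(1)$ overtakes $x^t(2)$, during which each round contributes $\Omega(1)$ to the regret. The only minor imprecision is your justification for discarding rounds outside $\mathcal{I}$: it is not that $\langle x^t,\ell_x^t\rangle\ge 0$ there (it is not), but rather that the per-round regret $\langle x^t,\ell_x^t\rangle-\ell_x^t(1)=0.4\,x^t(2)\ge 0$ is always nonnegative, which is what you need.
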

\begin{proof}
	The loss of player 2 is invariant to the strategy of player 1. Thus her strategy stays at $(0.4, 0.6)$. Hence, for any $t \in [T]$, the loss for player 1 is always $\ell=(-0.2, 0.2)$ and we have\vspace{0.1cm}
	\begin{align*}
	x^t(1) &= \frac{0.4\cdot \exp(0.2\eta t)}{0.4\cdot \exp(0.2\eta t) + 0.6\cdot\exp(-0.2\eta t)} \text{\quad   and   }\\[0.6ex]	x^t(2) &= \frac{0.6\cdot \exp(-0.2\eta t)}{0.4\cdot \exp(0.2\eta t) + 0.6\cdot\exp(-0.2\eta t)}.\\[-2.2 ex]
	\end{align*}
	One can verify that when $t \leq  {1}/{2\eta}$, we have $x^t(1) \leq 0.5 \leq x^t(2)$. Therefore,  the regret is
	\begin{align*}
	\regret_{T}^{x} = \sum_{t\in [T]}\langle x^t, \ell\rangle - \sum_{t\in [T]}\ell(1) \geq \sum_{t=1}^{1/2\eta}\langle x^t, \ell\rangle - \sum_{t=1}^{1/2\eta}\ell(1) \geq  0 + \frac{1}{2\eta}\cdot  0.2 = \Omega(\sqrt{T}).
	\end{align*}
	Thus we complete the proof.
\end{proof}

\subsection{Case when the learning rate is large}\label{sec:case3}

We next work on the case when $\eta\ge 3$. 
Recall that we write $x_t = x^{t}(1)$ and $y_t = y^t(1)$. 

\begin{lemma}
	\label{lem:learning-rate-large}
Suppose both players run vanilla Hedge on game $G_3=(A, B_3)$ with 
  learning rate $\eta\ge 3$
Then the regret of the first player is at least $\Omega(T)$ after $T$ rounds.
\end{lemma}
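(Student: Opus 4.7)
The plan is to exploit the fact that $G_3$ is a symmetric common-interest game ($A = B_3 = B_3^\top$) played from identical initial distributions, so that by induction $x^t = y^t$ for every round $t$. Indeed, if $x^t = y^t$ then $\ell_x^t = A y^t$ and $\ell_y^t = B_3^\top x^t = A x^t$ coincide, so both players apply the same vanilla-Hedge update. Writing $x_t := x^t(1)$, the per-round cost of player $1$ simplifies to $(2x_t - 1)(2 y_t - 1) = (2 x_t - 1)^2$, and since each row of $A$ sums to zero the cumulative loss of any single fixed action is non-positive; therefore $\regret_T^x \geq \sum_{t \leq T}(2 x_t - 1)^2$. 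Reparametrising by the log-odds $z_t := \log\frac{x_t}{1-x_t}$, under which $2 x_t - 1 = \tanh(z_t/2)$, the joint two-player dynamics collapses to the one-dimensional recursion
\[
z_{t+1} = f(z_t) \;:=\; z_t - 2\eta\tanh(z_t/2), \qquad z_1 = \log(2/3),
\]
and the goal becomes $\sum_{t \leq T}\tanh^2(z_t/2) = \Omega(T)$.

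To prove this, I would first record two global features of $f$ when $\eta \geq 3$: the origin is a strongly repelling fixed point ($f'(0) = 1 - \eta \leq -2$), so there is an absolute $\delta > 0$ with $|f(z)| \geq 2|z|$ on $[-\delta,\delta]$; and $f$ maps $[-2\eta, 2\eta]$ into itself, so the orbit is globally bounded. A direct computation on the first two rounds then gives $z_2 = 0.4\eta - \log(3/2) \geq 0.795$ and $|z_3| = |z_2 - 2\eta\tanh(z_2/2)| \geq c_0 > 0$ uniformly in $\eta \geq 3$, ruling out an immediate collapse to $0$. To propagate this past the transient, I would track the Lyapunov function $V_t := z_t^2$, which satisfies
\[
V_{t+1} - V_t \;=\; 4\eta\, u_t\,(\eta u_t - z_t), \qquad u_t := \tanh(z_t/2).
\]
The key identity is that the period-$2$ orbit $\{\pm z^\star\}$ of $f$ is precisely the locus where $\eta u_t = z_t$, i.e., $\tanh(z^\star/2) = z^\star/\eta$. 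Hence $V_t$ is strictly increasing when $|z_t| < z^\star$ and strictly decreasing when $|z_t| > z^\star$. Combined with the local linear stability $|f'(z^\star)| = |1 - \eta\,\mathrm{sech}^2(z^\star/2)| = |1 - \eta + (z^\star)^2/\eta| < 1$ (which one verifies holds for every $\eta \geq 3$), this forces $z_t^2 \to (z^\star)^2$ and hence $u_t^2 \to (u^\star)^2$ with $u^\star := z^\star/\eta$. Since $u^\star$ is monotone increasing in $\eta$ with $u^\star|_{\eta=3}\approx 0.86$, we obtain $u_t^2 \geq 0.7$ for all sufficiently large $t$, yielding $\sum_{t \leq T}\tanh^2(z_t/2) = \Omega(T)$ as required.

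The hardest step will be upgrading the convergence $V_t \to (z^\star)^2$ from a local statement near the attractor to a genuinely global one, uniformly in $\eta \geq 3$. Local linear stability alone does not rule out orbits that first drift close to the nontrivial preimage of $0$ under $f$ (the point $z_0 > 0$ solving $z_0 = 2\eta\tanh(z_0/2)$) and then get mapped near the repelling origin, degrading the lower bound on $u_t^2$ for a long transient. I would address this by constructing an explicit forward-invariant annulus around $\pm z^\star$ of width $\Delta(\eta)$, showing that $f$ is a contraction on this annulus, and checking directly that the transient orbit from $z_1 = \log(2/3)$ enters it within $O(\log\eta)$ rounds; the term-by-term bound $u_t^2 \geq c$ inside the annulus then delivers the $\Omega(T)$ regret.
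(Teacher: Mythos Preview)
Your reduction to the one-dimensional recursion is correct and matches the paper exactly: by symmetry $x^t=y^t$, so $\regret_T^x\ge\sum_{t}(2x_t-1)^2$, and one must show the right-hand side is $\Omega(T)$ uniformly in $\eta\ge 3$.

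The analyses then diverge. The paper does \emph{not} study convergence of the dynamics at all. It introduces the ``flipped'' sequence $a_t$ (so $a_t=x_t$ for $t$ even and $a_t=1-x_t$ for $t$ odd, with $a_0=0.4$) and shows by a one-step induction (their Claim following the lemma) that $a_t\in[\eta e^{-2\eta},\,0.4]$ for every $t$ and every $\eta\ge 3$. This immediately yields $(2x_t-1)^2=(2a_t-1)^2\ge 0.04$ for all $t$, hence regret $\ge 0.04\,T$. In your coordinates this is just the invariant $|z_t|\ge\log(3/2)$ with alternating sign, proved directly rather than via any asymptotic statement.

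Your Lyapunov/convergence route has a concrete problem in the step you yourself flag as hardest. The claim that the orbit enters an annulus around $\pm z^\star$ within $O(\log\eta)$ rounds is false for large $\eta$. From $z_1=\log(2/3)$ one lands at $z_2\approx 0.4\eta$, and thereafter the orbit approximately alternates between $0.4\eta$ and $-1.6\eta$, drifting toward $\pm z^\star\approx\pm\eta$ by only $O(\eta e^{-0.4\eta})$ per period; reaching a fixed-width annulus around $\pm z^\star$ therefore takes $\Theta(e^{0.4\eta})$ rounds, not $O(\log\eta)$. Consistently, your local contraction factor $|f'(z^\star)|=1-\eta+(z^\star)^2/\eta$ tends to $1$ as $\eta\to\infty$, so it yields no uniform rate. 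None of this is fatal to the \emph{conclusion}---throughout the long transient one already has $|z_t|\gtrsim 0.4\eta$ and hence $u_t^2$ close to $1$---but it shows that convergence to the period-$2$ orbit is both unnecessary and the wrong mechanism. What you actually need is exactly the elementary invariant $a_t\le 0.4$ (equivalently $|z_t|\ge\log(3/2)$) that the paper establishes by a direct induction.
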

\begin{proof}
	Intuitively, $(A, B_3)$ is a cooperation game, and it is beneficial for both players if they choose to cooperate on one single action (by playing either $(1,2)$ or $(2,1)$). 
	However, when the learning rate is too large, they actually mismatch in every iterations. 
	Formally, we have\vspace{0.1cm}
	\begin{align*}
	x_{t + 1} = &~\frac{x_t\cdot \exp(\eta(1 - 2y_t))}{x_t\cdot \exp(\eta(1 -2y_t)) + (1 - x_t)\cdot \exp(\eta(2y_t -1 )) } \\[0.4ex]
	=&~ \frac{x_t\cdot \exp(\eta(1 - 2x_t))}{x_t\cdot \exp(\eta(1 -2x_t)) + (1 - x_t)
	\cdot \exp(\eta(2x_t -1 )) }.\\[-2.2 ex]
	\end{align*}
	The second step follows from $x_t = y_t$ for all $t$
	because $A=B_3$ in the game. 
Motivated by this, we define a sequence $a_0,a_1,\ldots$ where $a_0 = x_0 = 0.4$ and 
	\begin{align*}
a_{t + 1} =  \frac{(1 - a_t)\cdot \exp(\eta(2a_t -1))}{a_t\cdot \exp(\eta(1 - 2a_t)) + (1 - a_t)\cdot \exp(\eta(2a_t -1 ))}, \quad\text{for each $t\ge 0$.}
	\end{align*}
Then $a_t = x_t$ if $t$ is even and $a_t = 1- x_t$ when $t$ is odd. 
Furthermore, by Claim~\ref{claim:tech1} below, we have 
$	\eta\exp(-2\eta)\leq a_t \leq 0.4$	
  for all $t$ when $\eta \geq 3$. Hence, we have
	\begin{align*}
	\regret_{T}^{x} \geq \sum_{t\in [T]}\langle x^t, \ell_x^t\rangle = \sum_{t\in [T]}(2x_t - 1)^{2} = \sum_{t\in [T]}(2a_t - 1)^{2} \geq  \Omega(T).
	\end{align*}
This finishes the proof of the lemma.
\end{proof}


\begin{claim}
	\label{claim:tech1}
When $\eta\ge 3$, we have \hspace{0.04cm}$\eta\exp(-2\eta)\leq a_t \leq 0.4$\ \hspace{0.06cm}for all $t\ge 0$. 
\end{claim}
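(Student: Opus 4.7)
The recursion reads $a_{t+1}=f(a_t)$ with $a_0=0.4$, where
$$
f(a) = \frac{(1-a)\exp(2\eta(2a-1))}{a+(1-a)\exp(2\eta(2a-1))}.
$$
The plan is to show, by induction on $t$, that $f$ maps the interval $I:=[\eta e^{-2\eta},\,0.4]$ into itself for every $\eta\ge 3$; the base case $a_0=0.4$ is immediate. As a preliminary I would analyze the shape of $f$ through $H(a):=f(a)/(1-f(a))=((1-a)/a)\exp(2\eta(2a-1))$, whose logarithmic derivative is $4\eta-1/(a(1-a))$. This changes sign at $a^*:=(1-\sqrt{1-1/\eta})/2$ and $1-a^*$, so on $(0,1/2]$ the function $f$ strictly decreases on $(0,a^*]$ and strictly increases on $[a^*,1/2]$, with unique minimum at $a^*$. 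The elementary estimate $1-\sqrt{1-x}\in[x/2,\,x]$ gives $a^*\in[1/(4\eta),\,1/(2\eta)]$, and combined with $4\eta^2 e^{-2\eta}<1$ and $1/(2\eta)<0.4$ for $\eta\ge 3$, this places $a^*$ strictly inside $I$. Hence on $I$, $\max f$ is attained at one of the two endpoints and $\min f=f(a^*)$.

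For the upper bound $f\le 0.4$, I would rearrange the inequality to $h(a):=\ln(2a/(3(1-a)))-2\eta(2a-1)\ge 0$ and verify $h\ge 0$ at the two endpoints of $I$ (noting that $h$ itself is increasing-then-decreasing on $I$, with peak at $a^*$, so its minimum on $I$ is at an endpoint). At $a=0.4$: $h(0.4)=\ln(4/9)+0.4\eta>0$ for $\eta\ge 3$. At $a=\eta e^{-2\eta}$: a short simplification yields $h(\eta e^{-2\eta})=\ln(2\eta/3)-\ln(1-\eta e^{-2\eta})-4\eta^2 e^{-2\eta}$, which is positive at $\eta=3$ and increases with $\eta$, since $\ln(2\eta/3)$ grows while $4\eta^2 e^{-2\eta}$ decays on $[3,\infty)$.

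For the lower bound $f\ge\eta e^{-2\eta}$, it suffices to show $f(a^*)\ge\eta e^{-2\eta}$. Setting $s:=\sqrt{1-1/\eta}$ so that $a^*=(1-s)/2$ and $2a^*-1=-s$, and exploiting the algebraic identity $(1-s)(1+s)=1/\eta$, a direct computation gives
$$
\frac{1}{f(a^*)}=1+\frac{1-s}{1+s}\,e^{2\eta s}=1+\eta(1-s)^2 e^{2\eta s},
$$
so the desired inequality is equivalent to $(\eta(1-s))^2\,e^{-2\eta(1-s)}\le 1-\eta e^{-2\eta}$. Since $1-s\in[1/(2\eta),\,1/\eta]$, substituting $x:=\eta(1-s)\in[1/2,1]$ and using that $x^2 e^{-2x}$ is increasing on $[0,1]$ bounds the left-hand side by $e^{-2}<0.14$, whereas for $\eta\ge 3$ the right-hand side exceeds $0.99$, giving the inequality with room to spare.

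The main obstacle is this lower bound: coarse estimates such as $f(a)\ge(1-a)\exp(2\eta(2a-1))\ge 0.6\,e^{-2\eta}$ lose a factor of $\eta$ and fall short of the target $\eta e^{-2\eta}$. The fix is to evaluate $f$ precisely at its minimizer $a^*$ and use the identity $(1-s)(1+s)=1/\eta$ to recover the missing factor of $\eta$; once both bounds are in hand, the induction closes immediately from $a_0=0.4$.
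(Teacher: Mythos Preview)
Your proof is correct and follows essentially the same approach as the paper: both argue by induction that the iteration map preserves $I=[\eta e^{-2\eta},0.4]$, locate the unique minimizer of $H(a)=\frac{1-a}{a}e^{2\eta(2a-1)}$ on $(0,1/2]$ at the smaller root $a^*$ of $4\eta a(1-a)=1$, verify the upper bound at the two endpoints of $I$, and verify the lower bound at $a^*$. The only difference is in the lower-bound step: the paper uses the cruder estimates $a^*\le 1/(2\eta)$, $1-a^*\ge 2/3$, and $e^{\eta(4a^*-2)}\ge e^{-2\eta}$ to obtain $H(a^*)\ge\tfrac{4}{3}\eta e^{-2\eta}$ directly (and hence $a_{t+1}\ge\eta e^{-2\eta}$), bypassing your more precise computation via $s=\sqrt{1-1/\eta}$.
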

\begin{proof}
	We prove by induction on $t$. The base case holds trivially for $t = 0$. Suppose the inequality holds up to $t$. Then for $t + 1$, we have
	\begin{align*}
	\frac{a_{t + 1}}{1 - a_{t + 1}} = \frac{ 1 - a_t }{a_t}\cdot \exp\big(\eta(4a_t - 2)\big) \triangleq f(a_t).
	\end{align*}
	By simple calculation, we know that $f(a_t)$ takes maximium at $\eta\exp(-2\eta)$ or $0.4$.
	Thus,	\begin{align*}
	\frac{a_{t + 1}}{1 - a_{t + 1}} \leq \max\Big\{f (0.4 ), f(\eta\exp(-2\eta))\Big\} \leq \frac{2}{3},
	\end{align*}
	which implies that $a_{t + 1} \leq  0.4.$
	The second step above follows from
	\begin{align*}
	f (0.4 ) = \frac{3}{2}\cdot \exp( -0.4\eta)\leq \frac{2}{3}, 
	\end{align*}
	using $\eta\ge 3$ and
	\begin{align*}
	f\big(\eta\exp(-2\eta)\big) \leq \frac{1}{\eta}\exp(2\eta) \cdot \exp\big(4\eta^{2}\exp(-2\eta) - 2\eta\big) = \frac{1}{\eta} \cdot \exp\big(4\eta^2 \exp(-2\eta)\big) \leq \frac{2}{3}.
	\end{align*}
	Moreover, $f(a_t)$ takes minimum at the smaller solution $a$ of $4\eta a (1-a) = 1$. Thus,
	\begin{align*}
	\frac{a_{t + 1}}{1 - a_{t + 1}} \geq \frac{1 - a}{a}\cdot \exp\big(\eta (4a - 2)\big) \geq  \frac{4}{3}\cdot \eta\exp(-2\eta),   
	\end{align*}
where the second step used $\exp (\eta (4a - 2))\geq \exp(-2\eta)$, $a \leq 1/{2\eta}$ and $a\le 1/3$.
This shows that $a_{t + 1} \geq \eta\exp(-2\eta)$ using $\eta\ge 3$, and finishes the induction.
\end{proof}

\subsection{Proof of Lemma \ref{lem:kl-lower}}\label{sec:KLdivergence}

Note that the Matching Pennies game $G_1=(A,B_1)$ is zero-sum.
It is known (see~\cite{bailey2018multiplicative}) that the KL divergence of vanilla Hedge in
  zero-sum games is strictly increasing. We give a careful analysis on its increment each round 
  when playing  $G_1$. (Recall that $x^\star=y^{\star}=(0.5,0.5)$.)
\begin{lemma}
	\label{lem:kl}
Suppose both players run vanilla Hedge with $\eta\le 3$ on $G_1$.
Then for each $t\ge 0$, 
	\begin{align*}
&\DKL(x^\star \| x^{t+1})+\DKL(y^\star \| y^{t+1})-\big(\DKL(x^\star \| x^{t })+\DKL(y^\star \| y^{t })\big)\\[0.2ex]
	 &\hspace{1.5cm}\geq~ e^{-7}\eta^2 x_t(1-x_t)(2y_t-1)^2 +e^{-7}\eta^2 y_t(1-y_t)(2x_t-1)^2.  
	\end{align*}
\end{lemma}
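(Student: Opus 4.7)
\textbf{Proof plan for Lemma~\ref{lem:kl}.} The plan is to express the sum of per-round KL-divergence increments as a sum of log-partition functions of the two Hedge updates, use the zero-sum structure of Matching Pennies to cancel all first-order terms, and extract the variance (which matches the right-hand side of the claim) via a quantitative bound on the log-MGF of a bounded random variable.

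First I would compute, for each player separately, the one-step change in KL divergence to the uniform strategy. Since $x^{t+1}(j) = x^t(j)\exp(-\eta\ell_x^t(j))/Z_x^t$ with $Z_x^t := \sum_j x^t(j)\exp(-\eta\ell_x^t(j))$, a direct expansion gives
\begin{align*}
\DKL(x^\star \| x^{t+1}) - \DKL(x^\star \| x^t) \;=\; \eta\,\langle x^\star, \ell_x^t\rangle + \log Z_x^t,
\end{align*}
and analogously for $y$. In $G_1$, $\ell_x^t = (2y_t-1,\,1-2y_t)$ and $\ell_y^t = (1-2x_t,\,2x_t-1)$; each sums to zero, so the linear terms vanish because $x^\star = y^\star = (1/2,1/2)$. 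Hence the target inequality reduces to showing
\begin{align*}
\log Z_x^t + \log Z_y^t \;\geq\; e^{-7}\eta^2\bigl[x_t(1-x_t)(2y_t-1)^2 + y_t(1-y_t)(2x_t-1)^2\bigr].
\end{align*}

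Next, I would view each $\log Z$ as the log-moment-generating function of an explicit two-point random variable. Let $\alpha := \eta(2y_t-1)$ and $\beta := \eta(2x_t-1)$, and let $U_x$ take values $-\alpha,\alpha$ with probabilities $x_t,1-x_t$, while $U_y$ takes values $\beta,-\beta$ with probabilities $y_t,1-y_t$; then $Z_x^t = \E[e^{U_x}]$ and $Z_y^t = \E[e^{U_y}]$. A one-line computation gives $\E[U_x] = \alpha(1-2x_t)$ and $\E[U_y] = \beta(2y_t-1)$, whose sum is identically zero---this is precisely the zero-sum cancellation that kills the would-be first-order contribution. Furthermore
\begin{align*}
\Var(U_x) = 4\eta^2 x_t(1-x_t)(2y_t-1)^2, \qquad \Var(U_y) = 4\eta^2 y_t(1-y_t)(2x_t-1)^2,
\end{align*}
which are (up to a factor of $4$) exactly the quantities on the right-hand side of the claim.

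Finally, I would invoke a quantitative lower bound of the form $\log\E[e^{U}] \geq \E[U] + c\,\Var(U)$ for bounded $U$. Writing $\log Z_x^t = \E[U_x] + \log\E[e^{U_x - \E[U_x]}]$ and using the elementary pointwise inequality $e^y \geq 1 + y + c_1 y^2$ valid for all $|y| \leq 2\eta \leq 6$ with $c_1 := (e^{-6}-1+6)/36$, one obtains $\E[e^{U_x - \E[U_x]}] \geq 1 + c_1\Var(U_x)$; then concavity of $\log(1+v)$ gives $\log(1+v) \geq c_2 v$ on $[0,V]$ for an explicit upper bound $V$ on $c_1\Var(U)$ and $c_2 := \log(1+V)/V$. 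Adding the two resulting bounds and using $\E[U_x]+\E[U_y]=0$ yields $\log Z_x^t + \log Z_y^t \geq c_1 c_2\bigl(\Var(U_x)+\Var(U_y)\bigr)$, and a coarse numerical check ($4 c_1 c_2 > e^{-7}$) gives the claim. The main obstacle is purely bookkeeping: making the constants explicit enough to certify the (very loose) $e^{-7}$ factor without an algebraic slip among the three numerical constants (the factor $4$ in the two-point variance, the constant $c_1$ from the pointwise exponential inequality, and the constant $c_2$ from concavity of $\log$). A cleaner but longer alternative is to Taylor-expand $\log Z_x^t$ directly around $\alpha=0$ and collect quadratic terms, but the log-MGF formulation makes the cancellation---and hence the sign of the bound---most transparent.
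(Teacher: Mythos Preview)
Your proposal is correct and follows essentially the same outline as the paper: write the KL increment as $\eta\langle x^\star,\ell\rangle + \log Z$ (the first term vanishes since $x^\star$ is uniform and the Matching Pennies loss vectors sum to zero), lower bound each $\log Z$ by a mean plus a quadratic/variance term, and then use the zero-sum structure so that the two mean terms $\E[U_x]=\eta(2y_t-1)(1-2x_t)$ and $\E[U_y]=\eta(2x_t-1)(2y_t-1)$ cancel. The only difference is the technical tool in the second step: the paper invokes $e^{-6}$-strong convexity of $u\mapsto-\log u$ on $(0,e^3)$ in one line to obtain
\[
\log\bigl(x_t e^{-\alpha}+(1-x_t)e^{\alpha}\bigr)\;\ge\;\alpha(1-2x_t)+\tfrac{1}{2e^6}\,x_t(1-x_t)\bigl(e^{-\alpha}-e^{\alpha}\bigr)^2,
\]
whereas you achieve the same quantitative Jensen bound via the two-step route $e^y\ge 1+y+c_1y^2$ and $\log(1+v)\ge c_2 v$. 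Both yield constants comfortably exceeding $e^{-7}$; the strong-convexity route is a touch cleaner (one inequality instead of two), while your log-MGF framing makes the mean-cancellation and the role of $\Var(U)$ slightly more explicit.
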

\begin{proof}
	Focusing on the first player, we have
	\begin{align}
	&\DKL(x^{\star}\| x^{t+1}) - \DKL(x^{\star}\| x^{t}) \notag\\[0.3ex]
	&=~ \sum_{i\in [2]}x^{\star}(i)\cdot \log \left(\frac{x^{\star}(i)}{x^{t+1}(i)}\right) - \sum_{i\in [2]}x^{\star}(i)\cdot \log\left( \frac{x^{\star}(i)}{x^{t}(i)}\right)\notag\\
	&=~ \sum_{i\in [2]}x^{\star}(i)\cdot  \log \left(\frac{x^t(i)}{x^{t+1}(i)}\right)\notag\\
	&=~\sum_{i\in [2]}x^{\star}(i)\cdot \eta \ell^t(i) + \sum_{i\in [2]}x^{\star}(i)\cdot \log \left(\sum_{j\in [2]} x^{t}(j)\cdot \exp(-\eta \ell^t(j))\right)\notag\\
	 &=~\log \left(\sum_{j\in [2]} x^{t}(j)\cdot \exp(-\eta \ell^t(j))\right)\notag\\[0.4ex]
	 &=~\log \Big(x_t\cdot  \exp(-\eta (2y_t - 1)) + (1 - x_t)\cdot \exp(-\eta (1 - 2y_t))\Big)\notag\\
	 &\geq~x_t \cdot (-\eta (2y_t - 1)) + (1 - x_t) \cdot (-\eta (1 - 2y_t)) + \frac{1}{2e^6}x_t (1 - x_t)\left(e^{-\eta (2y_t - 1)} - e^{-\eta (1 - 2y_t)}\right)^2\notag\\
	 &\geq~ \eta(2y_t-1)(1-2x_t) +  e^{-7}\eta^2 x_t (1 - x_t)(2y_t - 1)^2.\label{eq:lower3}
	\end{align}
	The third step follows from the updating rule of vanilla Hedge. The fourth step uses $x^{\star}(1) = x^{\star}(2) = 0.5$ and $\ell^{t}(1) + \ell^{t}(2) = (2y_t - 1) + (1 - 2y_t) = 0$. The sixth step uses the fact that   $f(x) = -\log x$ is $ {e^{-6}}$-strongly convex on $(0, e^3)$.
	Similarly, we can prove
	\begin{align}
	\DKL(y^{\star}\| y^{t+1}) - \DKL(y^{\star}\| y^{t}) 
	\geq \eta(2x_t-1)(2y_t-1) +   {e^{-7}}\eta^2 y_t (1 - y_t)(2x_t - 1)^2.\label{eq:lower4}
	\end{align}
	The lemma follows by combining (\ref{eq:lower3}) and (\ref{eq:lower4}). 
\end{proof}
	
We are now ready to prove Lemma \ref{lem:kl-lower}.	
	\begin{proof}[Proof of Lemma \ref{lem:kl-lower}]
		We first prove that within $O(1/{\eta^2})$ steps, 
		  the KL divergence $\DKL(x^\star\|x^t)+\DKL(y^{\star}\| y^t)$ 
		  becomes at least $20$. The proof follows directly from Lemma~\ref{lem:kl}, as for any $t$ with $\DKL(x^\star\|x^t)+\DKL(y^{\star}\| y^t)\le 20$, we have
		\begin{align}
&\DKL(x^\star \| x^{t+1})+\DKL(y^\star \| y^{t+1})-\big(\DKL(x^\star \| x^{t })+\DKL(y^\star \| y^{t })\big)\notag\\[0.2ex]
	 &\hspace{1.5cm}\geq~ e^{-7}\eta^2 x_t(1-x_t)(2y_t-1)^2 +e^{-7}\eta^2 y_t(1-y_t)(2x_t-1)^2\ge \Omega(\eta^2).\label{hehehaha}
		\end{align}
		The second step follows from the fact that both
		  $x_t$ and $y_t$ are bounded away from $0$ and $1$ given
		  the divergence at $t$ is at most $20$;
		  it also used 
		  $\max\{|2x_t - 1|, |2y_t - 1|\} \geq 0.2$ given that the divergence
		  is strictly increasing. 
		  
Let $T_0=O(1/\eta^2)$ be the first time when the divergence becomes at least $20$.
If $T/2\le T_0$, it follows from (\ref{hehehaha}) that 
  the divergence at $T$ is  $\smash{\Omega(T\eta^2)=\Omega(\sqrt{T}\eta)}$ using the
  assumption that $\smash{\eta\ge 16/\sqrt{T}}$.
So we focus on the case $T_0\le T/2$ and thus, $T=T_0+L$ with $L\ge T/2$.
We prove

\begin{claim} 
			\label{claim:kl}
		At round $t=T_0+\tau^2$, the KL divergence has
		  $\DKL(x^{\star}\| x^t)+\DKL(y^{\star}\|y^t)\ge 10^{-10}\tau\eta$. 
		\end{claim}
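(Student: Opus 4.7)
Proof plan. I would prove Claim~\ref{claim:kl} by induction on $\tau$, writing $\Phi_t := \DKL(x^\star\|x^t) + \DKL(y^\star\|y^t)$. The base case $\tau = 0$ is immediate from the paragraph above, which established $\Phi_{T_0}\geq 20 > 0$. For the inductive step, assuming $\Phi_{T_0+\tau^2}\geq 10^{-10}\tau\eta$, it suffices to show $\Phi_{T_0+(\tau+1)^2} - \Phi_{T_0+\tau^2} \geq 10^{-10}\eta$. Since $\Phi$ is monotonically non-decreasing by Lemma~\ref{lem:kl}, this is vacuous whenever $10^{-10}\tau\eta\leq 20$, so I only need to work in the regime $\tau\eta = \Omega(1)$, and I may further assume $\Phi_{T_0+\tau^2}\leq 10^{-10}(\tau+1)\eta$ (or else the inductive step is trivial).

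Log-odds reparametrization. The cleanest way to analyze the dynamics is to pass to log-odds coordinates $u_s := \log(x_s/(1-x_s))$ and $v_s := \log(y_s/(1-y_s))$, in which vanilla Hedge on $G_1$ becomes
\[
u_{s+1}-u_s \;=\; -2\eta\tanh(v_s/2), \qquad v_{s+1}-v_s \;=\; 2\eta\tanh(u_s/2),
\]
so the state moves at speed at most $2\sqrt{2}\eta$ per round. Using the identities $(2x-1)^2 = \tanh^2(u/2)$ and $x(1-x) = 1/(4\cosh^2(u/2))$, Lemma~\ref{lem:kl} rewrites as
\[
\Phi_{s+1}-\Phi_s \;\geq\; \tfrac{e^{-7}\eta^2}{4}\left(\tfrac{\tanh^2(v_s/2)}{\cosh^2(u_s/2)} + \tfrac{\tanh^2(u_s/2)}{\cosh^2(v_s/2)}\right),
\]
and $\DKL(x^\star\|x) = \log\cosh(u/2)\leq |u|/2$ gives $|u|+|v|\leq 2\Phi + O(1)$. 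The working assumption $\Phi_{T_0+\tau^2}\leq 10^{-10}(\tau+1)\eta$ therefore forces the radius $R := |u_{T_0+\tau^2}|+|v_{T_0+\tau^2}|$ to satisfy $R = O(\tau\eta)$.

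Orbit-counting and conclusion. The vector field above rotates the state counter-clockwise about the origin (e.g.\ $v > 0 \Rightarrow u\downarrow$, $u < 0 \Rightarrow v\downarrow$), so a full orbit at radius $R$ takes $\Theta(R/\eta)$ rounds and crosses each axis twice. During a single axis crossing---say $u_s$ sweeping through $[-1,1]$ at speed $\Theta(\eta)$---we spend $\Theta(1/\eta)$ consecutive rounds, and the first summand of the per-round lower bound is $\Omega(1)$ because $\cosh^2(u_s/2)=O(1)$ and $|v_s|$ stays bounded away from $0$ throughout (since $v$ only drifts by $O(1)$ over $\Theta(1/\eta)$ rounds). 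Hence one crossing contributes $\Omega(\eta^2)\cdot\Omega(1/\eta) = \Omega(\eta)$ to $\Phi$. Because $R = O(\tau\eta)$ and the window length $2\tau+1$ satisfies $2\tau+1 = \Omega(R/\eta)$, the window contains at least one complete axis crossing, delivering the required $\Omega(\eta)$ gain and closing the induction once the constants are matched to the factor $10^{-10}$.

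Main obstacle. The technical difficulty is handling all positions $(u_{T_0+\tau^2},v_{T_0+\tau^2})$ uniformly. Two edge cases are delicate: (i) the ``small radius'' regime $R = O(1)$, where every round is productive with per-round gain $\Omega(\eta^2)$ and the bound reduces to $(2\tau+1)\eta^2 \gtrsim \eta$ using $\tau\eta \gtrsim 1$; and (ii) the ``deep corner'' regime where $|u|,|v|$ are both large and both summands in the per-round bound are initially exponentially small, so one must track the deterministic drift $u_{s+1}-u_s \approx \pm 2\eta$ until the trajectory enters a productive axis strip. The key estimate in the corner case is that integrating $1/\cosh^2(u_s/2)$ along the deterministic descent of $u_s$ (or the symmetric descent of $v_s$) already telescopes to $\Omega(1/\eta)$, matching the axis-crossing bound. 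Pinning down the explicit constant $10^{-10}$ is then just a matter of bookkeeping against the $e^{-7}$, $\tanh^2(1/2)$, and $\cosh^2(1/2)$ factors that appear along the way.
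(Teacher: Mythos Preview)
Your plan is correct and follows essentially the same route as the paper: induction on $\tau$, with the inductive step showing that the trajectory reaches a ``productive strip'' (your axis crossing $|u|\le 1$; the paper's moment $x_t\le 1/2$ after first driving $y_t\ge 3/4$) within $O(R/\eta)\ll 2\tau+1$ rounds---where your bound $R=O(\tau\eta)$ from the working assumption $\Phi\le 10^{-10}(\tau+1)\eta$ is exactly the paper's assumption $1-x_{t_1}\ge \exp(-(k+1)\eta/20)$---and then harvests $\Omega(\eta)$ over the $\Theta(1/\eta)$ rounds spent there via Lemma~\ref{lem:kl}. The log-odds coordinates and the telescoping-integral view of the corner case are a clean repackaging that makes the rotational geometry explicit, but the decomposition (the paper's Phase~1, Phase~2, and productive window) and the constants line up one-to-one with your orbit/axis-crossing argument.
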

		  
Setting $\tau=\sqrt{T/2}$ so that $T_0+\tau^2\le T$, we have
		  $$
		  \DKL(x^{\star}\| x^T)+\DKL(y^{\star}\|y^T)\ge \Omega(\sqrt{T}\eta),
		  $$
and this finishes the proof of the lemma.\end{proof}
\begin{proof}[Proof of Claim \ref{claim:kl}]				
		We proceed to use induction on $\tau$.
		The cases with $\tau\le 16/\eta$ holds trivially as the KL divergence at $T_0$ is already at least 20. For the induction step, suppose the claim holds up to $k$
		for some $k\ge 64/\eta$ at time
		$t_0=T_0+k^2$. 
		We show that at time $T_0+(k + 1)^2$ the KL divergence is at least $\smash{10^{-10}(k + 1)\eta}$.
		Without loss of generality, we assume that 
		  $x_{t_0}, y_{t_0} \geq 0.5$; the other three cases 
		  can be handled similarly. 
		In this region,  $x_t$ with $t=t_0+1,\ldots$ will keep decreasing and $y_t$ 
		will keep increasing, until the moment when $x_t$ drops below $0.5$.
		
		Let $t_2$ denote the first round $t_2>t_0$ such that $x_t\le 0.5$.
We first show that it will take no more than $k/2$ rounds for $x_t$ to drop below $0.5$:
  $t_2-t_0\le k/2$.
To this end, we use $t_1$ to denote the first round $t_1\ge t_0$ such that $y_t\ge 3/4$
  and note that $t_1\le t_2$ (since otherwise at $t=t_2-1$, we have $1/2\le y_t\le 3/4$
  and $1/2\le x_t\le e^6$ in order for $x_t$ to go below $1/2$ with $\eta\le 3$
  in the next round;
  this contradicts with the fact that the KL divergence is at least $20$ after $T_0$).

We break the proof of $t_2-t_0\le k/2$ into two phases:
  $t_1-t_0\le k/4$ and $t_2-t_1\le k/4$.

		{\bf Phase 1.} First we prove that it takes no more than $k/4$ steps for $y_t$ to get larger than $3/4$. 
To this end, we notice that for all $t\in [t_0:t_1-1]$, we have 
  $y_t\le 3/4$ and thus, $x_t\ge 3/4$ since the KL divergence is at least $20$. 
During all these rounds the loss vector $\ell_y^t$ of the second player satisfies
  $\smash{\ell^t_{y}(1) \leq -3/4 + 1/4 \leq -0.5}$ and $\smash{\ell^t_{y}(2) \geq 0.5}$. Thus we have (using $\smash{0.5\le y_{t_0}\le y_{t_1-1}\le 3/4}$)
		\begin{align*}
		3\ge \frac{y_{t_1-1}}{1 - y_{t_1-1}} \geq \exp\big(\eta(t_1-t_0-1)\big)\cdot \frac{y_{t_0}}{1 - y_{t_0}} \geq \exp\big(\eta(t_1-t_0-1)\big).
		\end{align*} 
		Thus $t_1-t_0\le (2/\eta)+1\le k/4$ using $k\ge 64/\eta$ and $\eta\le 3$.
		
		{\bf Phase 2.} Next we prove that, starting from $t_1$, it takes less than ${k}/4$ steps for $x_t$ to drop below $0.5$. Note that for each $t\in [t_1:t_2-1]$, 
		the loss vector $\ell^t_x$ of the first player satisfies
        $\smash{\ell^t_{x}(1) \geq  0.5}$ and $\smash{\ell^t_{x}(2)\leq-0.5}$. 
Moreover, we assume without loss of generality that 
  $\smash{1 - x_{t_1} \geq \exp(-(k+1)\eta/20)}$; otherwise
    the KL divergence at $t_1$ is already bigger than $10^{-10}(k + 1)\eta$ and we are   done.
    Therefore, 
		\begin{align*}
		1\le \frac{x_{t_2-1}}{1 - x_{t_2-1}} \leq \exp\big(-\eta(t_2-t_1-1)\big)\cdot \frac{x_{t_1}}{1 - x_{t_1}} \leq 
		\exp\big(\eta (-(t_2-t_1-1) + (k + 1)/20)\big)
		\end{align*}
		Thus $t_2-t_1\ge 1+(k+1)/20\le k/4$ using $k\ge 64/\eta\ge 64/3$.

Now we are at time $t_2$ and we examine the next 
  $R=3/\eta\le k/2$ rounds $[t_2:t_2+R]$;
  these are the rounds where we will gain a lot in the KL divergence. 
Given that $x_{t_2}$ just dropped below $1/2$,
  we have $x_{t_2}\ge 0.5\cdot \exp(-2\eta)$ and thus, for every $t\in [t_2:t_2+R]$, 
  $$
  x_t\ge x_{t_2}\cdot \exp(-2\eta \cdot R)\ge 0.5\cdot e^{-12}.
  $$
Consequently, we have \vspace{0.1cm}
		\begin{align*}
		&\hspace{-1cm}\big(\DKL(x^{\star}\| x^{t_2+R})+\DKL(y^\star\| y^{t_2+R})\big)
		- 
		\big(\DKL(x^{\star}\| x^{t_2})+\DKL(y^{\star}\| y^{t_2})\big) \\[0.5ex]
		\geq &~ \sum_{t= t_2}^{t_2+R-1}{e^{-7}}\eta^2 x_{t}(1-x_t)(2y_t-1)^2 +e^{-7}\eta^2 y_t(1-y_t)(2x_t-1)^2\\
		\geq &~\sum_{t = t_2}^{t_2+R - 1}e^{-7}\eta^2 x_{t}(1-x_t)(2y_t-1)^2 
		\geq~ \frac{3}{\eta}\cdot e^{-7}\eta^2\cdot \frac{1}{4}e^{-12}\cdot \frac{1}{4}
		\geq~ 10^{-10}\eta.
		\end{align*}
		So we conclude that after at most $k/4+k/4+k/2 = k$ steps, the KL divergence increase at least $10^{-10}\eta$. Thus at time $T_0+ k^2 + k \leq T_0+(k + 1)^2$, the KL divergence is at least $10^{-10}k\eta + 10^{-10}\eta$ $= 10^{-10} (k+1)\eta$. 
		This finishes the induction and the proof of the claim.
	\end{proof}

\section{Missing proof from Section~\ref{sec:correlated}}\label{appendixsec}
\subsection{Proof of Lemma \ref{lem:correlated-equilibrium1} }\label{sec:lem:correlated-equilibrium1}
 
Fix any swap function $\phi:[n]\rightarrow [n]$. 
By Lemma~\ref{lem:optimisticRegret}, every $\ALG_j$ achieves low regret. Thus, 
\begin{align}
\label{eq:correlated5}
\sum_{t\in [T]} \langle q^{t}_{j}, x^t(j)\ell^t\rangle \leq \sum_{t\in [T]}x^t(j)\cdot \ell^t(\phi(j)) + \frac{2\log n}{\eta} + \eta \sum_{t\in [T]} \|x^t(j)\ell^t - x^{t-1}(j)\ell^{t-1}\|_{\infty}^2,
\end{align}
where we used $x^t=Q^tx^t$, set $\ell^0=\mathbf{0}$ and $x^0=\mathbf{1}/n=x^1$. 
Consequently, we have
\begin{align}
    \sum_{t\in [T]}\langle x^{t},  \ell^{t}\rangle &= \sum_{t\in [T]}\langle x^{t} Q^{t}, \ell^{t}\rangle
    = \sum_{t\in [T]}\sum_{j\in [n]}\langle x^{t}(j)q^{t}_{j},\ell^t\rangle
    = \sum_{j\in [n]}\sum_{t\in [T]}\langle q^{t}_{j},x^{t}(j) \ell^t\rangle\notag\\[-0.5ex]
    &\leq \sum_{j\in [n]}\left(\sum_{t\in [T]} x^t(j) \cdot \ell^t(\phi(j)) + \frac{2\log n}{\eta} + \eta \sum_{t\in [T]} \| x^t(j)\ell^t - x^{t-1}(j)\ell^{t-1}\|_{\infty}^2 \right)\notag\\
    &=\sum_{t\in [T]}\sum_{j\in [n]}x^t(j)\cdot \ell^t(\phi(j)) + \frac{2n\log n}{\eta} + \eta \sum_{t\in [T]}\sum_{j\in [n]}\|x^t(j) \ell^t - x^{t-1}(j)\ell^{t-1}\|_{\infty}^2 \notag 
\end{align}
where the first inequality follows from \eqref{eq:correlated5}.
Furthermore, we have (using $\|\ell^t\|_\infty\le 1$ and $\|x^t\|_1=1$)
\begin{align}
    \sum_{j\in [n]}\|x^t(j)\ell^t - x^{t-1}(j)\ell^{t-1}\|_{\infty}^2 
    &\leq \sum_{j\in [n]}\Big(\|x^t(j)\ell^t - x^{t-1}(j)\ell^t\|_{\infty} + \|x^{t-1}(j)\ell^t - x^{t-1}(j)\ell^{t-1}\|_{\infty}\Big)^2\notag\\
    &\leq 2\sum_{j\in [n]}\|x^t(j)\ell^t - x^{t-1}(j)\ell^t\|_{\infty}^2 + 2\sum_{j\in [n]}\|x^{t-1}(j)\ell^t - x^{t-1}(j)\ell^{t-1}\|_{\infty}^2\notag\\
    &=2\sum_{j\in [n]}\left(x^t(j) - x^{t-1}(j)\right)^2 \|\ell^t\|_{\infty}^2 + 2\sum_{j\in [n]}(x^{t-1}(j))^2\|\ell^t - \ell^{t-1}\|_{\infty}^2\notag\\
    &= 2\Big(\|x^t - x^{t-1}\|_{2}^{2}\cdot \|\ell^t\|_{\infty}^2 + \|x^{t-1}\|_{2}^2\cdot  \|\ell^t - \ell^{t-1}\|_{\infty}^2\Big)\notag\\
    &\leq  2\Big(\|x^t - x^{t-1}\|_{1}^{2} + \|\ell^t - \ell^{t-1}\|_{\infty}^2\Big)\notag
\end{align}
We can combine all these inequalities (and note that $x^0=x^1$) to  finish the proof of the lemma.
 
\subsection{Proof of Lemma \ref{main-technical-lemma}}\label{lemma53}

We start the proof of Lemma \ref{main-technical-lemma}
  with the following definition. 
\begin{definition}
	\label{def:markov-multiplicative}
Given Markov chains $Q, Q' \in \R^{n\times n}$, we say $Q'$ is $(\eta_1, \ldots, \eta_n)$-approximate to $Q$ if 
	$(1 - \eta_i)q'_{i,j} \leq q_{i,j} \leq  (1 + \eta_i)q'_{i,j} $
	 for every $i, j \in [n]$,
	where we write $Q=(q_{i, j})$ and $Q'=(q_{i,j}')$. 
\end{definition}

We are ready to state our perturbation analysis on ergodic\footnote{Note that
  $Q^t$ used in \BMHedge\ is always ergodic.} Markov chains. 
\begin{lemma}
	\label{lem:perturbation}
	Given two ergodic Markov chains $Q$ and $Q'$, where $Q'$ is $(\eta_1, \ldots, \eta_n)$-approximate to $Q$, the stationary distribution $p, p'$ of $Q$ and $Q'$, respectively, satisfy $\|p - p'\|_1 \leq 8\sum_{i=1}^{n}\eta_i$.
\end{lemma}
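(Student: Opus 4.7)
My plan is to invoke the Markov chain tree theorem (sometimes called the Matrix-Tree theorem for Markov chains), which gives a combinatorial formula for the stationary distribution that interacts very cleanly with multiplicative perturbations of the transition probabilities. Recall that for an ergodic chain $Q = (q_{i,j})$, the stationary distribution satisfies $p_i = w_i / W$, where $w_i = \sum_{T \in \mathcal{T}_i} \prod_{(j \to k) \in T} q_{j,k}$, $\mathcal{T}_i$ is the set of directed spanning trees on $[n]$ with all edges oriented toward the root $i$, and $W = \sum_i w_i$. The crucial structural fact is that every $T \in \mathcal{T}_i$ contains exactly one outgoing edge from each $j \neq i$, so the product defining its weight is indexed precisely by the $n-1$ vertices $j \neq i$.

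Using this structure, I will compare $w(T) = \prod_{j \neq i} q_{j,\pi_T(j)}$ with $w'(T) = \prod_{j \neq i} q'_{j,\pi_T(j)}$ edge by edge. Definition~\ref{def:markov-multiplicative} gives $(1-\eta_j) q'_{j,k} \leq q_{j,k} \leq (1+\eta_j) q'_{j,k}$, so
\begin{equation*}
\prod_{j \neq i}(1-\eta_j)\cdot w'(T) \;\leq\; w(T) \;\leq\; \prod_{j \neq i}(1+\eta_j)\cdot w'(T).
\end{equation*}
Setting $A := \prod_j(1-\eta_j)$ and $B := \prod_j(1+\eta_j)$, the omitted factors only increase the lower and decrease the upper envelope, so $A\cdot w'(T) \leq w(T) \leq B\cdot w'(T)$ uniformly in $T$ and $i$. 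Summing over $T \in \mathcal{T}_i$ yields $A w'_i \leq w_i \leq B w'_i$, and summing once more over $i$ gives $A W' \leq W \leq B W'$. Combining these, the ratio of stationary probabilities at each state satisfies
\begin{equation*}
\frac{A}{B} \;\leq\; \frac{p_i}{p'_i} \;=\; \frac{w_i/W}{w'_i/W'} \;\leq\; \frac{B}{A}.
\end{equation*}

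It then remains to translate this multiplicative closeness into an additive $\ell_1$ bound. Writing $\eta := \sum_i \eta_i$, I may assume $\eta \leq 1/4$ (otherwise $8\eta \geq 2 \geq \|p-p'\|_1$ and there is nothing to prove), so each $\eta_i \leq 1/2$. Elementary estimates $\ln(1-x) \geq -2x$ on $[0,1/2]$ and $\ln(1+x) \leq x$ give $A \geq e^{-2\eta}$ and $B \leq e^\eta$, hence $B/A \leq e^{3\eta} \leq 1 + 8\eta$ for $\eta \leq 1/4$ (with some slack). Since $B/A - 1 \geq 1 - A/B$, the ratio bound implies $|p_i - p'_i| \leq (B/A - 1)\, p'_i$, and summing over $i$ gives $\|p - p'\|_1 \leq B/A - 1 \leq 8\eta$ as required.

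The main obstacle, as far as I see it, is conceptual rather than calculational: one has to notice that spanning in-trees are the right combinatorial object because their weights are naturally \emph{products} indexed by vertices, so that the per-vertex multiplicative perturbation factors $(1\pm\eta_j)$ factor out cleanly regardless of the tree structure. Additive perturbation analyses of stationary distributions do not share this property and typically incur dependence on the spectral gap of $Q$ (as the authors note), which would be fatal in the \BMHedge{} application. The only routine pieces are then the vertex-vs-tree bookkeeping and the final scalar inequality $e^{3\eta} - 1 \leq 8\eta$ for $\eta \leq 1/4$.
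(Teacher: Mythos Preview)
Your proposal is correct and follows essentially the same approach as the paper: both invoke the Markov chain tree theorem, exploit the fact that each in-tree contributes exactly one outgoing edge per non-root vertex so the multiplicative perturbation factors $\prod_j(1\pm\eta_j)$ pull out uniformly, and reduce to the trivial case when $\sum_i\eta_i$ is large. The only cosmetic difference is that you bound the ratio $p_i/p'_i\in[A/B,\,B/A]$ directly and sum, whereas the paper splits $|\Sigma_i/\Sigma-\Sigma'_i/\Sigma'|$ via a triangle inequality before summing; both routes land comfortably inside the stated constant $8$.
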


The proof of Lemma~\ref{lem:perturbation} relies on the classical Markov chain tree theorem 
  (see \cite{anantharam1989proof}). To state it we need the following definition.

\begin{definition}
	\label{def:rooted-tree}
	Suppose $Q$ is an ergodic Markov chain and $G=(V, E)$ with
	$V=[n]$ is the weighted directed graph associated with $Q$.
We say a subgraph $T$ of $G$ is a  \emph{directed tree rooted at} $i\in [n]$ if
  (1) $T$ does not contain any cycles and (2) Node $i$ has no outgoing edges, while every other node $j\in [n]$ has exactly one outgoing edge.
	 For each node $i \in [n]$, we write $\mathcal{T}_{i}$ to denote the set of all directed trees rooted at node $i$. 
	We further define 
	\begin{align*}
	\Sigma_i = \sum_{T\in \mT_i}\prod_{(a , b ) \in T}q_{a,  b }\quad\text{and}\quad
	\Sigma=\sum_{i\in [n]} \Sigma_i,
	\end{align*}
	 i.e., the weight of $T$ is the product of its edge weights and
	 $\Sigma_i$ is the sum of weights of trees  in $\mT_i$.
\end{definition}
We can now formally state the Markov chain tree theorem.

\begin{theorem}[Markov chain tree theorem; see~\cite{anantharam1989proof}]
	\label{lem:Markov-chain-tree}
	Suppose $Q$ is an erogidc Markov chain and $p$ is its stationary distribution.
	Then  we have $p_i = \Sigma_i / \Sigma$ for every $i\in [n]$.
\end{theorem}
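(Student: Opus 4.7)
The plan is to verify that the vector $(\Sigma_1, \ldots, \Sigma_n)$ is a left eigenvector of $Q$ with eigenvalue $1$, i.e., that $\sum_{j \in [n]} \Sigma_j q_{j,i} = \Sigma_i$ for every $i \in [n]$. Once this is shown, ergodicity of $Q$ guarantees both that the stationary distribution is unique (up to normalization) and that $\Sigma_i > 0$ for every $i$ (because strong connectivity implies at least one directed in-tree rooted at each node exists with positive weight). Normalizing the eigenvector then gives $p_i = \Sigma_i / \Sigma$ as claimed.

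I would split the eigenvector identity into two steps. First, as a purely combinatorial identity in the formal symbols $q_{a,b}$, I would establish
\begin{align*}
\sum_{j \in [n]} \Sigma_j\, q_{j,i} \;=\; \sum_{j \in [n]} \Sigma_i\, q_{i,j}.
\end{align*}
Second, using the stochasticity $\sum_j q_{i,j} = 1$ (the only place row-sum-to-one is used), the right-hand side collapses to $\Sigma_i \sum_j q_{i,j} = \Sigma_i$, yielding the desired equation. Note that both sides of the formal identity are homogeneous polynomials of degree $n$ in the $q_{a,b}$'s, so the equation has a chance of being a true polynomial identity, whereas $\Sigma_i$ (degree $n-1$) does not match $\sum_j \Sigma_j q_{j,i}$ (degree $n$) directly.

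To prove the formal identity, I would interpret both sides as enumerating a common class of subgraphs. Let $\mathcal{U}_i$ be the set of subgraphs of $G$ in which every node has exactly one outgoing edge and whose unique directed cycle passes through $i$; any such $U$ consists of a cycle $C$ through $i$ with in-trees hanging off the nodes of $C$. For the left-hand side, a tree $T \in \mathcal{T}_j$ together with the edge $(j,i)$ yields exactly an element of $\mathcal{U}_i$: adding $(j,i)$ to $T$ closes the unique path from $i$ to $j$ in $T$ into a cycle through $i$, and creates no other cycle because the remainder of $T$ still funnels into this cycle. The correspondence is a bijection because, conversely, the unique edge of $C \subseteq U$ whose head is $i$ has some tail $j$, and removing it from $U$ leaves a tree in $\mathcal{T}_j$. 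Hence $\sum_{j} \Sigma_j q_{j,i} = \sum_{U \in \mathcal{U}_i} \prod_{e \in U} q_e$. Symmetrically, the right-hand side enumerates $\mathcal{U}_i$ via the unique edge of $C$ whose tail is $i$, giving the same weighted sum.

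The main obstacle is making the bijection airtight, in particular ruling out extraneous cycles and handling the $j=i$ term cleanly. I would argue the "one new cycle" property from the tree structure of $T \in \mathcal{T}_j$: every node $k \neq j$ has a unique outgoing edge in $T$, so iterating the out-edge map from any $k$ eventually reaches $j$ (no cycles in $T$); after adding $(j,i)$, the only way to form a cycle is to follow $j \to i \to \cdots \to j$ via the unique $i$-to-$j$ path in $T$, giving exactly one cycle, which contains $i$. The $j=i$ term contributes $\Sigma_i q_{i,i}$ on both sides of the polynomial identity (corresponding to elements of $\mathcal{U}_i$ whose cycle is the self-loop at $i$), so it matches trivially. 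With these details verified, ergodicity closes out the argument via uniqueness of the stationary distribution.
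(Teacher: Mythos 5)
The paper does not actually prove this statement: it is quoted as a classical result with a pointer to the literature (Anantharam--Tsoucas), and is used as a black box in the proof of Lemma~\ref{lem:perturbation}. So there is no in-paper argument to compare against; what you have written is a self-contained proof of the cited theorem, and it is the standard combinatorial one. Your argument is correct. The key points all check out: a directed tree rooted at $j$ in the paper's sense is an acyclic subgraph in which every node other than $j$ has exactly one out-edge, so following out-edges from any node terminates at $j$; adding the edge $(j,i)$ produces a functional graph whose only cycle must use the new edge (any other cycle would lie inside the acyclic $T$), namely the cycle closing the unique $i\to\cdots\to j$ path, so it passes through $i$. Deleting from $U\in\mathcal{U}_i$ the unique cycle edge entering $i$ (respectively, leaving $i$) inverts the two maps, giving $\sum_j \Sigma_j q_{j,i}=\sum_{U\in\mathcal{U}_i}\prod_{e\in U}q_e=\sum_j \Sigma_i q_{i,j}$ as a polynomial identity, and row-stochasticity collapses the right side to $\Sigma_i$. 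Ergodicity then gives both the positivity of each $\Sigma_i$ (strong connectivity lets you grow a spanning in-tree to any root with positive-weight edges) and the one-dimensionality of the left fixed space, so normalization yields $p_i=\Sigma_i/\Sigma$. The self-loop term $j=i$ is handled correctly on both sides. This is a complete and correct proof of the theorem the paper imports.
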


We now use the Markov chain tree theorem to prove Lemma~\ref{lem:perturbation}.
\begin{proof}[Proof of Lemma~\ref{lem:perturbation}]
Note that the lemma is trivial when $\sum_{i = 1}^{n}\eta_i >  {1}/{4}$ so 
we assume without loss of generality that $\sum_{i = 1}^{n}\eta_i \leq 1/4$.
For any $i \in [n]$, we have
\begin{align}
\Sigma_i &= \sum_{T\in \mT_i}\prod_{(a , b ) \in T}q_{a , b} \leq \sum_{T\in \mT_i}\prod_{(a, b) \in T}(1 + \eta_{a })\tilde{q}_{a, b } \notag\\[-1.5ex]
&\leq \prod_{j\in [n]}(1 + \eta_j)\sum_{T\in \mT_i}\prod_{(a , b ) \in T} {q}'_{a ,b} = \prod_{j\in [n]}(1 + \eta_j)\cdot  {\Sigma}'_i \leq \left(1 + 2\sum_{j \in [n]}\eta_j\right) {\Sigma}'_i. \label{eq:correlated3}
\end{align}
The third step holds because for any tree $T \in \mathcal{T}_i$, each node, other than node $i$,
  appears exactly once as $a$ when calculating the weight of $T$. 
The last step follows from the fact that when $\sum_{i=1}^{n}\eta_i\leq 1/4$,
\[
\prod_{j\in [n]} (1 + \eta_j) \leq \prod_{j\in [n]}e^{\eta_j} = e^{\sum_{j \in [n]}\eta_j} \leq 1 + 2\sum_{j \in [n]}\eta_j.
\]
Similarly, we have
\begin{align}
\Sigma_i &\geq \sum_{T\in \mT_i}\prod_{(a , b ) \in T}(1 - \eta_{a})\tilde{q}_{a, b }  
\ge \prod_{j\in [n]}(1 - \eta_j)\cdot  {\Sigma}'_i 
\geq \left(1 - 2\sum_{j \in [n]}\eta_j\right) {\Sigma}'_i. \label{eq:correlated4}
\end{align}
The last inequality holds since, for $\sum_{j=1}^{n}\eta_j \leq 1/2$, we have
\begin{align*}
\prod_{j\in [n]} (1 - \eta_j) \geq \prod_{j\in [n]}e^{-2\eta_j} = \exp\left(-2\sum_{j\in [n]}\eta_j\right) \geq 1 - 2\sum_{j \in [n]}\eta_j.
\end{align*}
Since $\Sigma = \sum_{i }\Sigma_{i}$, we have
$\left(1 - 2\sum_{i}\eta_i\right)\tilde{\Sigma} \leq \Sigma \leq \left(1 + 2\sum_{i}\eta_i\right)\tilde{\Sigma}
$.
Applying Theorem
 \ref{lem:Markov-chain-tree}, 
\begin{align*}
\|p -  {p}'\|_1 
&= \sum_{i\in [n]}|p_i - {p}'_i| 
=\sum_{i\in [n]} \Big| \Sigma_i \big/\Sigma -  {\Sigma_i}'\big/ { {\Sigma}'} \Big| 
\le  \sum_{i\in [n]} \Big| \Sigma_i \big/\Sigma - \Sigma_i\big/ { {\Sigma}'} \Big|  
  + \sum_{i\in [n]}\Big| \Sigma_i \big/ {\Sigma'} -  {\Sigma'_i}\big/ { {\Sigma}'} \Big| \\
&\leq \sum_{i\in [n]} \frac{2\sum_{i=1}^{n}\eta_i}{1 - 2\sum_{i=1}^{n}\eta_i}\Big| \Sigma_i /\Sigma \Big| + \sum_{i\in [n]}2 \sum_{j\in [n]} \eta_j\cdot \Big|  {\Sigma}'_i / {\Sigma'} \Big| \leq 6\sum_{i\in [n]}\eta_i.
\end{align*}
This finishes the proof of the lemma.
\end{proof}

Finally we prove Lemma \ref{main-technical-lemma}: 
\begin{proof}[Proof of Lemma \ref{main-technical-lemma}] 
We start with the following claim, which states that entries of $Q^{t}$ and $Q^{t-1}$ only differs 
  by a small  multiplicative factor. 
\begin{claim}
	\label{lem:optimistic-perturbation}
Suppose that the learning rate $\eta \leq 1/6$ and let $x^0=\mathbf{1}/n=x^1$.
Then for any $t\ge 2$,
  $Q^t$ is a $(\eta_1,\ldots,\eta_n)$-approximate to $Q^{t-1}$, where
$
\eta_j=2\eta x^{t-2}(j) + 4\eta x^{t-1}(j)
$ for each $j\in [n]$.
\end{claim}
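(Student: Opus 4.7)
The plan is to reduce the claim to a one-step multiplicative calculation on each expert $\ALG_j$ separately. Since $\ALG_j$ is a copy of optimistic Hedge receiving the scaled loss vector $x^{\tau}(j)\ell^{\tau}$ in round $\tau$, applying the update rule \eqref{eq:opt-hedge-update} at time $t-1 \to t$ gives
\[
q^t_j(i) \;=\; \frac{q^{t-1}_j(i)\cdot \exp\bigl(-\eta(2 x^{t-1}(j)\ell^{t-1}(i) - x^{t-2}(j)\ell^{t-2}(i))\bigr)}{Z_j},
\]
where $Z_j$ is the normalizer. The base case $t=2$ is handled using $\ell^0=\mathbf{0}$ and $x^0=x^1=\mathbf{1}/n$.

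Next I would bound the exponent. Because $\ell^{t-1}, \ell^{t-2} \in [0,1]^n$ entrywise, the quantity $-\eta(2 x^{t-1}(j)\ell^{t-1}(i) - x^{t-2}(j)\ell^{t-2}(i))$ lies in $[-2\eta x^{t-1}(j),\; \eta x^{t-2}(j)]$, and the same interval bounds the logarithms of every summand appearing in the convex combination $Z_j$. Taking the worst-case ratio,
\[
\frac{q^t_j(i)}{q^{t-1}_j(i)} \in \bigl[\exp(-(2\eta x^{t-1}(j)+\eta x^{t-2}(j))),\; \exp(2\eta x^{t-1}(j)+\eta x^{t-2}(j))\bigr].
\]

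The final step converts these exponential bounds into multiplicative form. Since $\eta \leq 1/6$ gives $2\eta x^{t-1}(j)+\eta x^{t-2}(j) \leq 3\eta \leq 1/2$, the elementary inequalities $\exp(u) \leq 1+2u$ on $[0,1/2]$ and $\exp(-u) \geq 1-u$ on $[0,\infty)$ yield $(1-\eta_j)\,q^{t-1}_j(i) \leq q^t_j(i) \leq (1+\eta_j)\,q^{t-1}_j(i)$ with $\eta_j = 2\eta x^{t-2}(j)+4\eta x^{t-1}(j)$, exactly the $(\eta_1,\ldots,\eta_n)$-approximation required. I expect no substantive obstacle here; the only delicate point is bookkeeping, namely ensuring the scaling factor $x^{t-1}(j)$ is paired with $\ell^{t-1}$ (not $\ell^{t-2}$) and that the asymmetric factor of $2$ coming from $\exp(u) \leq 1+2u$ produces coefficients $2$ and $4$ in the correct positions of $\eta_j$, rather than the symmetric pair one might first guess.
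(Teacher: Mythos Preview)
Your proposal is correct and matches the paper's own proof essentially step for step: write out the optimistic-Hedge update for $\ALG_j$ with scaled loss $x^{\tau}(j)\ell^{\tau}$, bound the exponent using $\ell^{t-1},\ell^{t-2}\in[0,1]^n$ to get $q^t_j(i)/q^{t-1}_j(i)\in[\exp(-u),\exp(u)]$ with $u=2\eta x^{t-1}(j)+\eta x^{t-2}(j)$, and convert via $\exp(u)\le 1+2u$ (valid since $u\le 3\eta\le 1/2$) and $\exp(-u)\ge 1-u$. The bookkeeping concern you flag---that the factor of $2$ from $\exp(u)\le 1+2u$ is what turns the coefficients $(1,2)$ into $(2,4)$---is exactly right and is the only place one could slip.
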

Combing Claim \ref{lem:optimistic-perturbation} and Lemma~\ref{lem:perturbation}, we have
	\begin{align*}
	\|x^t - x^{t-1}\|_1 \leq 8\sum_{j\in [n]}\eta_j=8\sum_{j\in [n]}\left( 2x^{t-2}(j)+4x^{t-1}(j)\right)\eta = 48\eta.
	\end{align*}
This finishes the proof of Lemma \ref{main-technical-lemma}.
\end{proof}


\begin{proof}[Proof of Claim \ref{lem:optimistic-perturbation}]
Let $x^0=\mathbf{1}/n=x^1$. 
	By the updating rule of optimisitic Hedge,  we have for any $t\ge 2$, $i,j\in [n]$ that
	\begin{align*}
	q^{t}_{j}(i) 
	= &~ \frac{ \exp(-\eta (2x^{t-1}(j)\ell^{t-1}(i) - x^{t-2}(j)\ell^{t-2}(i))) \cdot q^{t-1}_{j}(i) } { \sum_{k\in [n]} \exp(-\eta (2x^{t-1}(j)\ell^{t-1}(k) - x^{t-2}(j)\ell^{t-2}(k))) \cdot q^{t-1}_{j}(k) }\\
	\leq &~ \frac{ \exp(\eta x^{t-2}(j) )\cdot  q^{t-1}_{j}(i) } { \sum_{k\in [n]} \exp(-2\eta x^{t-1}(j) ) \cdot q^{t-1}_{j}(k) }\\[0.7ex]
	= &~ \exp\big(\eta x^{t-2}(j) + 2\eta x^{t-1}(j) \big)\cdot  q^{t-1}_{j}(i)\\[0.6ex]
	\leq &~ (1 + 2\eta x^{t-2}(j) + 4\eta x^{t-1}(j) )\cdot q^{t - 1}_{j}(i).
	\end{align*}
	The second step follows from $\ell^{t} \in [0, 1]^{n}$ and the last step follows from $\exp(a) \leq 1 + 2a$ for $a \leq 1/{2}$. The other side holds similarly:
	\begin{align*}
	q^{t}_{j}(i) 
	= &~ \frac{ \exp(-\eta (2x^{t-1}(j)\ell^{t-1}(i) - x^{t-2}(j)\ell^{t-2}(i)))\cdot q^{t-1}_{j}(i) } { \sum_{k\in [n]} \exp(-\eta (2x^{t-1}(j)\ell^{t-1}(k) - x^{t-2}(j)\ell^{t-2}(k)))\cdot q^{t-1}_{j}(k) }\\
	\geq &~ \frac{ \exp(-2\eta x^{t-1}(j) )\cdot  q^{t-1}_{j}(i) } { \sum_{k\in [n]} \exp(\eta x^{t-2}(j) ) \cdot q^{t-1}_{j}(k) }\\[0.7ex]
	= &~ \exp\big(- \eta x^{t-2}(j) - 2\eta x^{t-1}(j) \big)\cdot  q^{t-1}_{j}(i)\\[0.6ex]
	\geq &~ (1 - \eta x^{t-2}(j) - 2\eta x^{t-1}(j) )\cdot  q^{t - 1}_{j}(i).
	\end{align*}
	Thus completing the proof.
\end{proof}	
\subsection{Proof of Corollary \ref{cor:robus-adv}}\label{corollary54}

 The algorithm works as follow. We set $$\eta = \frac{(n\log n)^{1/4}} {m^{1/2}T^{1/4}}$$ and $B_r = 1$ at initialization, for any player $i \in [m]$ and $\tau = 1, \ldots, T$
\begin{enumerate}
	\item Play $x^{t}_{i}$ according to \BMHedge,  and receive $\ell_{i}^{t}$.
	\item If $\sum_{t=2}^{\tau} \|\ell_{i}^{t} - \ell_{i}^{t-1}\|_{\infty}^{2} + \sum_{t=2}^{\tau}\|x_{i}^{t} - x_{i}^{t-1}\|_{1}^{2}  \geq B_r$.
	\begin{enumerate}
		\item Update $B_{r+1} = 2B_{r}$, $r \leftarrow r + 1$, $\eta_{r} = \min\left\{\sqrt{\frac{n\log n}{B_r}}, \eta\right\}$.
		\item Start a new run of \BMHedge\ with learning rate $\eta_{r}$.
	\end{enumerate}
\end{enumerate}
 
For any round $r$, we use $T_r$ to denote its final iteration and \[
I_r = \sum_{t=T_{r-1} + 1}^{T_r}\|x^{t}_i - x^{t-1}_{i}\|_1^2 + \sum_{t=T_{r-1} + 1}^{T_r}\|\ell^{t}_i - \ell^{t-1}_{i}\|_{\infty}^2.
\] 
Then we have
\begin{align*}
\sregret_{T_{r-1} + 1: T_r} \leq &~ \frac{2n\log n}{\eta_r} + 2\eta_r \left(\sum_{t=T_{r-1} + 1}^{T_r}\|x^{t}_i - x^{t-1}_{i}\|_1^2 + \sum_{t=T_{r-1} + 1}^{T_r}\|\ell^{t}_i - \ell^{t-1}_{i}\|_{\infty}^2  \right)\\
\leq & ~ 2(n\log n)^{3/4} \cdot T^{1/4}m^{1/2} + 2\sqrt{n\log nB_r} + 2\eta_r \cdot I_r \\
\leq & ~ 2(n\log n)^{3/4} \cdot T^{1/4}m^{1/2}  + 2\sqrt{n\log nB_r}   + 2\sqrt{2n\log n I_r}\\
\leq & ~ 2(n\log n)^{3/4} \cdot T^{1/4}m^{1/2}  + 4\sqrt{2n\log n I_r}\\
\leq & ~ 2(n\log n)^{3/4} \cdot T^{1/4}m^{1/2}  + 4\sqrt{2n\log n} \cdot\sqrt{ \left(\sum_{t=2}^{T}\|x^{t}_i - x^{t-1}_{i}\|_1^2 + \sum_{t=2}^{T}\|\ell^{t}_i - \ell^{t-1}_{i}\|_{\infty}^2  \right)}
\end{align*}
The first step follows from Lemma~\ref{lem:correlated-equilibrium1}, the second step follows from the definition of $I_r$ and the fact 
\[
\frac{1}{\eta_r} \leq \frac{1}{\eta} + \sqrt{\frac{B_r}{n\log n}} =  \frac{m^{1/2}T^{1/4}}{(n\log n)^{1/4}} + \sqrt{\frac{B_r}{n\log n}} 
\]
The third step follows from 
$
\eta_r \leq \sqrt{\frac{n\log n}{B_r}} \leq \sqrt{\frac{n\log n}{I_r/2}},
$
and the last step comes from $\sqrt{B_r} \leq \sqrt{2 I_r}$.

Since the number of round is at most $O(\log T)$, we have
\[
\sregret_{T} \leq \log T \left(2(n\log n)^{3/4} T^{1/4}m^{1/2}  + 4\sqrt{2n\log n} \cdot \sqrt{2 \left(\sum_{t=1}^{T}\|x^{t}_i - x^{t-1}_{i}\|_1^2 + \sum_{t=1}^{T}\|\ell^{t}_i - \ell^{t-1}_{i}\|_{\infty}^2  \right)}\right)
\]

If all players adopt the algorithm, then we know their learning rate is no greater than $\eta = \frac{(n\log n)^{1/4}} {m^{1/2}T^{1/4}}$, thus we know $\|x^{t}_i - x^{t-1}_i\|_1 \leq O(\eta) =   O\left(\frac{(n\log n)^{1/4}} {m^{1/2}T^{1/4}} \right)$ (see Lemma~\ref{main-technical-lemma}) and $\|\ell^{t}_i - \ell^{t-1}_i\|_{\infty} \leq \sum_{j \neq i}\|x^{t}_{j} - x^{t-1}_j\|_1 \leq m\cdot O(\eta) = O\left(\frac{m^{1/2}(n\log n)^{1/4}} {T^{1/4}}\right)$. Thus the swap regret is at most 
\[
O\left((n\log n)^{3/4}m^{1/2}T^{1/4}\log T\right).
\]

If the player is facing an adversary, then $\|x^{t}_i - x^{t-1}_i\|_1 \leq 2$ and $\|\ell^{t}_i - \ell^{t-1}_i\|_{\infty} \leq 1$, thus we conclude its regret is at most \[
O\left(\sqrt{n\log nT}\log T + (n\log n)^{3/4}m^{1/2}T^{1/4}\log T\right ).
\]

\section{Another no swap regret algorithm}
\label{sec:extension-phi}

We prove the optimistic variant of a folklore algorithm, originally appeared in~\cite{cesa2006prediction}, could also achieve fast convergence of swap regret. Our perturbation analysis again plays a key role in the regret analysis.

Define $\Phi$ to be all swap functions that map $[n]$ to $[n]$. We have $|\Phi| = n^{n}$.  
For any $\phi \in \Phi$, define the swap matrice $S^{\phi}$ as: $S^{\phi}_{i, j} = 1$ if $\phi(i) = j$ and $S^{\phi}_{i, j} = 0$ otherwise. It is easy to see that $S^{\phi}$ contains exactly one $1$ each row. 

~\cite{cesa2006prediction} treats each swap matrice $S^{\phi}$ as an expert, and run Hedge algorithm on all $n^{n}$ swap matrices. 
At time $t$, the output strategy $p^{t}$ is determined by these experts via solving a fix point problem\footnote{The algorithm is not efficient in general. However, we can turn it into an effiecient one by considering only $n^2$ swap matrices that are equal to indentical mapping {\em except} for one coordinate. The regret bound will only blow up by a $\sqrt{n}$ factor.}.
The optimisitic variant of~\cite{cesa2006prediction} is shown in Algorithm~\ref{algo:optimistic-meta-exp}. We first analysis the regret, 

\begin{algorithm}[!h]
	\caption{}
	\label{algo:optimistic-meta-exp}
	\begin{algorithmic}[1]
		\For{$t = 1, 2, \ldots,$}
		\State Play $p^t$ and receive the loss vector $l^t$.
		\State Update 
		\begin{align*}
		q^{t+1}(\phi) = \frac{x^{t}(\phi)\exp(-\eta (2x^{t} S^{\phi} \ell^{t}  - x^{t-1} S^{\phi} \ell^{t-1}) )}{\sum_{\phi\in \Phi}x^{t}(\phi)\exp(-\eta (2x^{t} S^{\phi} \ell^{t} - x^{t-1} S^{\phi} \ell^{t-1}) )} \quad \forall \phi\in \Phi
		\end{align*}
		\State Compute $x^{t+1} = x^{t+1}Q^{(t+1)}$, where 
		\begin{align*}
		Q^{(t + 1)} = \sum_{\phi \in \Phi}q^{t + 1}(\phi)S^{\phi}.
		\end{align*}
		\EndFor
	\end{algorithmic}
\end{algorithm}

\begin{lemma}
	\label{lem:meta-optimistic-regret}
	Algorithm~\ref{algo:optimistic-meta-exp} achieves regret
	\begin{align*}
	\sregret_{T} \leq \frac{n\log n}{\eta} + 2\eta \sum_{t=2}^{T}\|x^t - x^{t-1}\|_1^2 + 2\eta \sum_{t=2}^{T}\|\ell^t - \ell^{t-1}\|_{\infty}^{2}.
	\end{align*}
\end{lemma}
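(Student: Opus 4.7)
The plan is to view Algorithm~\ref{algo:optimistic-meta-exp} as an instance of optimistic Hedge run on the meta-expert set $\Phi$ of size $n^n$, apply Lemma~\ref{lem:optimisticRegret} to this meta-instance, and then translate the resulting bound back into the quantities appearing in the statement.

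For each swap matrix $S^\phi$, define the ``meta-loss'' at round $t$ as $m^t(\phi) := x^t S^\phi \ell^t \in [0,1]$; observe that this is exactly the loss vector on which the update rule for $q^{t+1}$ is performed in the algorithm. Thus $q^t$ is produced by optimistic Hedge (as in equation \eqref{eq:opt-hedge-update}) applied to the sequence $m^1,m^2,\ldots$ over $|\Phi|=n^n$ experts. The crucial identity is that the fixed-point condition $x^t=x^t Q^t=\sum_\phi q^t(\phi)\, x^t S^\phi$ yields
\[
\langle x^t,\ell^t\rangle \;=\; \sum_{\phi\in\Phi} q^t(\phi)\,(x^t S^\phi \ell^t) \;=\; \langle q^t, m^t\rangle.
\]
Combining this with the definition of swap regret and the fact that $\sum_t x^t S^\phi \ell^t=\sum_t m^t(\phi)$, the swap regret of Algorithm~\ref{algo:optimistic-meta-exp} equals the external regret of the meta-instance:
\[
\sregret_T \;=\; \sum_{t\in [T]}\langle q^t,m^t\rangle \;-\; \min_{\phi\in\Phi}\sum_{t\in[T]} m^t(\phi).
\]

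Applying Lemma~\ref{lem:optimisticRegret} to the meta-instance (noting $\log|\Phi|=n\log n$), we obtain
\[
\sregret_T \;\le\; \frac{2n\log n}{\eta} \;+\; \eta\sum_{t\in[T]}\|m^t-m^{t-1}\|_\infty^2,
\]
where we have dropped the negative $\|q^{t+1}-q^t\|_1^2$ term. It remains to bound $\|m^t-m^{t-1}\|_\infty$. For any $\phi$, using the telescoping decomposition
\[
m^t(\phi)-m^{t-1}(\phi) \;=\; (x^t-x^{t-1})S^\phi\,\ell^t \;+\; x^{t-1}S^\phi(\ell^t-\ell^{t-1}),
\]
together with the facts that $S^\phi$ is row-stochastic (so $yS^\phi$ preserves the $\ell_1$ norm for nonnegative $y$ and is a contraction in $\ell_1$ for general $y$) and $\|\ell^t\|_\infty\le 1$, we get
\[
|m^t(\phi)-m^{t-1}(\phi)| \;\le\; \|x^t-x^{t-1}\|_1 \;+\; \|\ell^t-\ell^{t-1}\|_\infty.
\]
Squaring and applying $(a+b)^2\le 2a^2+2b^2$ gives $\|m^t-m^{t-1}\|_\infty^2\le 2\|x^t-x^{t-1}\|_1^2+2\|\ell^t-\ell^{t-1}\|_\infty^2$, which when plugged back into the above yields the claimed bound (up to the stated constants).

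The proof is essentially routine once the reduction is set up; the only mildly subtle points are (i) noticing that the fixed-point equation turns the actual loss into the inner product $\langle q^t,m^t\rangle$ and thereby realizes the swap regret as the external regret of a $n^n$-expert instance, and (ii) that each swap matrix is row-stochastic, so the meta-loss differences can be bounded by $\ell_1$-movement of $x^t$ and $\ell_\infty$-movement of $\ell^t$ without any blow-up in the dimension.
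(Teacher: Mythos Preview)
Your proposal is correct and follows essentially the same route as the paper: interpret the update on $q^t$ as optimistic Hedge over the $n^n$ meta-experts $\Phi$, use the fixed-point relation $x^t=x^tQ^t$ to identify $\langle x^t,\ell^t\rangle=\langle q^t,m^t\rangle$ so that swap regret becomes the external regret of the meta-instance, apply Lemma~\ref{lem:optimisticRegret}, and then bound $\|m^t-m^{t-1}\|_\infty$ via the telescoping split and row-stochasticity of $S^\phi$. The only cosmetic discrepancy is the leading constant: a direct application of Lemma~\ref{lem:optimisticRegret} gives $2n\log n/\eta$ rather than $n\log n/\eta$, which you rightly flag with ``up to the stated constants.''
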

\begin{proof}
	According to the updating rule, for any $\phi \in \Phi$, we have
	\begin{align*}
	\sregret_{T} =&~ \sum_{t=2}^{T}\langle x^{t}, \ell^{t}\rangle - \max_{\phi\in \Phi}\sum_{t=2}^{T}x^{t}S^{\phi}\ell^t \\
	=&~ \sum_{t=2}^{T}\langle x^{t}Q^{(t)}, \ell^{t}\rangle - \max_{\phi\in \Phi}\sum_{t=2}^{T}x^{t}S^{\phi}\ell^t\\
	=&~ \sum_{t=2}^{T}\sum_{\phi \in \Phi}x^t(q^{t}(\phi) S^{\phi})\ell^t - \max_{\phi\in \Phi}\sum_{t=2}^{T}x^{t}S^{\phi}\ell^t \\
	=& \sum_{t=2}^{T}\sum_{\phi \in \Phi}q^{t}(\phi)\cdot  x^t S^{\phi} \ell^t - \max_{\phi\in \Phi}\sum_{t=2}^{T}x^{t}S^{\phi}\ell^t \\
	\leq &~ \frac{n\log n}{\eta} + \eta\sum_{t=2}^{T}\max_{\phi\in\Phi}\left|x^{t}S^{\phi}\ell^{t-1} - x^{t-1}S^{\phi}\ell^{t-1} \right\|^2\\
	\leq &~\frac{\log n}{\eta} + 2\eta \sum_{t=2}^{T}\|x^t - x^{t-1}\|_1^2 + 2\eta \sum_{t=2}^{T}\|\ell^t - \ell^{t-1}\|_{\infty}^{2}.
	\end{align*}
	The fifth step follows the regret bound of optimistic Hedge and the last step follows from the fact that for any $\phi \in \Phi$,
	\begin{align*}
	\left|x^{t}S^{\phi}\ell^t - x^{t}S^{\phi}\ell^t \right|^2 
	= &~ \left|x^t S^{\phi}\ell^t - x^{t-1}S^{\phi}\ell^t + x^{t-1}A_{\phi}\ell^t - x^{t-1}S^{\phi}\ell^{t-1} \right|^2\\
	\leq &~2\left|x^{t}S^{\phi}\ell^t - x^{t-1}S^{\phi}\ell^t|^2 + 2|x^{t-1}S^{\phi}\ell^t - x^{t-1}S^{\phi}\ell^{t-1} \right|^2\\
	=&~2\langle x^t - x^{t-1} , S^{\phi}\ell^{t}\rangle + 2\langle x^{t-1}S^{\phi}, \ell^{t}-l^{t-1} \rangle\\
	\leq &~2\|x^t - x^{t-1} \|_1^2 \|S^{\phi}\ell^{t}\|_{\infty}^{2} + 2\|x^{t-1}S^{\phi}\|_{1}\|\ell^{t} - \ell^{t-1}\|_{\infty}^2\\
	\leq & 2\|x^t - x^{t-1}\|_1^2 + 2\|\ell^t - \ell^{t-1}\|_{\infty}^{2}. 
	\end{align*}
	Thus completing the proof.
\end{proof}
It remains to show that the environment is stable. Again, since $x^{t}$ is the stationary distribution of $Q^{(t)}$, we only need some perturbation analysis on $Q^{(t)}$. In particular, we have
\begin{lemma}
	\label{lem:perturb-meta}
	For any $t$, $Q^{(t)}$ is $(6\eta, \ldots, 6\eta)$ approximate to $Q^{(t + 1)}$.
\end{lemma}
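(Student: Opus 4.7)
The plan is to show that each expert weight $q^{t+1}(\phi)$ differs from $q^{t}(\phi)$ only by a small multiplicative factor, and then to transfer this bound to $Q^{(t+1)}_{i,j} = \sum_{\phi:\phi(i)=j} q^{t+1}(\phi)$ simply by the fact that a nonnegative combination of numbers each within a multiplicative factor is itself within the same multiplicative factor. This mirrors the strategy used in Claim~\ref{lem:optimistic-perturbation} for \BMHedge.

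First I would bound the exponent. Write $a_\phi^t = 2\, x^t S^\phi \ell^t - x^{t-1} S^\phi \ell^{t-1}$. Because $S^\phi$ has exactly one $1$ per row, $(S^\phi \ell^s)_i = \ell^s(\phi(i)) \in [0,1]$, and since $x^s$ is a probability distribution,
\[
x^s S^\phi \ell^s = \sum_{i\in[n]} x^s(i)\, \ell^s(\phi(i)) \in [0,1].
\]
Hence $a_\phi^t \in [-1, 2]$, so $\exp(-\eta a_\phi^t) \in [e^{-2\eta}, e^{\eta}]$. The normalizer $Z_t = \sum_{\phi} q^t(\phi)\exp(-\eta a_\phi^t)$ is a convex combination of such factors, and is therefore also in $[e^{-2\eta}, e^{\eta}]$. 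Consequently, from the update rule,
\[
\frac{q^{t+1}(\phi)}{q^{t}(\phi)} \;=\; \frac{\exp(-\eta a_\phi^t)}{Z_t} \;\in\; [\,e^{-3\eta},\, e^{3\eta}\,].
\]

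Second I would transfer this bound to the entries of $Q^{(t)}$. Summing the two-sided inequality $e^{-3\eta} q^t(\phi) \le q^{t+1}(\phi) \le e^{3\eta} q^t(\phi)$ over all $\phi\in\Phi$ with $\phi(i)=j$ (all quantities are nonnegative) yields
\[
e^{-3\eta}\, Q^{(t)}_{i,j} \;\le\; Q^{(t+1)}_{i,j} \;\le\; e^{3\eta}\, Q^{(t)}_{i,j}
\quad\text{for every } i,j\in[n].
\]
Finally, for $\eta \le 1/6$ the elementary inequalities $e^{3\eta} \le 1+6\eta$ and $e^{-3\eta} \ge 1/(1+6\eta)$ imply that $Q^{(t)}$ is $(6\eta,\ldots,6\eta)$-approximate to $Q^{(t+1)}$ in the sense of Definition~\ref{def:markov-multiplicative}. (For $\eta > 1/6$ the claim can be stated trivially by replacing the constant, or absorbed into the regime of the main theorem.)

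I do not anticipate a real obstacle in this proof: the only conceptual point is the observation that $x^s S^\phi \ell^s \in [0,1]$ regardless of $\phi$, which keeps the Hedge exponent bounded on an interval of length $3\eta$; everything else is the standard ``Hedge update preserves multiplicative proximity'' calculation, combined with the fact that entries of $Q^{(t)}$ are nonnegative sums of expert weights and thus inherit the same multiplicative bound.
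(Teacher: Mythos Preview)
Your proposal is correct and follows essentially the same route as the paper: bound the optimistic-Hedge exponent by observing $x^s S^\phi \ell^s\in[0,1]$, deduce $q^{t+1}(\phi)/q^t(\phi)\in[e^{-3\eta},e^{3\eta}]\subseteq[1-6\eta,1+6\eta]$, and then sum over $\{\phi:\phi(i)=j\}$ to transfer the multiplicative bound to $Q^{(t)}_{i,j}$. The only cosmetic difference is that the paper converts to $1\pm 6\eta$ immediately rather than via $e^{\pm 3\eta}$, and for the lower side it is slightly more direct to use $e^{-3\eta}\ge 1-3\eta\ge 1-6\eta$ than your $e^{-3\eta}\ge 1/(1+6\eta)$ (though both suffice).
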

\begin{proof}
	For any $\phi$, we have
	\begin{align*}
		q^{t+1}(\phi) 
		=&~ \frac{q^{t}(\phi)\exp(-\eta (2x^{t} A_{\phi} \ell^{t}  - x^{t-1} A_{\phi} \ell^{t-1}) )}{\sum_{\phi\in \Phi}q^{t}(\phi)\exp(-\eta (2x^{t} A_{\phi} \ell^{t} - x^{t-1} A_{\phi} \ell^{t-1}) )}\\
		\leq &~ \frac{q^{t}(\phi) \exp(\eta)}{\sum_{\phi\in \Phi} q^{t}(\phi) \exp(-2\eta)}\\
		\leq &~ (1 + 6 \eta)q^{t}(\phi)
	\end{align*}
	Similarly, we have
	\begin{align*}
		q^{t+1}(\phi) 
		=&~ \frac{q^{t}(\phi)\exp(-\eta (2x^{t} A_{\phi} \ell^{t}  - x^{t-1} A_{\phi} \ell^{t-1}) )}{\sum_{\phi\in \Phi}q^{t}(\phi)\exp(-\eta (2x^{t} A_{\phi} \ell^{t} - x^{t-1} A_{\phi} \ell^{t-1}) )}\\
		\geq &~ \frac{q^{t}(\phi) \exp(-2\eta)}{\sum_{\phi\in \Phi} q^{t}(\phi) \exp(\eta)}\\
		\geq &~ (1 - 6 \eta)q^{t}(\phi)
	\end{align*}
	Thus, for any $i, j\in [n]$, we have
	\begin{align*}
	Q_{i, j}^{(t+1)} = \sum_{\phi \in \Phi}q^{t+1}(\phi)S^{\phi}_{i, j} \leq (1 + 6\eta ) \sum_{\phi \in \Phi}q^{t}(\phi)S^{\phi}_{i, j} = (1 + 6\eta) Q_{i, j}^{(t)}
	\end{align*}
	and 
	\begin{align*}
		Q_{i, j}^{(t+1)} = \sum_{\phi \in \Phi}q^{t+1}(\phi)S^{\phi}_{i, j} \geq (1 - 6\eta ) \sum_{\phi \in \Phi}q^{t}(\phi)S^{\phi}_{i, j} \geq  (1 - 6\eta) Q_{i, j}^{(t)}
	\end{align*}
	Thus we conclude $Q^{(t)}$ is $(6\eta, \ldots, 6\eta)$ approximate to $Q^{(t + 1)}$.
\end{proof}
Combining the above results, we have
\begin{theorem}
	\label{thm:correlated-equilibrium-ext}
	Suppose every player uses Algorithm~\ref{algo:optimistic-meta-exp} and choose $\eta = O\left((\frac{\log n}{nm^2T})^{1/4}\right)$, then each individual's swap regret is at most $O\left( m^{1/2}n^{5/4}(\log n)^{3/4}T^{1/4} \right)$.
\end{theorem}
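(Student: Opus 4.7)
The plan is to mirror the strategy used for Theorem~\ref{thm:correlated-equilibrium}: combine the RVU-style swap regret bound in Lemma~\ref{lem:meta-optimistic-regret} with a stability estimate showing that the played strategy $x^t$ moves slowly in $\ell_1$-norm, and then optimize the learning rate. The only new ingredient needed is the slow-movement bound for Algorithm~\ref{algo:optimistic-meta-exp}, which will follow from the multiplicative perturbation analysis on Markov chains that we already developed for \BMHedge.

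First I would establish the stability of $x^t$ under the adversarial setting. Lemma~\ref{lem:perturb-meta} shows that the transition matrices satisfy $Q^{(t+1)} \in (1\pm 6\eta)\,Q^{(t)}$ entrywise, i.e.\ $Q^{(t+1)}$ is $(6\eta,\ldots,6\eta)$-approximate to $Q^{(t)}$ in the sense of Definition~\ref{def:markov-multiplicative}. Since $x^t$ is by construction the stationary distribution of the ergodic Markov chain $Q^{(t)}$, Lemma~\ref{lem:perturbation} applied with $\eta_i = 6\eta$ for each $i\in[n]$ yields
\begin{align*}
\|x^t - x^{t-1}\|_1 \;\le\; 8\sum_{i\in[n]} 6\eta \;=\; 48\, n\eta,
\end{align*}
valid for all $t\ge 2$ as long as $\eta$ is small enough to apply Lemma~\ref{lem:perturbation} (which one can assume without loss of generality, since otherwise the regret bound is trivial). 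This is the analog of Lemma~\ref{main-technical-lemma} for Algorithm~\ref{algo:optimistic-meta-exp}, with a worse $n$-factor because the multiplicative perturbation parameter here is uniformly $6\eta$ rather than $O(\eta\, x^{t-1}(j))$.

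Next I would translate the above single-player bound into the repeated game setting exactly as in Section~\ref{sec:correlated}. Since strategies of the other players form a product distribution, for each player $i$ we have $\|\ell_i^t - \ell_i^{t-1}\|_\infty \le \sum_{j\ne i}\|x_j^t - x_j^{t-1}\|_1 \le O(mn\eta)$ whenever all players run Algorithm~\ref{algo:optimistic-meta-exp}. Substituting these two bounds into Lemma~\ref{lem:meta-optimistic-regret} gives, for each player $i$,
\begin{align*}
\sregret_T^i \;\le\; \frac{n\log n}{\eta} + 2\eta T \cdot O(n^2\eta^2) + 2\eta T \cdot O(m^2 n^2 \eta^2) \;=\; O\!\left(\frac{n\log n}{\eta} + m^2 n^2 T\, \eta^3\right).
\end{align*}
Balancing the two terms by setting $\eta = \Theta\bigl((\log n/(n m^2 T))^{1/4}\bigr)$ makes each of them $O(m^{1/2} n^{5/4}(\log n)^{3/4} T^{1/4})$, matching the claimed bound.

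The main obstacle is verifying that Lemma~\ref{lem:perturb-meta} can indeed be fed into Lemma~\ref{lem:perturbation} with the uniform parameter $6\eta$ at every coordinate; one has to observe that $Q^{(t)} = \sum_{\phi}q^t(\phi) S^\phi$ is a convex combination that inherits the multiplicative perturbation of $q^t(\phi)$ row by row, which is exactly what Lemma~\ref{lem:perturb-meta} provides. Once that structural observation is in place, the rest of the argument is a straightforward repetition of the $\BMHedge$ analysis, and the slightly worse $n^{5/4}$ dependence (vs.\ $n^{3/4}$ in Theorem~\ref{thm:correlated-equilibrium}) comes precisely from the fact that here the multiplicative perturbation parameter cannot be made to scale with $x^{t-1}(j)$.
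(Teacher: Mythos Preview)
Your proposal is correct and follows essentially the same route as the paper: it combines the swap-regret bound of Lemma~\ref{lem:meta-optimistic-regret} with the $\|x^t-x^{t-1}\|_1=O(n\eta)$ stability estimate obtained by feeding Lemma~\ref{lem:perturb-meta} into Lemma~\ref{lem:perturbation}, and then optimizes~$\eta$. Your explanation of why the dependence degrades from $n^{3/4}$ to $n^{5/4}$ (the uniform $6\eta$ perturbation versus the $O(\eta\,x^{t-1}(j))$ one in \BMHedge) is exactly the point.
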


\begin{proof}
	By Lemma~\ref{lem:meta-optimistic-regret}, for any palyer $i \in [m]$, we have
	\begin{align*}
	\sregret_{T} \leq &~ \frac{n\log n}{\eta} + 2\eta \sum_{t=2}^{T}\|x_i^t - x_i^{t-1}\|_1^2 + 2\eta \sum_{t=2}^{T}\|\ell_i^t - \ell_i^{t-1}\|_{\infty}^{2} \\
	\leq &  \frac{n\log n}{\eta} + 2\eta \sum_{t=2}^{T}\|x^t - x^{t-1}\|_1^2 + 2m\eta \sum_{t=2}^{T}\sum_{j\neq i}\|x^t_{j} - x^{t-1}_{j}\|_{1}^{2}
	\end{align*}
	where $w^{t}$ denotes the other player's strategy. Moreover, since $Q^{(t-1)}$ is $(6\eta, \ldots, 6\eta)$ approximates to $Q^{(t)}$, we know
	\begin{align*}
	\|x^{t}_{i} - x^{t-1}_{i}\|_1 \leq 8\cdot \sum_{i=1}^{n}6\eta = O(n\eta)
	\end{align*}
	holds for any $i$. Thus we have
	\begin{align*}
	\sregret_{T} \leq &~ \frac{n\log n}{\eta} +  2\eta \sum_{t=2}^{T}\|x^t - x^{t-1}\|_1^2 + 2m\eta \sum_{t=2}^{T}\sum_{j\neq i}\|x^t_{j} - x^{t-1}_{j}\|_{1}^{2}\\
	\leq &~ \frac{n\log n}{\eta} + O(\eta^3n^2m^2T).
	\end{align*}
	Choosing $\eta = O\left((\frac{\log n}{nm^2T})^{1/4}\right)$, the regret is 
	\[
	\sregret_{T} = O\left( n^{5/4}(\log n)^{3/4}T^{1/4}m^{1/2} \right).
	\]
\end{proof}

\section{Price of anarchy}
\label{sec:price}

In this section, we show that a large class of no swap regret algorithm satisfies the {\em low approximate regret} property (see Definition~\ref{def:low-approximate}). 
Thus when all players adopt such algorithm, they experience fast convergence to an approximately optimal social welfare in {\em smooth games} (see Definition~\ref{def:smooth-game}).
In particular, we show that the average social welfare converges to an approximately optimal welfare at rate $O(1/T)$. 
The proof in this section is straightforward, our aim is to point out that such fast convergence rate generally holds for no-swap regret algorithms.
We first introduce the smooth game. Recall $\mathcal{L}(\bx) = \sum_{i\in [m]}\mathcal{L}_i(\bx)$ is the summation of each individual's loss under strategy profile $\bx$.
\begin{definition}[Smooth game]
	\label{def:smooth-game}
	A cost minimization game is $(\lambda, \mu)$-smooth if for all strategy profiles $\bx$ and $\bx^{\star}$, $\sum_{i}\mathcal{L}_i (x_i^{\star}, x_{-i}) \leq \lambda\cdot \mathcal{L}(\bx^{\star}) + \mu \cdot\mathcal{L}(\bx)$.
\end{definition}
A wide range of games belongs to smooth game, including routing games, auctions, etc. We refer interested reader to \cite{roughgarden2015intrinsic} for detailed coverage.

We next introduce the definition of low approximate regret.
\begin{definition}[Low approximate regret~\cite{foster2016learning}]
	\label{def:low-approximate}
	A learning algorithm satisfies the low approximate regret property for given parameters $(\eps, A(n))$ , if 
	\begin{align*}
	(1 - \eps)\sum_{t=1}^{T}\langle x^t, \ell^t \rangle \leq \min_{i} L(i) + \frac{A(n)}{\eps}.
	\end{align*}
\end{definition}

\begin{lemma}
	\label{lem:low-approximate}
	The BM reduction transfers the low approximate regret property. In particular, if we reduce from a no external regret algorithm satisfying low approximate regret with$(\eps, A(n))$, then the no swap regret algorithm satisfies low approximate regret with $(\eps, nA(n))$.
\end{lemma}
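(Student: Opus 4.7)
The plan is to essentially repeat the Blum--Mansour analysis, but start from the low approximate regret inequality instead of the standard external regret inequality, so that the multiplicative factor $(1-\eps)$ gets carried through the argument unchanged and only the additive constant inflates by the factor $n$ that comes from summing $n$ copies of the base algorithm.

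Concretely, I would proceed as follows. First, recall that in the BM reduction the $j$th copy $\ALG_j$ is fed the loss sequence $\{x^t(j)\,\ell^t\}_{t=1}^T$ and outputs the distributions $\{q^t_j\}_{t=1}^T$. Since $\ALG_j$ satisfies the low approximate regret property with parameters $(\eps, A(n))$ on its own loss sequence, we have for every $j \in [n]$
\begin{align*}
(1-\eps)\sum_{t=1}^T \langle q_j^t,\, x^t(j)\,\ell^t\rangle
\;\leq\; \min_{i\in[n]} \sum_{t=1}^T x^t(j)\,\ell^t(i) + \frac{A(n)}{\eps}.
\end{align*}
Summing this inequality over $j \in [n]$ gives
\begin{align*}
(1-\eps)\sum_{t=1}^T \sum_{j\in[n]} x^t(j)\,\langle q_j^t,\,\ell^t\rangle
\;\leq\; \sum_{j\in[n]}\min_{i\in[n]} \sum_{t=1}^T x^t(j)\,\ell^t(i) + \frac{n\,A(n)}{\eps}.
\end{align*}

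Next I would simplify each side. For the left side, by construction of the BM rule we have $x^t = x^t Q^t$ where the $j$th row of $Q^t$ is $q_j^t$, so
\begin{align*}
\sum_{j\in[n]} x^t(j)\,\langle q_j^t,\,\ell^t\rangle
\;=\; \langle x^t Q^t,\,\ell^t\rangle
\;=\; \langle x^t,\,\ell^t\rangle.
\end{align*}
For the right side, since the minima over $i$ can be chosen independently for each $j$, picking the optimal swap function $\phi:[n]\to[n]$ coordinatewise yields
\begin{align*}
\sum_{j\in[n]}\min_{i\in[n]}\sum_{t=1}^T x^t(j)\,\ell^t(i)
\;=\; \min_{\phi:[n]\to[n]} \sum_{t=1}^T \sum_{j\in[n]} x^t(j)\,\ell^t(\phi(j)).
\end{align*}
Combining the two simplifications gives exactly the low approximate regret inequality for swap regret with parameters $(\eps,\,n\,A(n))$, which is what we needed.

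The only subtle step is verifying that the \emph{same} multiplicative factor $(1-\eps)$ is preserved, and that the additive cost grows only by a factor of $n$; both are immediate once we sum $n$ copies of the hypothesis and use the fixed-point identity $x^t = x^t Q^t$ together with the coordinatewise decomposition of the minimum over swap functions. There is no real obstacle — the argument is just the BM external-to-swap reduction rewritten in the $(1-\eps)$-multiplicative form, with no need for any additional stability or perturbation argument.
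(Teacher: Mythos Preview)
Your proposal is correct and follows essentially the same argument as the paper: apply the low approximate regret hypothesis to each of the $n$ copies, sum over $j$, and use the fixed-point identity $x^t=x^tQ^t$ on the left. The only cosmetic difference is that the paper upper-bounds each $\min_{i'}$ by a single fixed action $i$ before summing (establishing the external-regret form of the definition directly), whereas you keep the per-$j$ minima and combine them into a minimum over swap functions---this yields the slightly stronger low approximate \emph{swap} regret inequality, from which the stated external version follows by restricting $\phi$ to constant maps.
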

\begin{proof}
		 For any fixed $i$, using the low approximate regret property, we know
		\begin{align*}
		(1 - \eps)\sum_{t=1}^{T}\langle q^{t}_{j}, x^t(j)\ell_t\rangle \leq \min_{i'}\sum_{t=1}^{T}x^t(j)\ell^t(i') + \frac{A(n)}{\eps} \leq \sum_{t=1}^{T}x^t(j)\ell_t(i) + \frac{A(n)}{\eps}.
		\end{align*}
		Consequently, we have
		\begin{align*}
		(1 - \eps)\sum_{t=1}^{T}\langle x^{t},  \ell^{t}\rangle &= (1 - \eps)\sum_{t=1}^{T}\langle x^{t} Q^{(t)}, \ell^{t}\rangle\\
		&= (1 - \eps)\sum_{t=1}^{T}\sum_{j=1}^{n}\langle x^{t}(j)q^{t}_{j},\ell^t\rangle\\
		&= (1 - \eps)\sum_{j=1}^{n}\sum_{t=1}^{T}\langle q^{t}_{j},x^{t}(j) \ell^t\rangle\\
		&\leq \sum_{j=1}^{n}\left(\sum_{t=1}^{T} x^t(j) \ell^t(i) + \frac{A(n)}{\eps}\right)\\
		&=\sum_{t=1}^{T}\sum_{j=1}^{n}x^t(j)\ell^t(i) + \frac{nA(n)}{\eps}\\
		&=\sum_{t=1}^{T}\ell^t(i) + \frac{nA(n)}{\eps}.
		\end{align*}
		Thus concluding the proof.
\end{proof}

A direct corollary of Lemma~\ref{lem:low-approximate} and Theorem 3 in~\cite{foster2016learning} is  
\begin{theorem}
	\label{thm:price-of-anarchy}
	In a $(\lambda, \mu)$-smooth game, if all players use no swap regret algorithm generated from BM reduction and a no external regret algorithm  satisfying low approximate regret property with parameter $\eps$ and $A(n) = \log n$, then we have
	\begin{align*}
	\frac{1}{T}\sum_{t=1}^{T}\mathcal{L}(\bx_t) \leq \frac{\lambda}{1 - \mu - \eps}\cdot \OPT + \frac{m}{T}\cdot \frac{1}{1 - \mu - \eps} \cdot \frac{n\log n}{\eps}.
	\end{align*}
	where $\OPT$ denotes the optimal social welfare, i.e., $\min_{\bx}\mathcal{L}(\bx)$.
\end{theorem}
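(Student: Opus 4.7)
The plan is to combine the low approximate regret property (transferred through the BM reduction by Lemma~\ref{lem:low-approximate}) with the smoothness condition in a single accounting step. First, I would note that since each player runs the BM reduction on top of a no-external-regret base algorithm with low approximate regret parameters $(\eps, \log n)$, Lemma~\ref{lem:low-approximate} immediately gives that each player's full algorithm enjoys low approximate regret with parameters $(\eps, n\log n)$. In particular, comparing to any fixed mixed strategy $x_i^\star$ (the minimum over fixed distributions is attained at a vertex of the simplex, so comparing against mixed strategies is free), each player $i\in[m]$ satisfies
\begin{align*}
(1-\eps)\sum_{t=1}^T \mathcal{L}_i(\bx^t) \;\le\; \sum_{t=1}^T \mathcal{L}_i(x_i^\star, \bx_{-i}^t) + \frac{n\log n}{\eps}.
\end{align*}

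Next, I would take $\bx^\star$ to be a minimizer of the social cost $\mathcal{L}(\cdot)$, apply the inequality above with $x_i^\star$ the $i$-th component, and sum over $i\in [m]$:
\begin{align*}
(1-\eps)\sum_{t=1}^T \mathcal{L}(\bx^t) \;=\; (1-\eps)\sum_{t=1}^T \sum_{i\in[m]} \mathcal{L}_i(\bx^t) \;\le\; \sum_{t=1}^T \sum_{i\in[m]} \mathcal{L}_i(x_i^\star, \bx_{-i}^t) + \frac{m\, n\log n}{\eps}.
\end{align*}
Then I would invoke the $(\lambda,\mu)$-smoothness of the game round by round to bound the inner sum $\sum_{i}\mathcal{L}_i(x_i^\star,\bx_{-i}^t)\le \lambda\,\mathcal{L}(\bx^\star)+\mu\,\mathcal{L}(\bx^t)$, giving
\begin{align*}
(1-\eps)\sum_{t=1}^T \mathcal{L}(\bx^t) \;\le\; \lambda T \cdot \OPT + \mu \sum_{t=1}^T \mathcal{L}(\bx^t) + \frac{m\, n\log n}{\eps}.
\end{align*}

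Finally, I would move the $\mu \sum_t \mathcal{L}(\bx^t)$ term to the left side, yielding $(1-\eps-\mu)\sum_t \mathcal{L}(\bx^t)\le \lambda T\cdot \OPT + mn\log n/\eps$, and divide both sides by $T(1-\eps-\mu)$ to recover exactly the stated inequality. The whole argument is a textbook smoothness$+$low-approximate-regret composition in the style of Roughgarden--Syrgkanis; there is no real technical obstacle, since Lemma~\ref{lem:low-approximate} already does the work of pushing the low approximate regret property through the BM reduction (with only a factor $n$ blow-up in the constant $A(n)$), and the smoothness inequality is applied pointwise per round and then summed. The only care needed is to observe that the low approximate regret property, although usually phrased against a fixed pure action, transfers to any fixed mixed strategy via linearity of $\langle x_i,\ell_i^t\rangle$ in $x_i$, which is what allows us to plug in $x_i^\star$ from the optimum of $\mathcal{L}$.
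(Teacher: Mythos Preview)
Your proposal is correct and follows exactly the route the paper intends: the paper simply states that the theorem is a direct corollary of Lemma~\ref{lem:low-approximate} together with Theorem~3 of \cite{foster2016learning}, and what you have written is precisely the standard smoothness\,$+$\,low-approximate-regret argument underlying that cited theorem, specialized to the parameters $(\eps, n\log n)$ produced by Lemma~\ref{lem:low-approximate}. There is no difference in approach; you have merely unpacked the external citation.
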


\end{document}